\journalname{Communications in Mathematical Physics}
\newcommand{\cqfd}{\hfill $\square$}
\newcommand{\RR}{\mathbb{R}}
\newcommand{\NN}{\mathbb{N}}
\newcommand{\CC}{\mathbb{C}}
\newcommand{\cH}{\mathcal{H}}
\newcommand{\cF}{\mathcal{F}}
\newcommand{\cD}{\mathcal{D}}
\newcommand{\cK}{\mathcal{K}}
\newcommand{\cB}{\mathcal{B}}
\newcommand{\cE}{\mathcal{E}}
\newcommand{\cJ}{\mathcal{J}}
\newcommand{\cC}{\mathcal{C}}
\newcommand{\cU}{\mathcal{U}}
\newcommand{\cM}{\mathcal{M}}
\begin{document}

\title{{Asymptotics in Spin-Boson type models}}
\titlerunning{Asymptotics in Spin Boson type models}

\author{Thomas Norman Dam\inst{1} Jacob Schach M\o ller \inst{2}}
\institute{Aarhus Universitet, Nordre Ringgade 1, 8000 Aarhus C  Denmark.\\ \email{tnd@math.au.dk} \and
 Aarhus Universitet, Nordre Ringgade 1, 8000 Aarhus C  Denmark.\\ \email{jacob@math.au.dk}}
\authorrunning{Thomas Norman Dam, Jacob Schach M\o ller}

\date{}
\communicated{}

\maketitle
\begin{abstract}
In this paper, we investigate a family of models for a qubit interacting with a bosonic field. More precisely, we find asymptotic limits of the Hamiltonian as the strength of the interaction tends to infinity. The main result has two applications. First of all, we show that self-energy renormalisation schemes similar to that of the Nelson model will never give a physically interesting result. This is because any limit obtained through such a scheme would be independent of the qubit. Secondly, we find that exited states exist in the massive Spin-Boson models for sufficiently large interaction strengths. We are also able to compute the asymptotic limit of many physical quantities.
\end{abstract}

\section{Introduction}
In this paper, we consider a family of models for a qubit coupled to a bosonic field, which we will call spin-boson type models. These models has been investigated in many papers, so many properties are well known. Asymptotic completeness along with basic spectral properties were discussed in \cite{DerezinskiGerard} and \cite{Marcel}. Existence and regularity of ground states were discussed in \cite{Volker}, \cite{Gerard}, \cite{Hasler} and \cite{Hirokawa2}. Furthermore, properties at positive temperature were discussed in \cite{Merkli} and \cite{Moller}.

One of the main ingredients in the papers \cite{Volker}, \cite{Thomas1} and \cite{Hasler} is the so-called spin-parity symmetry. In the paper \cite{Thomas1}, this symmetry is used to decompose the Hamiltonian into two so-called fiber Hamiltonians, which are both perturbations of Van Hove Hamiltonians. This symmetry is also essential to the analysis conducted in this paper, and we will need the results from \cite{Thomas1}.

To avoid a full technical description in the introduction, we will specialise to the 3-dimensional Spin-Boson model. In this case the bosons have dispersion relation $\omega(k)=\sqrt{m^2+\lVert k \lVert^2}$ with $m\geq 0$ and $k\in \RR^3$. The interaction between the field and the qubit is parametrised by the functions
\begin{align*}
v_{g,\Lambda}(k)=g\frac{\chi_\Lambda(  \omega(k) )}{\sqrt{\omega(k)}}
\end{align*}
where $\{ \chi_\Lambda \}_{\Lambda\in (0,\infty)}$ is a family of ultraviolet functions such that $v_{g,\Lambda}\in \cD(\omega^{-1/2})$. We will assume that $\Lambda\mapsto \chi_\Lambda(k)$ is an increasing function and
\begin{align*}
\lim_{\Lambda\rightarrow \infty} \chi_\Lambda(k)=1
\end{align*}
for all $k\in \RR^3$. Let $2\eta>0$ be the size of the energy gap in the qubit and $H_{g,\Lambda,\eta}$ be the Hamiltonian of the full system. Then we show the following:
\begin{enumerate}
\item First we consider self-energy renormalisation schemes. In such schemes one defines $f_{g,\eta}(\Lambda)= \inf(\sigma(H_{g,\Lambda,\eta}))$ and proves that $\{ H_{g,\Lambda,\eta}-f_{g,\eta}(\Lambda)\}_{\Lambda\in (0,\infty)}$ converges in strong or norm resolvent sense to an operator $H^{\textup{Ren}}_{g,\eta}$ as $\Lambda$ tends to $\infty$. Using Corollary \ref{KorUVREN} and Lemma \ref{Expkonv} below we see:
\begin{equation*}
\lim_{\Lambda\rightarrow \infty} (f_{g,\eta}(\Lambda)+\lVert \omega^{-1/2}1_{\{\omega>1  \}}v_{g,\Lambda} \lVert^2)= \lVert \omega^{-1/2}1_{\{\omega<1  \}}v_{g,\Lambda} \lVert^2
\end{equation*}
 which is independent of $\eta$ and
\begin{equation*}
(H_{g,\Lambda,\eta}+\lVert \omega^{-1/2}1_{\{\omega>1  \}}v_{g,\Lambda} \lVert^2+i)^{-1}-(H_{g,\Lambda,0}+\lVert \omega^{-1/2}1_{\{\omega>1  \}}v_{g,\Lambda} \lVert^2+i)^{-1}
\end{equation*}
converges to 0 in norm as $\Lambda$ tends to $\infty$. From this we conclude that if a self-energy renormalisation scheme exists then $H^{\textup{Ren}}_{g,\eta}$ must be independent of $\eta$, which is not physically interesting. In other words, the contribution from the qubit disappears, as the ultraviolet cutoff is removed. This result is similar to the result in \cite{DirkPizzo}, where it is shown, that the mass-shell in a certain model becomes "almost flat" as the ultraviolet cutoff is removed. So the contribution from the matter particle vanishes as the ultraviolet cutoff is removed.

\item  If $m>0$ we can take $g$ to infinity instead. In this case the result yields that an exited state exists for $g$ very large. Furthermore, the energy difference between the exited state and the ground state converges to 0. Taking $g$ to infinity is not a purely mathematical exercise as experiments can go beyond the ultra deep coupling regime. This was achieved by Yoshihara, K. et al. and published in Nature Physics  \cite{Yos}.
\end{enumerate}
We will also prove two smaller results. The first result is about regularity of ground states with respect to the number operator. The result only applies to the infrared regular case, but is close to optimal and extends the results found in \cite{Hirokawa2}. The second result is a condition under which the massive spin-boson model has an exited state in the mass gap.

\section{Notation and preliminaries}
We start by fixing notation. If $X$ is a topological space we will write $\cB(X)$ for the Borel $\sigma$-algebra. Furthermore if $(\cM,\cF,\mu)$ is a measure space we will for $1\leq p\leq \infty$ write $L^p(\cM,\cF,\mu)$ for the corresponding $L^p$ space.

Throughout this paper $\cH$ will denote the state space of a single boson which we will assume to be a separable Hilbert space. Let $S_n$ denote projection of $\cH^{\otimes n}$ onto the subspace of symmetric tensors. The bosonic (or symmetric) Fock space is defined as
\begin{equation*}
\cF_b(\cH)=\bigoplus_{n=0}^\infty S_n( \cH^{\otimes n}).
\end{equation*}
If $\cH=L^2(\cM,\cF,\mu)$ where $(\cM,\cF,\mu)$ is a $\sigma$-finite measure space then  $S_n(\cH^{\otimes n})=L_{sym}^2(\cM^{n},\cF^{\otimes n},\mu^{\otimes n})$. An element $\psi\in \cF_b(\cH)$ is an infinite sequence of elements which is written as $\psi=(\psi^{(n)})$. We also define the vacuum $\Omega=(1,0,0,\dots)$. Furthermore, we will write
\begin{equation*}
S_n(f_1\otimes\cdots\otimes f_n)=f_1\otimes_s\cdots \otimes_s f_n.
\end{equation*}
For $g\in \cH$ one defines the annihilation operator $a(g)$ and creation operator $a^{\dagger}(g)$ on symmetric tensors in $\cF_b(\cH)$ by $a(g)\Omega=0,a^\dagger(g)\Omega=g$ and
\begin{align*}
a(g)( f_1\otimes_s\cdots\otimes_s f_n )&=\frac{1}{\sqrt{n}}\sum_{i=1}^{n} \langle g,f_i \rangle f_1\otimes_s\cdots\otimes_s \hat{f}_i\otimes_s\cdots\otimes_s f_n\\
a^\dagger(g)( f_1\otimes_s\cdots\otimes_s f_n )&=\sqrt{n+1}g\otimes_s f_1\otimes_s\cdots\otimes_s f_n
\end{align*}
where $\widehat{f}_i$ means that $f_i$ is omitted from the tensor product. One can show that these operators extends to closed operators on $\cF_b(\cH)$ and that $(a(g))^*=a^{\dagger}(g)$. Furthermore we have the canonical commutation relations which states
\begin{equation*}
\overline{[a(f),a(g)]}=0=\overline{[a^\dagger(f),a^\dagger(g)]} \,\,\text{and}\,\,\, \overline{[a(f),a^\dagger(g)]}=\langle f,g\rangle.
\end{equation*}
One now introduces the selfadjoint field operators
\begin{equation*}
\varphi(g)=\overline{ a(g)+a^\dagger(g) }.
\end{equation*}
Let $\omega$ be a selfadjoint and non-negative operator on $\cH$ with domain $\cD(\omega)$. Write $(1\otimes)^{k-1} \omega(\otimes 1)^{n-k}$ for the operator $B_1\otimes...\otimes B_n$ where $B_k=\omega$ and $B_j=1$ if $j\neq k$. We then define the second quantisation of $\omega$ to be the selfadjoint operator
\begin{equation}\label{Sumdecomp}
d\Gamma(\omega)=0\oplus \bigoplus_{n=1}^{\infty} \left(\sum_{k=1}^{n} (1\otimes)^{k-1} \omega(\otimes 1)^{n-k}\right)\biggl \lvert_{S_n(\cH^{\otimes n})}.
\end{equation}
If $\omega$ is a multiplication operator then $d\Gamma(\omega)$ acts on elements in $S_n(\cH^{\otimes n})$ as multiplication by $\omega_n(k_1,\dots,k_n)=\omega(k_1)+\cdots+\omega(k_n)$. The number operator is defined as $N=d\Gamma(1)$. Let $U$ be unitary from $\cH$ to $\cK$. Then we define the unitary from $\cF_b(\cH)$ to $\cF_b(\cK)$ by
\begin{equation*}
\Gamma(U)=1\oplus \bigoplus_{n=1}^\infty  U^{\otimes n}\mid_{S_n(\cH^{\otimes n})},
\end{equation*}
For $n\in \NN_0=\NN\cup \{0\}$ we also define the operators $d\Gamma^{(n)}(\omega)=d\Gamma(\omega)\mid_{S_n(\cH^{\otimes n})}$ and $\Gamma^{(n)}(U)=\Gamma(U)\mid_{S_n(\cH^{\otimes n})}$. See \cite{Thomas1} for a proof of the following lemma:
\begin{lemma}\label{Lem:SecondQuantisedProp}
	Let $\omega$ be a selfadjoint and non negative operator on $\cH$ and let $m=\inf(\sigma(\omega))$. For $n\geq 1$ we have
	\begin{align*}
	\sigma(d\Gamma^{(n)}(\omega) )&= \overline{ \{  \lambda_1+\cdots+\lambda_n\mid \lambda_i\in \sigma(\omega) \}},\\
	\inf(\sigma(d\Gamma^{(n)}(\omega)))&= nm.
	\end{align*}
	Furthermore, $d\Gamma(\omega)$ will have compact resolvents if and only if  $\omega$ has compact resolvents. Also, $d\Gamma^{(n)}(\omega)$ is injective for $n\geq 1$ if $\omega$ is injective.
\end{lemma}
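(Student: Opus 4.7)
My strategy is to push everything down to a multiplication-operator model via the spectral theorem and then read off each conclusion. By the spectral theorem applied to the nonnegative selfadjoint $\omega$, I obtain a $\sigma$-finite measure space $(X,\mu)$, a unitary $U:\cH\to L^2(X,\mu)$, and a measurable $f:X\to[0,\infty)$ with $U\omega U^{-1}$ equal to multiplication by $f$ and the essential range of $f$ equal to $\sigma(\omega)$. Lifting via $\Gamma(U)$, the operator $d\Gamma^{(n)}(\omega)$ becomes multiplication by the symmetric function $F_n(x_1,\dots,x_n)=f(x_1)+\cdots+f(x_n)$ on $L^2_{\mathrm{sym}}(X^n,\mu^{\otimes n})$. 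Since $F_n$ is symmetric, its essential range will equal the spectrum of multiplication by $F_n$ restricted to the symmetric subspace: any preimage $\{|F_n-\lambda|<\epsilon\}$ is $S_n$-invariant, and when of positive measure carries a nonzero symmetric $L^2$ indicator after truncation to finite measure.

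\textbf{Spectrum and infimum.} It will then suffice to prove
\[
\mathrm{ess\,range}(F_n)=\overline{\{\lambda_1+\cdots+\lambda_n : \lambda_i\in\sigma(\omega)\}}.
\]
The containment $\supseteq$ is direct: for $\lambda_i\in\sigma(\omega)$, each $A_i=\{|f-\lambda_i|<\epsilon/n\}$ has positive $\mu$-measure, and after passing to finite-measure subsets, $A_1\times\cdots\times A_n$ witnesses that $\sum\lambda_i$ lies within $\epsilon$ of $\mathrm{ess\,range}(F_n)$, while the closure absorbs the $\epsilon$. The reverse containment I will prove inductively on $n$: from a positive-measure preimage $F_n^{-1}((\alpha-\epsilon,\alpha+\epsilon))$, Fubini produces a $\mu$-positive set of $x_1$, and $\mu$-a.e.\ such $x_1$ satisfies $f(x_1)\in\sigma(\omega)$; on this set the slice in $x_2,\dots,x_n$ has positive $\mu^{\otimes(n-1)}$-measure, so the inductive hypothesis yields $\lambda_2,\dots,\lambda_n\in\sigma(\omega)$ summing to $\alpha-f(x_1)$ up to $O(\epsilon)$, and $\lambda_1:=f(x_1)$ closes the induction. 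The infimum formula $\inf\sigma(d\Gamma^{(n)}(\omega))=nm$ then drops out from $f\geq m$ and $m\in\sigma(\omega)$.

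\textbf{Compact resolvents and injectivity.} Compactness of the resolvent of $\omega$ follows from that of $d\Gamma(\omega)$ by restriction to the one-particle sector. For the converse, $\omega$ with compact resolvent has purely discrete spectrum $\mu_1\leq\mu_2\leq\cdots\to\infty$, and after ruling out a kernel of $\omega$ (which would yield an infinite-dimensional zero eigenspace of $d\Gamma(\omega)$ via $\cF_b(\ker\omega)$) one has $\mu_1>0$; then the $n$-particle eigenvalues $\mu_{i_1}+\cdots+\mu_{i_n}$ exceed $n\mu_1$, so for any $R$ only finitely many $n$ and finitely many multi-indices produce eigenvalues in $[0,R]$, giving discrete spectrum with finite multiplicities. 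For injectivity, $\omega$ injective is equivalent to $\mu(\{f=0\})=0$; since $f\geq 0$ the zero set $F_n^{-1}(0)=\{f=0\}^n$ has $\mu^{\otimes n}$-measure zero, so multiplication by $F_n$ is injective on $L^2_{\mathrm{sym}}(X^n)$.

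\textbf{Main obstacle.} The only genuinely nontrivial step will be the inclusion $\mathrm{ess\,range}(F_n)\subseteq\overline{\sigma(\omega)+\cdots+\sigma(\omega)}$; the Fubini/inductive manipulation is its core, and care must be taken with $\sigma$-finite (not finite) measures, which I will handle by truncating to finite-measure subsets whenever positive measure needs to be realized as a nonzero $L^2$ element.
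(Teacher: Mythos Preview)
The paper does not supply its own proof of this lemma; immediately before the statement it writes ``See \cite{Thomas1} for a proof of the following lemma''. There is therefore nothing in the present paper to compare your argument against.

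That said, your sketch is correct and follows the standard route. The reduction to a multiplication operator via the spectral theorem, the identification of $\sigma(d\Gamma^{(n)}(\omega))$ with the essential range of $F_n$ on the symmetric subspace (using that level sets of a symmetric function are permutation-invariant, so truncated indicators furnish symmetric approximate eigenvectors), and the Fubini induction for the inclusion $\mathrm{ess\,range}(F_n)\subseteq\overline{\sigma(\omega)+\cdots+\sigma(\omega)}$ all go through as you describe; in fact the induction can be arranged so that the error remains below $\epsilon$ at every step, not merely $O(\epsilon)$. The injectivity argument via $\{F_n=0\}=\{f=0\}^n$ is correct.

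One remark: your parenthetical about a possible kernel of $\omega$ actually shows that the compact-resolvent biconditional, as literally stated in the lemma without an injectivity hypothesis on $\omega$, is false (e.g.\ $\cH=\CC$, $\omega=0$ has compact resolvent, but $d\Gamma(0)=0$ on the infinite-dimensional $\cF_b(\CC)$ does not). Throughout the paper $\omega$ is always taken injective whenever this lemma is invoked, so this is a minor omission in the hypotheses rather than a defect in your reasoning; your argument in effect proves the correct statement ``$d\Gamma(\omega)$ has compact resolvent $\iff$ $\omega$ has compact resolvent and is injective''.
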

We now introduce the Weyl representation. For any $g\in \cH$ we define the corresponding exponential vector
\begin{equation*}
\epsilon(g)=\sum_{n=0}^{\infty} \frac{g^{\otimes n}}{\sqrt{n!}}.
\end{equation*}
One may prove that if $\cD\subset \cH$ is dense then the set $\{ \epsilon(f)\mid f\in \cD \}$ is a linearly independent total subset of $\cF_b(\cH)$. Let $\cU(\cH)$ be the unitaries from $\cH$ into $\cH$. Fix now $U\in \cU(\cH)$ and $h\in \cH$. The corresponding Weyl transformation is the unique unitary map $W(h,U)$ satisfying
\begin{equation*}
W(h,U)\epsilon(g)=e^{-\lVert h\lVert^2/2-\langle f,Ug \rangle}\epsilon(h+Ug).
\end{equation*}
for all $g\in \cH$. One may easily check that $(h,U)\mapsto W(h,U)$ is strongly continuous. Furthermore one may check the relation
\begin{equation}\label{eq:Weylcomp}
W(h_1,U_1)W(h_2,U_2)=e^{-i\text{Im}(\langle h_1,U_1h_2 \rangle)}W((h_1,U_1)(h_2,U_2)),
\end{equation}
where $(h_1,U_1)(h_2,U_2)=(h_1+U_1h_2,U_1U_2)$. If $\omega$ is selfadjoint and $f\in \cH$ then we have
\begin{align}\label{grupper1}
e^{itd\Gamma(\omega)}&=\Gamma(e^{it\omega})=W(0,e^{it\omega})\\ \label{grupper2}
e^{it\varphi(if)}&=W(tf,1).
\end{align}
The following lemma is important and well known (see e.g \cite{Hirokawa1} and \cite{Derezinski}):
\begin{lemma}\label{Lem:FundamentalIneq}
	Let $\omega\geq 0$ be a selfadjoint, non negative and injective operator on $\cH$. If $v \in \cD(\omega^{-1/2})$ then $\varphi(v)$ is $d\Gamma(\omega)^{1/2}$ bounded. In particular, $\varphi(v)$ is $N^{1/2}$ bounded. We have the following bound
	\begin{equation*}
	\lVert \varphi(v) \psi \lVert\leq 2 \lVert (\omega^{-1/2}+1)v \lVert  \lVert (d\Gamma(\omega)+1)^{1/2}\psi \lVert   
	\end{equation*}
	which holds on $\cD(d\Gamma(\omega)^{1/2})$. In particular $\varphi(v)$ is infinitesimally $d\Gamma(\omega)$ bounded. Furthermore, $\sigma (d\Gamma(\omega)+\varphi(v))= -\lVert \omega^{-1/2}v \lVert^2+\sigma(d\Gamma(\omega))$. 
\end{lemma}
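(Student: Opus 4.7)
My plan is to prove the three assertions in sequence: the explicit bound via sector-wise Cauchy--Schwarz estimates on $a(v)$ and $a^\dagger(v)$, the infinitesimal bound by an elementary interpolation, and the spectral identity by a Bogolubov/Weyl shift, with an approximation argument to handle the infrared-singular regime.

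First, for the bound I would work sector by sector. Writing $a(v)\psi^{(n+1)}$ as a contraction on $S_{n+1}(\cH^{\otimes(n+1)})$ and applying Cauchy--Schwarz against the weight $\omega$ yields
\[
\lVert a(v)\psi^{(n+1)}\lVert^2 \leq \lVert \omega^{-1/2}v\lVert^2 \, \langle \psi^{(n+1)},d\Gamma(\omega)\psi^{(n+1)}\rangle.
\]
Summing over $n$ gives $\lVert a(v)\psi\lVert \leq \lVert\omega^{-1/2}v\lVert\,\lVert d\Gamma(\omega)^{1/2}\psi\lVert$, and the same argument with $\omega$ replaced by $1$ yields the analogous $N^{1/2}$-bound on $a(v)$. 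From the CCR $[a(v),a^\dagger(v)] = \lVert v\lVert^2$ one gets $\lVert a^\dagger(v)\psi\lVert^2 = \lVert a(v)\psi\lVert^2 + \lVert v\lVert^2\lVert\psi\lVert^2$. Applying the triangle inequality to $\varphi(v) = \overline{a(v)+a^\dagger(v)}$ and combining with the two elementary estimates $\lVert\omega^{-1/2}v\lVert^2 + \lVert v\lVert^2 \leq \lVert(\omega^{-1/2}+1)v\lVert^2$ (which reduces to non-negativity of $\langle \omega^{-1/2}v,v\rangle$, by positivity of $\omega^{-1/2}$) and $\lVert d\Gamma(\omega)^{1/2}\psi\lVert^2 + \lVert\psi\lVert^2 = \lVert(d\Gamma(\omega)+1)^{1/2}\psi\lVert^2$ then produces the stated inequality with the factor of $2$.

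Infinitesimal $d\Gamma(\omega)$-boundedness of $\varphi(v)$ follows from the functional-calculus inequality $\sqrt{t+1} \leq \varepsilon t + \varepsilon + \frac{1}{4\varepsilon}$, valid for every $\varepsilon > 0$ and applied to $d\Gamma(\omega)$; selfadjointness of $d\Gamma(\omega)+\varphi(v)$ on $\cD(d\Gamma(\omega))$ is then a direct consequence of Kato--Rellich.

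For the spectral identity I would implement a Bogolubov/Weyl shift. In the regular case $v\in\cD(\omega^{-1})$, a direct computation from (\ref{eq:Weylcomp}) together with the generators (\ref{grupper1})--(\ref{grupper2}) yields
\[
W(-\omega^{-1}v,1)^\ast \bigl(d\Gamma(\omega)+\varphi(v)\bigr) W(-\omega^{-1}v,1) = d\Gamma(\omega) - \lVert\omega^{-1/2}v\lVert^2,
\]
and the spectral identity is immediate by unitary equivalence. The main obstacle is the infrared-singular regime $v\in\cD(\omega^{-1/2})\setminus\cD(\omega^{-1})$: here $\omega^{-1}v\notin\cH$, so no such Weyl operator is available in $\cF_b(\cH)$. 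To handle this I would approximate by the cutoffs $v_n = 1_{\{\omega>1/n\}}v \in \cD(\omega^{-1})$, apply the preceding case to each $v_n$, and pass to the limit. The uniform relative bound from the first part applied to $v-v_n$, combined with $\lVert(\omega^{-1/2}+1)(v-v_n)\lVert\to 0$ and KLMN, yields norm-resolvent convergence of $d\Gamma(\omega)+\varphi(v_n)$ to $d\Gamma(\omega)+\varphi(v)$; since spectra are closed under norm-resolvent limits and $\lVert\omega^{-1/2}v_n\lVert^2 \to \lVert\omega^{-1/2}v\lVert^2$, the identity survives the passage to the limit.
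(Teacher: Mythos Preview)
The paper does not actually prove this lemma; it records it as ``important and well known'' and cites \cite{Hirokawa1} and \cite{Derezinski}. Your argument is correct and is essentially the standard one found in those references: sector-wise Cauchy--Schwarz for the $a(v)$ and $a^\dagger(v)$ bounds, the elementary square-root interpolation for infinitesimal relative boundedness, and the Bogolubov/Weyl shift for the spectral identity, supplemented by an infrared regularisation $v_n=1_{\{\omega>1/n\}}v$ when $v\notin\cD(\omega^{-1})$.

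Two minor remarks on presentation. First, invoking KLMN is slightly off-target: you already have infinitesimal \emph{operator} bounds, so Kato--Rellich and the second resolvent identity give norm-resolvent convergence of $d\Gamma(\omega)+\varphi(v_n)$ to $d\Gamma(\omega)+\varphi(v)$ directly; the conclusion is unchanged. Second, the phrase ``spectra are closed under norm-resolvent limits'' should be unpacked as the two-sided spectral stability for selfadjoint operators under norm-resolvent convergence (cf.\ \cite[Theorem~VIII.23]{RS1} and the surrounding discussion), which yields both inclusions $\sigma(d\Gamma(\omega)+\varphi(v))\subset c+\sigma(d\Gamma(\omega))$ and the reverse once combined with $\lVert\omega^{-1/2}v_n\lVert^2\to\lVert\omega^{-1/2}v\lVert^2$.
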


\section{The Spin-Boson model}
Let $\sigma_x,\sigma_y,\sigma_z$ denote the Pauli matrices
\begin{align*}
\sigma_x=\begin{pmatrix}
0 & 1 \\ 1& 0
\end{pmatrix} \,\,\,\,\, \sigma_y=\begin{pmatrix}
0 & -i \\ i& 0
\end{pmatrix} \,\,\,\,\, \sigma_z=\begin{pmatrix}
1 & 0 \\ 0& -1
\end{pmatrix}
\end{align*}
and define $e_1=(1,0)$ and $e_{-1}=(0,1)$. The total system has the Hamiltonian 
\begin{equation*}
H_{\eta}(v,\omega):=\eta \sigma_z\otimes 1+1\otimes d\Gamma(\omega)+\sigma_x\otimes \varphi(v),
\end{equation*}
which is here parametrised by $ v\in \cH, \eta\in \CC$ and $\omega$ selfadjoint on $\cH$. We will also need the fiber operators:
\begin{equation*}
F_{\eta}(v,\omega)=\eta\Gamma(-1)+d\Gamma(\omega)+\varphi(v).
\end{equation*}
acting in $\cF_b(\cH)$. If the spectra are real we define
\begin{align*}
E_{\eta}(v,\omega)&:=\inf(\sigma(H_{\eta}(v,\omega)))\\
\cE_{\eta}(v,\omega)&:=\inf(\sigma(F_{\eta}(v,\omega))).
\end{align*}
For $\omega$ selfadjoint on $\cH$ we define
\begin{equation*}
m(\omega)=\inf\{ \sigma(\omega) \} \,\,\,\,\, \text{and}\,\,\,\,\, m_{\textup{ess}}(\omega)=\inf\{ \sigma_{\textup{ess}}(\omega) \}.
\end{equation*}
Standard perturbation theory and Lemma \ref{Lem:FundamentalIneq} yields:
\begin{proposition}\label{Lem:BasicPropertiesSBmodel} Let$\omega\geq 0$ be a selfadjoint, non negative and injective operator on $\cH$, $v\in \cD(\omega^{-1/2})$ and $\eta\in \CC$. Then the operators $F_{\eta}(v,\omega)$ and $H_\eta(v,\omega)$ are closed on the respective domains
	\begin{align*}
	\cD(F_{\eta}(v,\omega))&=\cD(d\Gamma(\omega))\\
	\cD(H_{\eta}(v,\omega))&=\cD(1\otimes d\Gamma(\omega)).
	\end{align*}
	If $\cD$ is a core of $\omega$ then the linear span of the following sets
	\begin{align*}
	\cJ(\cD)&:=\{\Omega \}\cup \bigcup_{n=1}^\infty \{  f_1\otimes_s\cdots \otimes_s f_n\mid f_j\in \cD \}\\ \widetilde{\cJ}(\cD)&:=\{ f_1\otimes f_2\mid f_1\in \{ e_1,e_{-1} \} ,f_2\in \cJ(\cD)  \}
	\end{align*}
	is a core for $F_{\eta}(v,\omega)$ and  $H_{\eta}(v,\omega)$ respectively. Furthermore, both operators are selfadjoint and semibounded if $\eta\in \RR$.
\end{proposition}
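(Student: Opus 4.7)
The plan is to view both $F_\eta(v,\omega)$ and $H_\eta(v,\omega)$ as perturbations of $d\Gamma(\omega)$ and $1 \otimes d\Gamma(\omega)$ respectively, and to invoke Kato--Rellich once the perturbation has been shown to be infinitesimally small. All analytic input is already encoded in Lemma~\ref{Lem:FundamentalIneq}.

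\emph{Step 1: The fiber operator $F_\eta(v,\omega)$.} The free part $d\Gamma(\omega)$ is selfadjoint and nonnegative on $\cD(d\Gamma(\omega))$. The term $\eta\Gamma(-1)$ is bounded with $\|\eta\Gamma(-1)\|=|\eta|$ (since $\Gamma(-1)$ is unitary) and selfadjoint when $\eta\in\RR$. The field $\varphi(v)$ is selfadjoint by construction and, by Lemma~\ref{Lem:FundamentalIneq}, infinitesimally $d\Gamma(\omega)$-bounded. Kato--Rellich (in its closed-operator form under an infinitesimally small perturbation for arbitrary $\eta\in\CC$, and in the usual selfadjoint form for $\eta\in\RR$) then yields closedness on $\cD(d\Gamma(\omega))$, selfadjointness for real $\eta$, and semi-boundedness from below; in the real case Lemma~\ref{Lem:FundamentalIneq} combined with $\|\eta\Gamma(-1)\|\leq |\eta|$ even gives the explicit bound $\cE_\eta(v,\omega)\geq -|\eta|-\|\omega^{-1/2}v\|^2$.

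\emph{Step 2: Cores for $F_\eta$.} I would first prove the standard fact that the linear span of $\cJ(\cD)$ is a core for $d\Gamma(\omega)$ whenever $\cD$ is a core for $\omega$. Using the direct-sum decomposition (\ref{Sumdecomp}), truncating to finitely many particle-number sectors reduces the claim to showing that on each $S_n(\cH^{\otimes n})$ the symmetrised elementary tensors from $\cD$ are dense in the graph norm of $\sum_{k=1}^n (1\otimes)^{k-1}\omega(\otimes 1)^{n-k}$, which follows by an elementary factor-by-factor approximation argument exploiting that $\cD$ is a core for $\omega$. Since $\eta\Gamma(-1)+\varphi(v)$ is infinitesimally $d\Gamma(\omega)$-bounded, any core for $d\Gamma(\omega)$ is automatically a core for $F_\eta(v,\omega)$, so the claim for $\cJ(\cD)$ follows.

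\emph{Step 3: The full operator $H_\eta(v,\omega)$.} The argument is parallel on $\CC^2 \otimes \cF_b(\cH)$: the free part $1 \otimes d\Gamma(\omega)$ is selfadjoint on $\cD(1\otimes d\Gamma(\omega))$; the qubit term $\eta\sigma_z \otimes 1$ is bounded; and the coupling $\sigma_x \otimes \varphi(v)$ inherits the infinitesimal $(1\otimes d\Gamma(\omega))$-bound from Lemma~\ref{Lem:FundamentalIneq} since $\|\sigma_x\|=1$. Kato--Rellich again delivers closedness, selfadjointness for real $\eta$, and semi-boundedness. The core statement for $\widetilde{\cJ}(\cD)$ reduces to Step~2 by combining the $\CC^2$-basis $\{e_1,e_{-1}\}$ with the core of $F_\eta$. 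The only step that requires genuine care is the density assertion at the start of Step~2; everything else is mechanical once Lemma~\ref{Lem:FundamentalIneq} is in hand.
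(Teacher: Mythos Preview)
Your proposal is correct and is exactly the argument the paper has in mind: the paper does not spell out a proof but simply writes ``Standard perturbation theory and Lemma~\ref{Lem:FundamentalIneq} yields'' before stating the proposition, and your three steps are precisely the standard Kato--Rellich argument this sentence is summarising. The only part you flag as requiring care, namely that $\operatorname{span}\cJ(\cD)$ is a core for $d\Gamma(\omega)$, is indeed a well-known Fock-space fact and your outlined sector-by-sector approximation is the usual route.
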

\noindent From the paper \cite{Thomas1} we find the following theorem:
\begin{theorem}\label{Thm:Spectral Theory of decomposition}
	Let $\phi=(\phi_1,\phi_{-1})=e_1\otimes \phi_1+e_{-1}\otimes\phi_{-1}$ be an element in $ \cF_b(\cH)^2=\cF_b(\cH)\oplus \cF_b(\cH)\approx \CC^2\otimes \cF_b(\cH)$. Write $\phi_i=(\phi^{(k)}_i)$ for $i\in \{-1,1\}$. Let $i\in \{-1,1\}$. Define $\widetilde{\phi}_{i}=(\widetilde{\phi}^{(k)}_i)$ where
	\begin{equation*}
	\widetilde{\phi}^{(k)}_i=\begin{cases} \phi^{(k)}_i & \text{k is even}\\ \phi^{(k)}_{-i} & \text{k is odd} 
	\end{cases}
	\end{equation*}
	and $V(\phi_1,\phi_{-1})=(\widetilde{\phi}_1,\widetilde{\phi}_{-1})$. Then
	\begin{enumerate}
	\item [\textup{(1)}]  $V$ is unitary with $V^*=V$.
	
	\item [\textup{(2)}] If $\omega\geq 0$ be a selfadjoint, non negative and injective operator on $\cH$ then $V1\otimes d\Gamma(\omega)V^*=1\otimes d\Gamma(\omega)$. Furthermore, if $\eta\in \RR$ and $v\in \cD(\omega^{-1/2})$ then 
	\begin{equation*}
		V H_\eta(v,\omega) V^*=F_{-\eta }(v,\omega)\oplus F_{\eta}(v,\omega).
	\end{equation*}
	
	\item [\textup{(3)}] Let $\omega\geq 0$ be a selfadjoint, non negative and injective operator on $\cH$, $\eta\in \RR$ and $v\in \cD(\omega^{-1/2})$. Then $E_{\eta}(v,\omega)=\cE_{-\lvert \eta\lvert}(v,\omega)$ and $H_{\eta}(v,\omega)$ has a ground state if and only if the operator $F_{-\lvert \eta\lvert}(v,\omega)$ has a ground state. This is the case if $m(\omega)>0$, and it is non degenerate if $\eta\neq0$. Furthermore,
		\begin{align*}
		\inf(\sigma_{\textup{ess}}(F_{\lvert \eta\lvert }(v,\omega)))&=\cE_{-\lvert \eta\lvert }(v,\omega)+m_{\textup{ess}}(\omega)\\
		\inf(\sigma_{\textup{ess}}(H_{ \eta }(v,\omega)))&=E_{\eta}(v,\omega)+m_{\textup{ess}}(\omega)
		\end{align*}
		and $\cE_{\lvert \eta\lvert }(v,\omega)>\cE_{-\lvert \eta\lvert }(v,\omega)$ if and only if both $\eta\neq0$ and $m(\omega) \neq 0$.
	
	\item [\textup{(4)}] Let $\omega\geq 0$ be a selfadjoint, non negative and injective operator on $\cH$, $\eta\in \RR$ and $v\in \cD(\omega^{-1/2})$. If $\phi$ is a ground state for $H_\eta(v,\omega)$ then
	\begin{align*}
	V\phi=\begin{cases}
	e_{-\textup{sign}(\eta)}\otimes \psi & \eta\neq 0\\
	e_{-1}\otimes \psi_{-1}+e_{1}\otimes \psi_1 & \eta=0
	\end{cases}
	\end{align*}
	where $\psi$ is a ground state for $F_{-\lvert \eta\lvert}(v,\omega)$ and $\psi_1,\psi_{-1}$ are either 0 or a ground state for $F_{0}(v,\omega)$.
\end{enumerate}
\end{theorem}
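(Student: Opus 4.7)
The underlying symmetry is that the spin-parity operator $P=\sigma_z\otimes\Gamma(-1)$ commutes with $H_\eta(v,\omega)$: $\sigma_z^2=I$ handles the qubit term, $[\Gamma(-1),d\Gamma(\omega)]=0$ handles the boson term, and in the interaction $\sigma_x\otimes\varphi(v)$ the two anticommutations $\sigma_x\sigma_z=-\sigma_z\sigma_x$ and $\varphi(v)\Gamma(-1)=-\Gamma(-1)\varphi(v)$ (the latter because $\varphi(v)$ flips parity) cancel. Let $P_\pm$ be the orthogonal projections of $\cF_b(\cH)$ onto even- and odd-particle sectors. Under the identification $\CC^2\otimes\cF_b(\cH)\simeq\cF_b(\cH)\oplus\cF_b(\cH)$, the operator $V$ corresponds to the self-adjoint $2\times 2$ block matrix with diagonal entries $P_+$ and off-diagonal entries $P_-$, and one verifies $VPV=\sigma_z\otimes I$, i.e.\ $V$ block-diagonalises the spin-parity symmetry. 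Part (1) follows because $P_\pm^2=P_\pm$ and $P_+P_-=0$ give $V^2=I$, and $V$ is manifestly self-adjoint; hence $V^*=V^{-1}=V$. For part (2), $1\otimes d\Gamma(\omega)$ commutes with $P_\pm$ and is therefore invariant. For the remaining terms I would compute $V(\cdot)V$ block by block: $\Gamma(-1)=P_+-P_-$ converts $\eta\sigma_z\otimes I$ into $\eta\Gamma(-1)\oplus(-\eta\Gamma(-1))$, while $\varphi(v)P_\pm=P_\mp\varphi(v)$ makes the diagonal blocks $P_\pm\varphi(v)P_\pm$ of $V(\sigma_x\otimes\varphi(v))V$ vanish and the off-diagonal contributions sum to $\varphi(v)$ on each diagonal. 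Collecting terms reconstructs $F_{-\eta}\oplus F_\eta$, up to convention-dependent reordering.

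For (3), part (2) immediately yields $E_\eta(v,\omega)=\min\{\cE_\eta,\cE_{-\eta}\}$ and $\sigma_{\textup{ess}}(H_\eta)=\sigma_{\textup{ess}}(F_\eta)\cup\sigma_{\textup{ess}}(F_{-\eta})$, so what remains is to compare $\cE_{\pm|\eta|}$, prove existence of a ground state when $m(\omega)>0$, and compute the bottoms of the essential spectra. The main tool is the Weyl displacement $W_v=W(-\omega^{-1}v,1)$ (defined via an approximation when $v\notin\cD(\omega^{-1})$): Lemma \ref{Lem:FundamentalIneq} gives $W_vF_0(v,\omega)W_v^*=d\Gamma(\omega)-\|\omega^{-1/2}v\|^2$, so
\begin{equation*}
W_vF_\eta(v,\omega) W_v^*=d\Gamma(\omega)-\|\omega^{-1/2}v\|^2+\eta\,W_v\Gamma(-1)W_v^*,
\end{equation*}
and the composition rule \eqref{eq:Weylcomp} identifies $W_v\Gamma(-1)W_v^*$ up to phase with the bounded self-adjoint Weyl element $W(-2\omega^{-1}v,-1)$, which has strictly positive vacuum expectation $e^{-2\|\omega^{-1}v\|^2}$. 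When $m(\omega)>0$, $d\Gamma(\omega)$ has a spectral gap above the simple eigenvalue $0$ with eigenvector $\Omega$, so existence of a ground state for $F_\eta$ follows by Kato perturbation for small $|\eta|$ and by a compactness argument combined with the essential-spectrum bound in general. Feynman--Hellmann at $\eta=0$ yields $\partial_\eta\cE_\eta|_{\eta=0}=e^{-2\|\omega^{-1}v\|^2}>0$, which together with concavity of $\eta\mapsto\cE_\eta$ (an infimum of affine functions) and the non-degeneracy of the $H_\eta$ ground state that comes out of the Weyl-displaced picture, gives the strict inequality $\cE_{-|\eta|}<\cE_{|\eta|}$ for all $\eta\neq 0$. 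The essential-spectrum formulae then follow from a Persson/HVZ-type Weyl-sequence construction: attach a near-eigenvector of $\omega$ at $m_{\textup{ess}}(\omega)$ to the ground state of the lower fibre and transfer the sequence through $V$ to reach both fibres.

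Part (4) is a bookkeeping consequence of (2) and (3). The identity $V(\sigma_z\otimes I)V=P$ identifies the two blocks of $VH_\eta V$ with the $\pm 1$-eigenspaces of $P$; unwinding the identification $\CC^2\otimes\cF_b(\cH)\simeq\cF_b(\cH)\oplus\cF_b(\cH)$ places the unique ground state (for $\eta\neq 0$ and $m(\omega)>0$) inside the $F_{-|\eta|}$ block, which pulls back to the $e_{-\textup{sign}(\eta)}\otimes\cF_b(\cH)$ sector, while for $\eta=0$ the two blocks coincide and arbitrary $e_{\pm 1}$ components are allowed. The main technical obstacle is the strict inequality step in (3): the local perturbative estimate near $\eta=0$ is routine, but extending it to all $\eta\neq 0$ — and, in parallel, pinning down that both essential-spectrum bottoms equal $\cE_{-|\eta|}+m_{\textup{ess}}(\omega)$ rather than $\cE_{\pm|\eta|}+m_{\textup{ess}}(\omega)$ — requires the non-degeneracy of the ground state and the detailed Weyl-displaced analysis, with the hypothesis $m(\omega)>0$ used both to guarantee the spectral gap of $d\Gamma(\omega)$ and to supply the positivity that orders $\cE_{\pm|\eta|}$.
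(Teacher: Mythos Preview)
The paper does not prove this theorem; it is imported wholesale from the companion paper \cite{Thomas1} (see the sentence ``From the paper \cite{Thomas1} we find the following theorem''), so there is no in-paper argument to compare against. Your block-matrix computation for parts (1), (2) and the bookkeeping for (4) are correct and are the standard route.

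The substantive gap is in part (3), in the strict inequality $\cE_{-|\eta|}<\cE_{|\eta|}$. Concavity of $\eta\mapsto\cE_\eta$ together with $\partial_\eta\cE_\eta|_{\eta=0}>0$ only tells you that the (one-sided) derivative is non-increasing; it therefore forces $\cE_\eta$ to be strictly increasing on $(-\infty,0]$, but places no lower bound on the derivative for $\eta>0$. A concave, $1$-Lipschitz function with positive slope at $0$ can perfectly well satisfy $f(a)<f(-a)$ for large $a$ (take e.g.\ $f$ piecewise linear with slopes $2,2,-100$ on $(-\infty,0],[0,1],[1,\infty)$: then $f(2)=-98<f(-2)=-4$). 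So the Feynman--Hellmann plus concavity argument, even granting the Weyl displacement, cannot reach the conclusion for all $\eta$, and the fix you gesture at (``non-degeneracy plus Weyl-displaced analysis'') is not an argument. You also do not address the converse direction, $\cE_{-|\eta|}=\cE_{|\eta|}$ when $m(\omega)=0$.

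The proof in \cite{Thomas1} does not go through perturbation theory at $\eta=0$. It is a Perron--Frobenius argument: writing $F_\eta=(d\Gamma(\omega)+\eta\Gamma(-1))+\varphi(v)$ with the bracketed term a diagonal multiplication operator on each particle sector, one shows (cf.\ Lemma~\ref{POS} of the present paper) that $(F_\eta-\lambda)^{-1}$ is positivity improving with respect to the cone $\cC_+$ for every $\eta$, hence any ground state is simple and strictly positive. The ordering $\cE_{-|\eta|}<\cE_{|\eta|}$ for $m(\omega)>0$ then comes from comparing the two operators on the positive ground-state eigenvector, not from a local expansion near $\eta=0$. The essential-spectrum identities are obtained from an HVZ-type theorem proved in \cite{Thomas1} for the fibre operators.
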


\section{Results}
In this section we state the results which are proven in this paper. Throughout this section $\omega$ will always denote an injective, non negative and selfadjoint operator on $\cH$. Furthermore, we will write $m=m(\omega)$ and $m_{\textup{ess}}=m_{\textup{ess}}(\omega)$. The main technical result is the following theorem:
\begin{theorem}\label{Centralconv}
Let $\{ v_g \}_{g\in (0,\infty)}\subset \cD(\omega^{-1/2})$ and $P_{\omega}$ denote the spectral measure corresponding to $\omega$. For each $\widetilde{m}>0$ we define $P_{\widetilde{m}}= P_{\omega}( (\widetilde{m},\infty))$ and $\overline{ P}_{\widetilde{m}}=1-P_{\widetilde{m}}=P_\omega([0,\widetilde{m}])$. Assume that there is $\widetilde{m}>0$ such that:
\begin{enumerate}
\item [\textup{(1)}] $\{\overline{ P}_{\widetilde{m}} v_g\}_{g\in (0,\infty)}$ converges to $v\in \cD(\omega^{-1/2})$ in the graph norm of $\omega^{-1/2}$.

\item [\textup{(2)}]$\lVert \omega^{-1}P_{\widetilde{m}} v_g\lVert $ diverges to $\infty$ as $g$ tends to infinity.
\end{enumerate}
 Then the family of operators given by
\begin{align}\label{eq:LIGHED}
\widetilde{F}_{\eta,\widetilde{m}}(v_g,\omega):&=W(\omega^{-1} P_{\widetilde{m}} v_g,1)F_{\eta}(v_g,\omega)W(\omega^{-1} P_{\widetilde{m}} v_g,1)^*+\lVert  \omega^{-1/2} P_{\widetilde{m}} v_g \lVert^2\nonumber \\&=\eta W(2\omega^{-1} P_{\widetilde{m}} v_g,-1)+d\Gamma(\omega)+\varphi(\overline{ P}_{\widetilde{m}} v_g)
\end{align}
converges to $d\Gamma(\omega)+\varphi(v)$ in norm resolvent sense as $g$ tends to $\infty$. Furthermore, $\{ \widetilde{F}_{\eta,\widetilde{m}}(v_g,\omega) \}_{g\in (0,\infty)}$ is bounded below by $-\lvert \eta\lvert-\sup_{g\in (0,\infty)}\lVert \omega^{-1/2}\overline{ P}_{\widetilde{m}}  v_g \lVert^2>-\infty $. 
\end{theorem}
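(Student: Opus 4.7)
My plan is to prove the theorem in three parts: first the algebraic identity, then the uniform lower bound, and finally the norm resolvent convergence, the last of which splits into a tractable field contribution and a delicate Weyl contribution.

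For the identity in~\eqref{eq:LIGHED}, I would derive the three standard Weyl conjugation rules
\begin{align*}
W(h,1)\,d\Gamma(\omega)\,W(h,1)^{*} &= d\Gamma(\omega)-\varphi(\omega h)+\|\omega^{1/2}h\|^{2},\\
W(h,1)\,\varphi(f)\,W(h,1)^{*} &= \varphi(f)-2\,\mathrm{Re}\langle h,f\rangle,\\
W(h,1)\,\Gamma(-1)\,W(h,1)^{*} &= W(2h,-1),
\end{align*}
the first two by acting on the total set of exponential vectors, and the third by the composition law~\eqref{eq:Weylcomp} applied to $\Gamma(-1)=W(0,-1)$. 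Substituting $h=\omega^{-1}P_{\widetilde m}v_g$ (which lies in $\cD(\omega)\cap\cD(\omega^{-1/2})$ because $\omega>\widetilde m$ on the range of $P_{\widetilde m}$ and $v_g\in\cD(\omega^{-1/2})$) and using the orthogonality $\langle h,\overline P_{\widetilde m}v_g\rangle=0$, one has $\mathrm{Re}\langle h,v_g\rangle=\|\omega^{-1/2}P_{\widetilde m}v_g\|^{2}=\|\omega^{1/2}h\|^{2}$; the added constant $\|\omega^{-1/2}P_{\widetilde m}v_g\|^2$ in the definition of $\widetilde F_{\eta,\widetilde m}$ then precisely cancels the two constants produced by the conjugation, yielding the second line of~\eqref{eq:LIGHED}. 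The lower bound follows at once: Lemma~\ref{Lem:FundamentalIneq} gives $d\Gamma(\omega)+\varphi(\overline P_{\widetilde m}v_g)\geq-\|\omega^{-1/2}\overline P_{\widetilde m}v_g\|^{2}$, unitarity of $W$ gives $\eta\,W(2\omega^{-1}P_{\widetilde m}v_g,-1)\geq-|\eta|$, and Assumption~(1) makes $\sup_{g}\|\omega^{-1/2}\overline P_{\widetilde m}v_g\|$ finite.

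Setting $T=d\Gamma(\omega)+\varphi(v)$, I split
\[\widetilde F_{\eta,\widetilde m}(v_g,\omega)-T=\varphi(\overline P_{\widetilde m}v_g-v)+\eta\,W(2\omega^{-1}P_{\widetilde m}v_g,-1)\]
and apply the second resolvent identity. The field part is handled via Lemma~\ref{Lem:FundamentalIneq}: the operator $\varphi(\overline P_{\widetilde m}v_g-v)$ is $(d\Gamma(\omega)+1)^{1/2}$-bounded with constant $2\|(\omega^{-1/2}+1)(\overline P_{\widetilde m}v_g-v)\|\to 0$ by Assumption~(1). Combined with a uniform-in-$g$ bound on $\|(d\Gamma(\omega)+1)^{1/2}(\widetilde F_{\eta,\widetilde m}(v_g,\omega)+i)^{-1}\|$ obtained from the just-proved lower bound and the infinitesimal $d\Gamma(\omega)$-boundedness of $\varphi$, this shows that the field-term contribution to $\|(\widetilde F_{\eta,\widetilde m}+i)^{-1}-(T+i)^{-1}\|$ vanishes.

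The hard part will be to show that the remaining sandwich $(T+i)^{-1}\eta\,W(2h_g,-1)(\widetilde F_{\eta,\widetilde m}+i)^{-1}$, with $h_g=\omega^{-1}P_{\widetilde m}v_g$, also tends to zero in operator norm. Assumption~(2) combined with the inequality $\|\omega^{-1}P_{\widetilde m}v_g\|\leq\widetilde m^{-1/2}\|\omega^{-1/2}P_{\widetilde m}v_g\|$, valid on the range of $P_{\widetilde m}$ because $\omega>\widetilde m$ there, forces $\|\omega^{1/2}h_g\|^{2}=\|\omega^{-1/2}P_{\widetilde m}v_g\|^{2}\to\infty$, so the coherent state $W(2h_g,1)\Omega$ carries diverging $d\Gamma(\omega)$-expectation $4\|\omega^{1/2}h_g\|^{2}$. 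To exploit this I would use the conjugation identity
\[W(2h_g,-1)^{*}(d\Gamma(\omega)+1)W(2h_g,-1)=d\Gamma(\omega)-2\varphi(\omega h_g)+4\|\omega^{1/2}h_g\|^{2}+1,\]
which intertwines $(d\Gamma(\omega)+1)^{-1}$ across the Weyl operator with the resolvent of a shifted Van Hove operator of spectrum $[1,\infty)$, together with the exponential decay $|\langle\epsilon(f),W(2h_g,-1)\epsilon(k)\rangle|\lesssim e^{-2\|h_g\|^{2}+O(\|h_g\|)}$ of the Weyl matrix elements on the total set of exponential vectors. Since both $(T+i)^{-1}$ and $(\widetilde F_{\eta,\widetilde m}+i)^{-1}$ are dominated by $(d\Gamma(\omega)+1)^{-1}$ up to uniformly bounded factors, the problem reduces to the operator-norm decay of $(d\Gamma(\omega)+1)^{-1/2}W(2h_g,-1)(d\Gamma(\omega)+1)^{-1/2}$, which I would obtain via coherent-state spectral concentration near the diverging energy $4\|\omega^{1/2}h_g\|^{2}$.
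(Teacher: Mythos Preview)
Your treatment of the identity~\eqref{eq:LIGHED}, the lower bound, and the field contribution $\varphi(\overline P_{\widetilde m}v_g-v)$ is correct and parallels the paper. The gap is in the Weyl term. You correctly reduce the problem to showing that
\[
(d\Gamma(\omega)+1)^{-1/2}\,W(2h_g,-1)\,(d\Gamma(\omega)+1)^{-1/2}\longrightarrow 0
\]
in \emph{operator norm}, but your justification (``coherent-state spectral concentration'') is not a proof. The exponential decay of Weyl matrix elements on exponential vectors gives only weak convergence $W(2h_g,-1)\to 0$; to upgrade this to norm convergence of the sandwich one needs compactness of the flanking operators, and $(d\Gamma(\omega)+1)^{-1/2}$ is compact only when $\omega$ itself has compact resolvents. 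Your intertwining identity for $W^*(d\Gamma(\omega)+1)W$ does not help either: the conjugated operator is a Van Hove Hamiltonian with infimum $1$, not $4\|\omega^{1/2}h_g\|^2$, so no spectral gap opens up. In short, the high energy of the single coherent vector $W(2h_g,1)\Omega$ does not by itself force the whole low-energy subspace to be mapped to high energy.

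The paper supplies exactly the missing compactness via a Glimm--Jaffe type argument (Lemmas~\ref{MainTech}, \ref{Compactcase}, \ref{ConvSimpleCase}). One first discretises $\omega$ above $\widetilde m$ to an operator $\omega_k$ taking countably many values, with $\widetilde F_\eta(\,\cdot\,,\omega_k)\to\widetilde F_\eta(\,\cdot\,,\omega)$ in norm resolvent sense uniformly in the Weyl argument. For fixed $k$ one then rotates, inside each spectral subspace of $\omega_k$, so that $h_g$ lies in a one-dimensional direction; this produces a tensor factorisation $\cF_b(\cH)\cong\cF_b(\cH_{1,k})\otimes\cF_b(\cH_{2,k})$ in which $h_g$ sits in the first factor and $\omega_{1,k}$ has \emph{compact} resolvents. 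On that factor weak convergence of $W$ plus compactness of $(d\Gamma(\omega_{1,k})-i)^{-1}$ gives the required norm convergence; the second factor contributes only $\Gamma(-1)$ and harmless energy shifts. This reduction is the substantive step you are missing, and without it (or an equivalent device) the norm-resolvent convergence cannot be concluded.
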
 
\noindent The assumption in part (1) is critical. Divergence where $\omega$ is small can lead to problems. This is proven in Proposition \ref{Counter} below.

In the strongly coupled Spin-Boson model one usually has $v_g=g\widetilde{v}$ where $\widetilde{v}\in \cD(\omega^{-1/2})$ and $g\in (0,\infty)$ is the strength of the interaction. We can now answer what happens as $g $ goes to $\infty$.
\begin{corollary}\label{KorGSFiber}
	Let $v\in  \cH$, $\eta\in \RR$ and assume $m>0$. Then there exists $g_0>0$ such that $\cE_{\eta}(gv,\omega)$ is a non degenerate eigenvalue of $F_{\eta}(gv,\omega)$ when $g>g_0$. Furthermore, one may pick a family of normalised vectors $\{ \psi_g \}_{g\in [g_0,\infty )}$ such that $g\mapsto  \psi_g$ is smooth, $F_{\eta}(gv,\omega)\psi_g=\psi_g \cE_{\eta}(gv,\omega)$ and
	\begin{align*}
		\lim\limits_{g\rightarrow \infty} \lVert \psi_g-e^{-g^2\lVert \omega^{-1}v \lVert^2 } \epsilon(-g\omega^{-1}v)\lVert&=0,\\
		\lim\limits_{g\rightarrow \infty}  \frac{ \langle \psi_g,N\psi_g \rangle -g^2\lVert \omega^{-1}v \lVert^2}{g} &=0,\\\lim_{g\rightarrow \infty} ( \cE_{\eta}(gv,\omega) +g^2\lVert \omega^{-1/2}v \lVert^2 ) &=0.
	\end{align*}
	If $\eta <0$ then $g\mapsto \cE_{\eta}(gv,\omega) +g^2\lVert \omega^{-1/2}v\lVert$ is strictly increasing.
\end{corollary}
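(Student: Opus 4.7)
The plan is to deduce everything from Theorem \ref{Centralconv} applied with $v_g = gv$ and any $\widetilde{m} \in (0,m)$. Since $m = \inf\sigma(\omega) > 0$, the spectral projection $\overline{P}_{\widetilde{m}} = P_\omega([0,\widetilde{m}])$ vanishes, so hypothesis (1) of the theorem holds trivially with limit $0$, and hypothesis (2) holds as $\lVert\omega^{-1}P_{\widetilde{m}} v_g\rVert = g\lVert\omega^{-1}v\rVert\to\infty$ (the degenerate case $v=0$ is trivial). The theorem then yields norm resolvent convergence of $\widetilde{F}_g := \eta W(2g\omega^{-1}v, -1) + d\Gamma(\omega)$ to $d\Gamma(\omega)$ together with a uniform lower bound independent of $g$.

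By Lemma \ref{Lem:SecondQuantisedProp} the limit operator has $0$ as a simple isolated eigenvalue with eigenvector $\Omega$, separated from the rest of its spectrum by the gap $m$. Norm resolvent convergence combined with this gap gives, for $g$ large enough, a simple isolated eigenvalue $\widetilde{\mathcal{E}}_g$ of $\widetilde{F}_g$ with $\widetilde{\mathcal{E}}_g \to 0$, and the corresponding rank-one spectral projection converging in operator norm. Since $\widetilde{F}_g = W(g\omega^{-1}v,1)F_\eta(gv,\omega)W(g\omega^{-1}v,1)^* + g^2\lVert\omega^{-1/2}v\rVert^2$, this promotes $\cE_\eta(gv,\omega) = \widetilde{\mathcal{E}}_g - g^2\lVert\omega^{-1/2}v\rVert^2$ to a non-degenerate eigenvalue, proving the third limit. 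Because $g \mapsto F_\eta(gv,\omega) = \eta\Gamma(-1) + d\Gamma(\omega) + g\varphi(v)$ is an affine type-A analytic family (in the sense of Kato), analytic perturbation theory supplies a real-analytic family $\psi_g$ of normalized ground states for $g \geq g_0$; setting $\widetilde{\psi}_g := W(g\omega^{-1}v,1)\psi_g$, one has $\widetilde{\psi}_g \to \Omega$ in norm, so
\begin{equation*}
\lVert\psi_g - e^{-g^2\lVert\omega^{-1}v\rVert^2/2}\epsilon(-g\omega^{-1}v)\rVert = \lVert W(g\omega^{-1}v,1)^*(\widetilde{\psi}_g - \Omega)\rVert = \lVert\widetilde{\psi}_g - \Omega\rVert \to 0,
\end{equation*}
yielding the first limit via the identity $W(-g\omega^{-1}v,1)\Omega = e^{-g^2\lVert\omega^{-1}v\rVert^2/2}\epsilon(-g\omega^{-1}v)$ (the stated exponent appears to be missing the factor $1/2$).

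For the $N$-expectation, the Weyl conjugation identity $W(g\omega^{-1}v,1)\, N\, W(g\omega^{-1}v,1)^* = N - \varphi(g\omega^{-1}v) + g^2\lVert\omega^{-1}v\rVert^2$, derived by the same commutator calculation that underpins the formula (\ref{eq:LIGHED}) in Theorem \ref{Centralconv}, gives
\begin{equation*}
\langle\psi_g, N\psi_g\rangle = \langle\widetilde{\psi}_g, N\widetilde{\psi}_g\rangle - g\langle\widetilde{\psi}_g, \varphi(\omega^{-1}v)\widetilde{\psi}_g\rangle + g^2\lVert\omega^{-1}v\rVert^2.
\end{equation*}
The eigenequation yields $\langle\widetilde{\psi}_g, d\Gamma(\omega)\widetilde{\psi}_g\rangle = \widetilde{\mathcal{E}}_g - \eta\langle\widetilde{\psi}_g, W(2g\omega^{-1}v,-1)\widetilde{\psi}_g\rangle \to 0$, since $\langle\Omega, W(2g\omega^{-1}v,-1)\Omega\rangle = e^{-2g^2\lVert\omega^{-1}v\rVert^2}\to 0$ and $\lVert\widetilde{\psi}_g - \Omega\rVert \to 0$; hence $\langle\widetilde{\psi}_g, N\widetilde{\psi}_g\rangle \to 0$ because $d\Gamma(\omega) \geq mN$. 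The middle term is $o(g)$ because $\lVert a(\omega^{-1}v)\widetilde{\psi}_g\rVert \leq \lVert\omega^{-1}v\rVert\lVert N^{1/2}\widetilde{\psi}_g\rVert\to 0$ and $\varphi(\omega^{-1}v) = a(\omega^{-1}v) + a^\dagger(\omega^{-1}v)$, establishing the second limit.

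The strict monotonicity when $\eta<0$ is the place I expect the main technical obstacle. Applying the Feynman-Hellmann theorem to the analytic family $F_\eta(gv,\omega)$ and using the identity $W(g\omega^{-1}v,1)\varphi(v)W(g\omega^{-1}v,1)^* = \varphi(v) - 2g\lVert\omega^{-1/2}v\rVert^2$ reduces the claim to
\begin{equation*}
\frac{d}{dg}\bigl(\cE_\eta(gv,\omega) + g^2\lVert\omega^{-1/2}v\rVert^2\bigr) = \langle\widetilde{\psi}_g, \varphi(v)\widetilde{\psi}_g\rangle > 0.
\end{equation*}
This positivity should follow from combining the eigenequation $(d\Gamma(\omega)-\widetilde{\mathcal{E}}_g)\widetilde{\psi}_g = -\eta W(2g\omega^{-1}v,-1)\widetilde{\psi}_g$ with the strict upper bound $\widetilde{\mathcal{E}}_g \leq \eta e^{-2g^2\lVert\omega^{-1}v\rVert^2} < 0$ obtained by using $\Omega$ as a trial vector, perhaps via a Perron--Frobenius style positivity argument in the coherent-state basis. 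Carrying out this positivity argument carefully is where I anticipate the bulk of the remaining work.
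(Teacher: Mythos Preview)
Your approach to everything except the final monotonicity claim is essentially the paper's: apply Theorem~\ref{Centralconv} with $v_g=gv$ (your choice $\widetilde m\in(0,m)$ is in fact cleaner than the paper's $\widetilde m=m$, which can fail if $m$ is an eigenvalue of $\omega$), read off norm resolvent convergence of $\widetilde F_g$ to $d\Gamma(\omega)$, use the spectral gap at $0$ to extract a simple eigenvalue and convergence of spectral projections, fix phases so that $\widetilde\psi_g\to\Omega$, and deduce the three limits via the Weyl conjugation identities. Your observation about the missing $1/2$ in the exponent is correct and is a typo in the statement. One small omission: to get $\widetilde\psi_g\to\Omega$ (rather than some unimodular multiple of $\Omega$) you need the phase-fixing step $\langle\Omega,\widetilde\psi_g\rangle>0$, which the paper carries out explicitly and you should too.

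The monotonicity part is where there is a real gap. Your plan is to show $\langle\widetilde\psi_g,\varphi(v)\widetilde\psi_g\rangle>0$ via a Perron--Frobenius argument in the coherent-state basis, but it is not clear how the eigenequation for $\widetilde F_g$ or the variational bound $\widetilde{\cE}_g<0$ yields any sign information on $\langle\widetilde\psi_g,\varphi(v)\widetilde\psi_g\rangle$; the operator $W(2g\omega^{-1}v,-1)$ destroys the standard positivity-preserving structure, and no workable cone is suggested. The paper does \emph{not} transform before differentiating: it works directly with $\psi_g$, passes to an $L^2$ realisation of $\cH$ so that $\omega$ is a multiplication operator, and invokes the pull-through formula
\[
a(k)\psi_g=-gv(k)\bigl(F_{-\eta}(gv,\omega)-\cE_\eta(gv,\omega)+\omega(k)\bigr)^{-1}\psi_g
\]
(see equation~(\ref{eq:pull1})). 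Feynman--Hellmann then gives
\[
\frac{d}{dg}\cE_\eta(gv,\omega)=2\,\mathrm{Re}\langle\psi_g,a(v)\psi_g\rangle
=-2g\int|v(k)|^2\,\bigl\lVert\bigl(F_{-\eta}-\cE_\eta+\omega(k)\bigr)^{-1/2}\psi_g\bigr\rVert^2\,dk.
\]
Since $\eta<0$ and $m>0$, Theorem~\ref{Thm:Spectral Theory of decomposition} yields $\cE_{-\eta}(gv,\omega)>\cE_\eta(gv,\omega)$, hence $F_{-\eta}-\cE_\eta+\omega(k)>\omega(k)$ strictly, so the integrand is strictly below $|v(k)|^2\omega(k)^{-1}$ and the derivative is strictly above $-2g\lVert\omega^{-1/2}v\rVert^2$. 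This pull-through computation is the missing idea; replace your Perron--Frobenius sketch with it and the proof is complete (and in fact valid for all $g\ge 0$, not just $g\ge g_0$, since Theorem~\ref{Thm:Spectral Theory of decomposition} guarantees the ground state exists for all $g$ when $m>0$).
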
 

\begin{corollary}\label{KorGSTotal}
	Let $v\in  \cH$ and $\eta\in \RR$. If $m>0$ there is $g_0>0$ such that $H_\eta(gv,\omega)$ has an exited state with energy $\widetilde{E}_\eta(gv,\omega)$ for $g>g_0$. Furthermore
	\begin{equation*}
	\lim_{g\rightarrow \infty} ( E_\eta(gv,\omega)-\widetilde{E}_\eta(gv,\omega) )=0.
	\end{equation*}
\end{corollary}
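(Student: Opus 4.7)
The plan is to reduce everything to the fiber decomposition of Theorem \ref{Thm:Spectral Theory of decomposition} and then read off the two fiber ground state energies from Corollary \ref{KorGSFiber}. By part (2) of that theorem we have $V H_\eta(gv,\omega) V^* = F_{-\eta}(gv,\omega) \oplus F_\eta(gv,\omega)$, so
\[
\sigma(H_\eta(gv,\omega)) = \sigma(F_{-\eta}(gv,\omega)) \cup \sigma(F_\eta(gv,\omega)),
\]
and part (3) identifies the ground state energy as $E_\eta(gv,\omega)=\cE_{-|\eta|}(gv,\omega)$. The case $\eta=0$ is trivial, since then $\cE_{-|\eta|}=\cE_{|\eta|}$ and one may take $\widetilde{E}_0(gv,\omega)=E_0(gv,\omega)$, so I focus on $\eta\neq 0$.

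Next I would propose the candidate excited eigenvalue
\[
\widetilde{E}_\eta(gv,\omega) := \cE_{|\eta|}(gv,\omega),
\]
i.e.\ the ground state energy of the \emph{other} fiber $F_{|\eta|}(gv,\omega)$. By Corollary \ref{KorGSFiber} applied with parameter $|\eta|>0$ there is $g_0>0$ such that for $g>g_0$ this quantity is a non-degenerate eigenvalue of $F_{|\eta|}(gv,\omega)$, hence an eigenvalue of $H_\eta(gv,\omega)$. Moreover, since $m=m(\omega)>0$ and $\eta\neq 0$, Theorem \ref{Thm:Spectral Theory of decomposition}(3) gives $\cE_{|\eta|}(gv,\omega) > \cE_{-|\eta|}(gv,\omega) = E_\eta(gv,\omega)$, so the eigenvalue sits strictly above the ground state, as an excited state should.

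The key nontrivial point is to verify that $\widetilde{E}_\eta(gv,\omega)$ lies \emph{below} the essential spectrum of $H_\eta(gv,\omega)$ for $g$ large. By Theorem \ref{Thm:Spectral Theory of decomposition}(3),
\[
\inf \sigma_{\textup{ess}}(H_\eta(gv,\omega)) = E_\eta(gv,\omega) + m_{\textup{ess}}(\omega) = \cE_{-|\eta|}(gv,\omega) + m_{\textup{ess}}(\omega),
\]
with $m_{\textup{ess}}(\omega)\geq m >0$. So the inequality I need is $\cE_{|\eta|}(gv,\omega) - \cE_{-|\eta|}(gv,\omega) < m_{\textup{ess}}(\omega)$. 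Applying Corollary \ref{KorGSFiber} with both parameters $|\eta|$ and $-|\eta|$ gives $\cE_{\pm|\eta|}(gv,\omega) + g^2 \lVert \omega^{-1/2} v\lVert^2 \to 0$ as $g\to\infty$, so
\[
\cE_{|\eta|}(gv,\omega) - \cE_{-|\eta|}(gv,\omega) \xrightarrow[g\to\infty]{} 0 < m_{\textup{ess}}(\omega),
\]
which is what I want. This same convergence immediately yields the final claim
\[
E_\eta(gv,\omega) - \widetilde{E}_\eta(gv,\omega) = \cE_{-|\eta|}(gv,\omega) - \cE_{|\eta|}(gv,\omega) \xrightarrow[g\to\infty]{} 0.
\]

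There is no real obstacle here provided one believes Theorem \ref{Thm:Spectral Theory of decomposition} and Corollary \ref{KorGSFiber}; the only point requiring the slightest care is the essential-spectrum bound, where one must observe that the two fiber ground state energies share the same leading asymptotics $-g^2\lVert \omega^{-1/2}v\lVert^2$, so their difference vanishes while $m_{\textup{ess}}(\omega)$ remains a fixed positive constant.
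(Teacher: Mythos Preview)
Your proof is correct and follows essentially the same route as the paper: apply the fiber decomposition of Theorem \ref{Thm:Spectral Theory of decomposition}, invoke Corollary \ref{KorGSFiber} for each fiber $F_{\pm\eta}$, and use the common asymptotic $\cE_{\pm\eta}(gv,\omega)+g^2\lVert\omega^{-1/2}v\rVert^2\to 0$ to place both fiber ground state energies in the mass gap with vanishing gap between them. Your write-up is in fact slightly more explicit than the paper's in separating the case $\eta=0$ and in spelling out the essential-spectrum inequality $\cE_{|\eta|}-\cE_{-|\eta|}<m_{\textup{ess}}(\omega)$.
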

\begin{corollary}\label{KorUVREN}
	Assume $\cH=L^2(\cM,\cF,\mu)$ and $\omega$ is a multiplication operator on this space. Let $v:\cM \rightarrow \CC$ is measurable and $\{ \chi_g \}_{g\in (0,\infty)}$ be a collection of functions from $\RR$ into $[0,1]$. Assume $g\mapsto \chi_g(x)$ is increasing and converges to 1 for all $x\in \RR$. Assume furthermore that $k\mapsto \chi_g(\omega(k))v(k)\in \cD(\omega^{-1/2})$ and that there is $\widetilde{m}>0$ such that $\widetilde{v}:=1_{\{ \omega\leq \widetilde{m} \}}v\in \cD(\omega^{-1/2})$. If $k\mapsto  \omega(k)^{-1}v(k)1_{\{\omega>1\}}(k)\notin \cH $ there are unitary maps $\{U_g\}_{g\in (0,\infty)}$ and $\{V_g\}_{g\in (0,\infty)}$ independent of $\eta$ such that:
		\begin{enumerate}
		\item[\textup{(1)}] $\{ U_gF_{\eta}(v_g,\omega)U_g^*+\lVert \omega^{-1/2}1_{\{\omega>\widetilde{m} \}}v_g\lVert^2\}_{g\in (0,\infty)}$ converges in norm resolvent sense to the operator $d\Gamma(\omega)+\varphi(\widetilde{v})$ as $g$ tends to infinity.
		
		\item[\textup{(2)}] $\{V_gH_{\eta}(v_g,\omega)V_g^*+\lVert \omega^{-1/2}1_{\{\omega>\widetilde{m} \}}v_g\lVert^2\}_{g\in (0,\infty)}$ is uniformly bounded below and converges in norm  resolvent sense to the operator
		\begin{align*}
		\widetilde{H}:=(d\Gamma(\omega)+\varphi(\widetilde{v}))\oplus(d\Gamma(\omega)+\varphi(\widetilde{v}))
		\end{align*}
		as $g$ tends to $\infty$. This implies
				\begin{equation*}
				(H_{\eta}(v_g,\omega)+\lVert \omega^{-1/2}1_{\{\omega>\widetilde{m} \}}v_g\lVert^2+i)^{-1}- (H_{0}(v_g,\omega)+\lVert \omega^{-1/2} 1_{\{\omega>\widetilde{m} \}} v_g\lVert^2+i)^{-1}
				\end{equation*}
				will converge to 0 in norm as $g$ tends to $\infty$.
		\end{enumerate}

\end{corollary}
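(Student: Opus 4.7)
The plan is to reduce part~(1) to Theorem~\ref{Centralconv} applied with $v_g = \chi_g(\omega) v$, and then to lift to part~(2) via the spin--parity decomposition of Theorem~\ref{Thm:Spectral Theory of decomposition}(2). First I would verify the two hypotheses of Theorem~\ref{Centralconv}. For hypothesis~(1), $\overline{P}_{\widetilde{m}} v_g = 1_{\{\omega \leq \widetilde{m}\}} \chi_g(\omega) v$ is dominated pointwise by $|\widetilde{v}|$, while $\omega^{-1/2} \overline{P}_{\widetilde{m}} v_g$ is dominated by $|\omega^{-1/2} \widetilde{v}|$; both dominants lie in $L^2$ by the standing assumption $\widetilde{v} \in \cD(\omega^{-1/2})$. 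Dominated convergence then yields convergence to $\widetilde{v}$ in the graph norm of $\omega^{-1/2}$. For hypothesis~(2), $|\omega^{-1} P_{\widetilde{m}} v_g|^2$ increases pointwise in $g$ to $\omega^{-2} 1_{\{\omega > \widetilde{m}\}} |v|^2$, so by monotone convergence it suffices to check that $\omega^{-1} 1_{\{\omega > \widetilde{m}\}} v \notin \cH$.

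This last verification is the only mildly delicate point, since the hypothesis is phrased with the cutoff~$1$ rather than $\widetilde{m}$. If $\widetilde{m} \leq 1$ then $1_{\{\omega > 1\}} \leq 1_{\{\omega > \widetilde{m}\}}$ pointwise and the claim is immediate. If $\widetilde{m} > 1$ then on $\{1 < \omega \leq \widetilde{m}\}$ one has $\omega^{-1} \leq \omega^{-1/2}$, so $\omega^{-1} 1_{\{1 < \omega \leq \widetilde{m}\}} v \in \cH$ (dominated by $|\omega^{-1/2} \widetilde{v}|$), and the claim follows by subtracting this piece from $\omega^{-1} 1_{\{\omega > 1\}} v \notin \cH$. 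With both hypotheses in place, part~(1) is immediate: set $U_g := W(\omega^{-1} P_{\widetilde{m}} v_g, 1)$, which is unitary and manifestly independent of $\eta$. The identity~\eqref{eq:LIGHED} then gives
\begin{equation*}
U_g F_\eta(v_g, \omega) U_g^* + \|\omega^{-1/2} 1_{\{\omega > \widetilde{m}\}} v_g\|^2 = \widetilde{F}_{\eta, \widetilde{m}}(v_g, \omega),
\end{equation*}
and Theorem~\ref{Centralconv} provides norm resolvent convergence to $d\Gamma(\omega) + \varphi(\widetilde{v})$.

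For part~(2), I would apply Theorem~\ref{Thm:Spectral Theory of decomposition}(2) to decompose $V H_\eta(v_g, \omega) V^* = F_{-\eta}(v_g, \omega) \oplus F_\eta(v_g, \omega)$, and then set $V_g := (U_g \oplus U_g) V$, which is once more $\eta$-independent. A direct computation gives
\begin{equation*}
V_g H_\eta(v_g, \omega) V_g^* + \|\omega^{-1/2} 1_{\{\omega > \widetilde{m}\}} v_g\|^2 = \widetilde{F}_{-\eta, \widetilde{m}}(v_g, \omega) \oplus \widetilde{F}_{\eta, \widetilde{m}}(v_g, \omega),
\end{equation*}
and both summands converge in norm resolvent sense to $d\Gamma(\omega) + \varphi(\widetilde{v})$ by part~(1), so the direct sum converges in norm resolvent sense to $\widetilde{H}$. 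The uniform lower bound transfers directly from the $-|\eta| - \sup_g \|\omega^{-1/2} \overline{P}_{\widetilde{m}} v_g\|^2$ bound furnished by Theorem~\ref{Centralconv}; this supremum is finite because $\{\omega^{-1/2} \overline{P}_{\widetilde{m}} v_g\}$ is convergent in $\cH$. Finally, since $V_g$ does not depend on $\eta$, conjugation commutes with the difference of the two resolvents, reducing the last displayed claim of the corollary to the difference of resolvents of $\widetilde{F}_{-\eta, \widetilde{m}} \oplus \widetilde{F}_{\eta, \widetilde{m}}$ and $\widetilde{F}_{0, \widetilde{m}} \oplus \widetilde{F}_{0, \widetilde{m}}$ (shifted by the same constant); both terms tend in norm to $(\widetilde{H} + i)^{-1}$, so their difference tends to zero. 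Beyond the short case split needed to upgrade the stated hypothesis from cutoff $1$ to cutoff $\widetilde{m}$, the proof is essentially assembly of Theorem~\ref{Centralconv} and Theorem~\ref{Thm:Spectral Theory of decomposition}.
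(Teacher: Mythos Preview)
Your proposal is correct and follows essentially the same route as the paper: apply Theorem~\ref{Centralconv} with $U_g = W(\omega^{-1}P_{\widetilde m}v_g,1)$ for part~(1), then use the spin--parity decomposition $V$ from Theorem~\ref{Thm:Spectral Theory of decomposition}(2) and set $V_g$ to be the composition of $V$ with $U_g\oplus U_g$ for part~(2). You are in fact more careful than the paper, which merely asserts that part~(1) follows from Theorem~\ref{Centralconv} without spelling out the dominated/monotone convergence arguments or the case split between $\widetilde m\leq 1$ and $\widetilde m>1$ needed to pass from the stated hypothesis $\omega^{-1}1_{\{\omega>1\}}v\notin\cH$ to the hypothesis $\omega^{-1}1_{\{\omega>\widetilde m\}}v\notin\cH$ actually required by Theorem~\ref{Centralconv}.
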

To prove a result similar to Corollary $\ref{KorGSFiber}$ in the massless case one needs to work a bit harder. First we shall need
\begin{theorem}\label{BasicpropGS}
	Assume $\cH=L^2(\cM,\cK,\nu)$ and $\omega$ is multiplication by a measurable function. Let $v\in  \cD(\omega^{-1/2})$, $g\in (0,\infty)$ and $\eta\leq 0$. Assume that $F_{\eta}(gv,\omega)$ has a ground state $\psi_{g,\eta}=( \psi_{g,\eta}^{(n)} )$. Then
	\begin{enumerate}
		\item[\textup{(1)}] We may choose $\psi_{g,\eta }$ such that $\psi_{g,\eta }^{(0)}>0$ and $(-1)^n\overline{v}^{\otimes n}\psi_{g,\eta }^{(n)}>0$ almost everywhere on $\{ v\neq 0 \}^n$.

		\item[\textup{(2)}] Almost everywhere the following inequality holds
		\begin{equation*}
		\lvert \psi_{g,\eta }^{(n)}(k_1,\dots,k_n)\lvert \leq \frac{g^n}{\sqrt{n!}} \frac{\lvert v(k_1)\lvert \cdots\lvert v(k_n)\lvert }{\omega(k_1)\cdots\omega(k_n)}.
		\end{equation*}
		In particular, $\psi_{g,\eta }^{(n)}$ is zero outside $\{ v\neq 0 \}^n$ almost everywhere.
		
		\item[\textup{(3)}] Assume $v\in \cD(\omega^{-1})$, $f:\NN_0\rightarrow [0,\infty)$ is a function and assume $F_{\eta}(gv,\omega)$ has a ground state for all $\eta\leq 0$. Then $H_{a}(gv,\omega)$ has a ground state $\phi_{g,a}$ for all $a\in \RR$ and we have
		\begin{align*}
		\alpha_{g,f,v,\omega}:=\sum_{n=0}^{\infty} \frac{f(n)^2g^{2n} \lVert \omega^{-1}v \lVert^{2n}  }{n!}<\infty &\iff \psi_{g,\eta}\in \cD(f(N)) \,\,\, \forall \eta\leq 0\\&\iff \phi_{g,a}\in \cD(1\otimes f(N)) \,\,\, \forall a\in \RR
		\end{align*}
		In particular, $\psi_{g,\eta}\in \cD(\sqrt[p]{N!} )$ and $\phi_{g,\eta}\in \cD(1\otimes \sqrt[p]{N!} )$ for all $p>2$.
		\end{enumerate}
\end{theorem}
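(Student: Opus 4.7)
The plan is to prove the three parts in sequence: Part~(1) via a Perron--Frobenius argument on a gauge-transformed Hamiltonian, Part~(2) via a pull-through formula, and Part~(3) by combining the pointwise bound from Part~(2) with the explicit Weyl form of the $\eta=0$ ground state. I expect Part~(1) to be the most delicate step, since it asks for \emph{strict} positivity of the transformed ground state almost everywhere on $\{|v|\neq 0\}^n$ rather than merely nonnegativity.

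For Part~(1) I first reduce to the case of a real nonpositive coupling by a gauge transformation in Fock space. Let $u\in L^\infty(\cM,\cK,\nu)$ be the unitary multiplier with $u(k)=-\overline{v(k)}/|v(k)|$ on $\{v\neq 0\}$ and $u(k)=1$ elsewhere. Then $uv=-|v|$, $\Gamma(u)$ commutes with the multiplication operators $(-1)^N$ and $d\Gamma(\omega)$, and $\Gamma(u)\varphi(gv)\Gamma(u)^*=\varphi(guv)=-g\varphi(|v|)$, so $\Gamma(u)F_\eta(gv,\omega)\Gamma(u)^*=F_\eta(-g|v|,\omega)$. It therefore suffices to exhibit a ground state $\widetilde{\psi}$ of $F_\eta(-g|v|,\omega)$ with $\widetilde{\psi}^{(n)}\geq 0$ a.e.: transforming back gives $\psi_{g,\eta}^{(n)}=(-v/|v|)^{\otimes n}\widetilde{\psi}^{(n)}$ on $\{v\neq 0\}^n$, hence $(-1)^n\overline{v}^{\otimes n}\psi_{g,\eta}^{(n)}=|v|^{\otimes n}\widetilde{\psi}^{(n)}$. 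To obtain nonnegativity of $\widetilde{\psi}$ I split $F_\eta(-g|v|,\omega)=(\eta(-1)^N+d\Gamma(\omega))-g\varphi(|v|)$: the first piece generates a semigroup acting as multiplication by nonnegative functions on each $n$-particle sector (using $\eta\leq 0$ and $\omega\geq 0$), while $e^{tg\varphi(|v|)}$ preserves the cone of sectorwise nonnegative Fock vectors because $|v|\geq 0$ makes $a(|v|)$ and $a^\dagger(|v|)$ send nonnegative components to nonnegative ones. Trotter then gives positivity preservation of $e^{-tF_\eta(-g|v|,\omega)}$, and together with the nondegeneracy from Theorem~\ref{Thm:Spectral Theory of decomposition}(3) when $\eta\neq 0$ (the $\eta=0$ case is handled directly by the explicit Weyl form below) a Perron--Frobenius argument promotes $\widetilde{\psi}^{(n)}\geq 0$ to $\widetilde{\psi}^{(n)}>0$ a.e.\ on $\{|v|\neq 0\}^n$, which is exactly the sign condition claimed.

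For Part~(2), the key input is the pull-through formula. A short commutator computation yields $[a(k),F_\eta(gv,\omega)]=-2\eta\Gamma(-1)a(k)+\omega(k)a(k)+gv(k)$, so applying $a(k)$ to the eigenequation $F_\eta\psi_{g,\eta}=\cE_\eta\psi_{g,\eta}$ gives
\begin{equation*}
\bigl(F_{-\eta}(gv,\omega)+\omega(k)-\cE_\eta\bigr)\,a(k)\psi_{g,\eta}=-gv(k)\,\psi_{g,\eta}.
\end{equation*}
By Theorem~\ref{Thm:Spectral Theory of decomposition}(3), $\cE_\eta=\cE_{-|\eta|}\leq\cE_{|\eta|}=\inf\sigma(F_{-\eta}(gv,\omega))$, so $F_{-\eta}+\omega(k)-\cE_\eta\geq\omega(k)$ is invertible a.e.\ with norm bound $\omega(k)^{-1}$, giving $\|a(k)\psi_{g,\eta}\|\leq g|v(k)|\omega(k)^{-1}$ for normalised $\psi_{g,\eta}$. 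Iterating, each further $a(k_j)$ alternates $F_\eta\leftrightarrow F_{-\eta}$ and introduces a resolvent bounded by the inverse of a partial sum of the $\omega(k_i)$'s. The resulting tree-like expansion for $(a(k_1)\cdots a(k_n)\psi_{g,\eta})^{(0)}$ is bounded in absolute value by $g^n|v(k_1)|\cdots|v(k_n)|\sum_{\sigma\in\mathfrak{S}_n}\prod_{j=1}^n(\omega(k_{\sigma(1)})+\cdots+\omega(k_{\sigma(j)}))^{-1}$, and the identity
\begin{equation*}
\sum_{\sigma\in\mathfrak{S}_n}\prod_{j=1}^n\bigl(\omega(k_{\sigma(1)})+\cdots+\omega(k_{\sigma(j)})\bigr)^{-1}=\prod_{j=1}^n\omega(k_j)^{-1}
\end{equation*}
collapses this to $g^n\prod_j|v(k_j)|\omega(k_j)^{-1}$. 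Dividing by $\sqrt{n!}$ through $\psi_{g,\eta}^{(n)}(k_1,\dots,k_n)=(n!)^{-1/2}(a(k_1)\cdots a(k_n)\psi_{g,\eta})^{(0)}$ gives the stated pointwise bound; the vanishing of $\psi_{g,\eta}^{(n)}$ outside $\{v\neq 0\}^n$ is then immediate.

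For Part~(3), the direction $\alpha_{g,f,v,\omega}<\infty\Rightarrow\psi_{g,\eta}\in\cD(f(N))$ is obtained by squaring and summing the bound of Part~(2): $\sum_n f(n)^2\|\psi_{g,\eta}^{(n)}\|^2\leq\sum_n f(n)^2 g^{2n}\|\omega^{-1}v\|^{2n}/n!=\alpha_{g,f,v,\omega}$, well defined because $v\in\cD(\omega^{-1})$. For the converse I specialise to $\eta=0$: since $v\in\cD(\omega^{-1})$, the Weyl translation gives
\begin{equation*}
W(-g\omega^{-1}v,1)^*F_0(gv,\omega)W(-g\omega^{-1}v,1)=d\Gamma(\omega)-g^2\|\omega^{-1/2}v\|^2,
\end{equation*}
so the unique ground state is $\psi_{g,0}=e^{-g^2\|\omega^{-1}v\|^2/2}\epsilon(-g\omega^{-1}v)$ with $\|\psi_{g,0}^{(n)}\|^2=e^{-g^2\|\omega^{-1}v\|^2}g^{2n}\|\omega^{-1}v\|^{2n}/n!$; hence $\psi_{g,0}\in\cD(f(N))$ forces $\alpha_{g,f,v,\omega}<\infty$. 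The equivalence for $\phi_{g,a}$ follows from Theorem~\ref{Thm:Spectral Theory of decomposition}(4) since $V$ commutes with $1\otimes N$ (Theorem~\ref{Thm:Spectral Theory of decomposition}(2) applied to $\omega=1$) and maps ground states of $H_a(gv,\omega)$ to $e_{\pm 1}\otimes\psi_{g,-|a|}$ (or a sum of such when $a=0$), so $\phi_{g,a}\in\cD(1\otimes f(N))\iff\psi_{g,-|a|}\in\cD(f(N))$, which ranges over all $\eta\leq 0$ as $a$ ranges over $\RR$. Finally, $\psi_{g,\eta}\in\cD(\sqrt[p]{N!})$ for $p>2$ comes from applying the equivalence with $f(n)=(n!)^{1/p}$: the series $\sum_n g^{2n}\|\omega^{-1}v\|^{2n}/(n!)^{1-2/p}$ converges since $1-2/p>0$.
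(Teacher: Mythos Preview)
Your overall strategy matches the paper's: gauge-transform to make the coupling real nonpositive and argue via Perron--Frobenius for (1); use the pull-through formula for (2); combine the pointwise bound with the explicit $\eta=0$ coherent-state ground state for (3). Parts (2) and (3) are correct and essentially the paper's argument, just reorganised: the paper invokes a multi-$k$ pull-through from \cite{Thomas1} and inducts on $n$ via $\sum_i\omega(k_i)/(\omega(k_1)+\cdots+\omega(k_n))=1$, whereas you iterate the single-$k$ formula and invoke the permutation identity $\sum_{\sigma}\prod_j(\omega(k_{\sigma(1)})+\cdots+\omega(k_{\sigma(j)}))^{-1}=\prod_j\omega(k_j)^{-1}$; these are equivalent bookkeepings of the same estimate.

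The gap is in Part (1). Your Trotter argument gives positivity \emph{preservation} of $e^{-tF_\eta(-g|v|,\omega)}$, and together with nondegeneracy that yields $\widetilde\psi^{(n)}\geq 0$ a.e.\ (take $\psi$ real, then $|\psi|$ is also a minimiser, so $\psi=\pm|\psi|$). But the statement asks for \emph{strict} positivity $\widetilde\psi^{(n)}>0$ a.e.\ on $\{|v|\neq 0\}^n$, and for that you need positivity \emph{improving} (equivalently, ergodicity on the support of $v$), which the Trotter splitting does not supply by itself. Your appeal to nondegeneracy from Theorem~\ref{Thm:Spectral Theory of decomposition}(3) does not help here and is partly circular: in this context nondegeneracy is a \emph{consequence} of positivity improving, not an independent input, and in any case that theorem asserts nondegeneracy only under the additional hypothesis $m(\omega)>0$, which Theorem~\ref{BasicpropGS} does not assume.

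The paper fills this gap explicitly in Lemma~\ref{POS}: it expands $(H-\lambda)^{-1}$ as a Neumann series in $-g\varphi(v)(T-\lambda)^{-1}$, checks that every term maps $\cC_+$ into $\cC_+$, and then, given nonzero $u,w\in\cC_+$ with nonvanishing components in sectors $n_1,n_2$, isolates the term containing $(-ga(v)(T-\lambda)^{-1})^{n_1}$ and $(-ga^\dagger(v)(T-\lambda)^{-1})^{n_2}$ to produce a strictly positive scalar contribution to $\langle u,(H-\lambda)^{-1}w\rangle$ when $v\neq 0$ a.e. This establishes positivity improving (hence both nondegeneracy and strict positivity) in one stroke; the case where $v$ may vanish is then handled by restricting to $\{v\neq 0\}$ via Lemma~\ref{Lem:GSreduced}. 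Your sketch would be completed by supplying an analogous ergodicity argument for the semigroup, e.g.\ by showing that iterated applications of $a^\dagger(|v|)$ and $a(|v|)$ connect any two particle sectors supported in $\{|v|\neq 0\}^n$.
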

This extends the result which was proven using path measures in \cite{Hirokawa2}. Similar point wise estimates can also be found in \cite{Frlich2}. In the last two results we will assume $\cH=L^2(\RR^\nu,\cB(\RR^\nu),\lambda_{\nu})$ where $\lambda_{\nu}$ is the Lebesgue measure. Furthermore, we assume $\omega$ is a multiplication operator on $\cH$.
\begin{theorem}\label{Mssless-Case}
	Let $v\in  \cD(\omega^{-1})$ and $\eta\leq 0$. Then there is a family $\{ \psi_g \}_{g\in \RR }$ of normalised eigenstates for $F_\eta(gv,\omega)$ and
	\begin{align*}
	&\lim\limits_{g\rightarrow \infty} ( \cE_{\eta}(gv,\omega) +g^2\lVert \omega^{-1/2}v \lVert^2 )=0.\\
	&\lim\limits_{g\rightarrow \infty} \frac{ \langle \psi_g,N\psi_g \rangle-g^2\lVert \omega^{-1}v \lVert^2 }{g^2 }=0.
	\end{align*}
\end{theorem}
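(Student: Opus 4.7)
The plan is to dress with $V_g := W(g\omega^{-1}v,1)$, which is well-defined because $v \in \cD(\omega^{-1})$ forces $\omega^{-1}v \in \cH$. By \eqref{eq:LIGHED} (applied formally at $\widetilde m = 0$, justified since $\omega^{-1}v\in\cH$), conjugation gives
\begin{equation*}
V_g F_\eta(gv,\omega) V_g^* + g^2\|\omega^{-1/2}v\|^2 \;=\; \eta\, W(2g\omega^{-1}v,-1) + d\Gamma(\omega) \;=:\; A_g.
\end{equation*}
Setting $\psi_g := V_g^*\phi_g$ for a ground state $\phi_g$ of $A_g$, the task reduces to producing such a $\phi_g$ and proving $\inf\sigma(A_g) = \cE_\eta(gv,\omega)+g^2\|\omega^{-1/2}v\|^2 \to 0$.

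An upper bound $\inf\sigma(A_g)\leq \eta e^{-2g^2\|\omega^{-1}v\|^2}\to 0^-$ is immediate from the trial state $\Omega$, using $W(h,-1)\Omega = e^{-\|h\|^2/2}\epsilon(h)$ and $\langle\Omega,\epsilon(h)\rangle = 1$. For existence of $\phi_g$ and the matching lower bound, I would infrared-truncate: set $v_n := 1_{\{\omega > 1/n\}}v$ and decompose $\cH = P_n\cH\oplus P_n^\perp\cH$ with $P_n = 1_{\{\omega > 1/n\}}$. In the factorisation $\cF_b(\cH)\simeq\cF_b(P_n\cH)\otimes\cF_b(P_n^\perp\cH)$, further split by parity of the second factor, $F_\eta(gv_n,\omega)$ block-diagonalises into blocks of the form $F_{\pm\eta}(gv_n,\omega|_{P_n\cH})\otimes I + I\otimes d\Gamma(\omega|_{P_n^\perp\cH})$. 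Theorem \ref{Thm:Spectral Theory of decomposition}(3) (with $\eta\leq 0$) identifies the infimum as $\cE_\eta(gv_n,\omega|_{P_n\cH})$; since $\omega|_{P_n\cH}\geq 1/n > 0$ is massive, Corollary \ref{KorGSFiber} produces an eigenstate $\psi_g^n\otimes\Omega_{P_n^\perp}$ and the asymptotic $\cE_\eta(gv_n,\omega)+g^2\|\omega^{-1/2}v_n\|^2\to 0$ as $g\to\infty$ for each fixed $n$.

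To relate $v_n$ to $v$, test $\psi_g^n\otimes\Omega_{P_n^\perp}$ against $F_\eta(gv,\omega)=F_\eta(gv_n,\omega)+g\varphi(v-v_n)$: the cross term vanishes because $v-v_n\in P_n^\perp\cH$ and so $\varphi(v-v_n)\Omega_{P_n^\perp}$ lives in the one-particle sector of $\cF_b(P_n^\perp\cH)$, yielding $\cE_\eta(gv,\omega)\leq\cE_\eta(gv_n,\omega)$. Conversely, for a ground state $\psi$ of $F_\eta(gv,\omega)$, the Weyl-type bound $\|(d\Gamma(\omega)+1)^{1/2}\psi\|=O(g)$ together with Lemma \ref{Lem:FundamentalIneq} gives $|g\langle\psi,\varphi(v-v_n)\psi\rangle|=O(g^2\,n^{-1/2})$, and hence $\cE_\eta(gv,\omega)\geq\cE_\eta(gv_n,\omega)-O(g^2/\sqrt{n})$. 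Combined with $\|\omega^{-1/2}(v-v_n)\|^2\leq n^{-1}\|\omega^{-1}v\|^2$ (from $v\in\cD(\omega^{-1})$) and a diagonal extraction $n=n(g)\to\infty$ fast enough that both $g^2/\sqrt{n(g)}\to 0$ and Corollary \ref{KorGSFiber}'s truncated asymptotic stays $o(1)$, the energy convergence follows. For the $N$-expectation, the Weyl identity $V_g N V_g^* = N - g\varphi(\omega^{-1}v)+g^2\|\omega^{-1}v\|^2$ gives
\begin{equation*}
\langle\psi_g,N\psi_g\rangle - g^2\|\omega^{-1}v\|^2 = \langle\phi_g, N\phi_g\rangle - g\langle\phi_g,\varphi(\omega^{-1}v)\phi_g\rangle,
\end{equation*}
and since $\varphi(\omega^{-1}v)$ is $N^{1/2}$-bounded (as $\omega^{-1}v\in\cH$), dividing by $g^2$ reduces the claim to $\langle\phi_g,(N+1)\phi_g\rangle = o(g^2)$, which propagates from the analogous uniform-in-$g$ bound on the truncated ground states $\psi_g^n$ supplied by Corollary \ref{KorGSFiber}.

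The main obstacle is the diagonal synchronisation of $g$ and $n(g)$: the convergence rate in Corollary \ref{KorGSFiber} depends on the effective mass $1/n$, and producing a lower bound better than the trivial $-|\eta|$ requires that this rate degrade slowly enough in $n$ to allow $n(g)\gg g^4$. A standard subsequence extraction yields some $n(g)\to\infty$ with the truncated error going to zero, but matching it against the infrared residue $g^2/\sqrt{n(g)}$ needs a quantitative grip on how the rate in Corollary \ref{KorGSFiber} depends on the mass.
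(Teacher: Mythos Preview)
Your diagonal extraction is indeed the sticking point, and the gap is real: Corollary~\ref{KorGSFiber} gives no control on how its threshold $g_0$ depends on the mass $1/n$, and the proof of the underlying Theorem~\ref{Centralconv} (via the discretisation in Lemma~\ref{MainTech}, whose error scales like $2^{-k}/\widetilde m$) visibly degrades as $\widetilde m\to 0$. Without that control you cannot choose $n(g)$ so that the truncated asymptotic stays $o(1)$ while simultaneously $g^{2}/\sqrt{n(g)}\to 0$. The same obstruction contaminates your $N$-argument: in the massless case the uniform bound $\langle\phi_g,d\Gamma(\omega)\phi_g\rangle\leq 2\lvert\eta\rvert$ does \emph{not} give $\langle\phi_g,N\phi_g\rangle=o(g^{2})$ (there is no mass gap to convert $d\Gamma(\omega)$ into $N$), and ``propagating'' the bound from the truncated states $\psi_g^{n}$ inherits the same synchronisation problem.

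The paper sidesteps truncation entirely. Existence of $\psi_g$ is taken directly from \cite{Thomas1}, so one works with the genuine massless ground state from the start. The key new input is Lemma~\ref{CentralResMassless}: your dressing identity combined with the pull-through formula yields
\[
\langle\widetilde\psi_g,d\Gamma(\omega)\widetilde\psi_g\rangle
= g^{2}\!\int\omega(k)\,\lvert v(k)\rvert^{2}\,\bigl\lVert (F_{-\eta,g}-\cE_{\eta,g}+\omega(k))^{-1}\psi_g-\omega(k)^{-1}\psi_g\bigr\rVert^{2}\,dk,
\]
and the trivial bound $\langle\widetilde\psi_g,d\Gamma(\omega)\widetilde\psi_g\rangle\leq\lvert\eta\rvert$ forces this integral to vanish, hence on subsequences the integrand vanishes a.e. That a.e.\ convergence is then fed into a short contradiction argument using the spectral measure of $F_{-\eta,g}-\cE_{\eta,g}$ to conclude $\langle\psi_g,\Gamma(-1)\psi_g\rangle\to 0$, after which the energy limit follows from $\lvert\cE_{\eta,g}+g^{2}\lVert\omega^{-1/2}v\rVert^{2}\rvert\leq 2\lvert\eta\rvert\langle\psi_g,\Gamma(-1)\psi_g\rangle$. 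For the $N$-expectation the paper never passes through your reduction: it writes $\langle\psi_g,N\psi_g\rangle=g^{2}\!\int\lvert v\rvert^{2}\lVert(F_{-\eta,g}-\cE_{\eta,g}+\omega(k))^{-1}\psi_g\rVert^{2}\,dk$ directly from pull-through and compares termwise with $g^{2}\lVert\omega^{-1}v\rVert^{2}=g^{2}\!\int\lvert v\rvert^{2}\omega(k)^{-2}\,dk$ by dominated convergence, using the a.e.\ convergence already established.
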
 
\noindent The following is a simple criterion for the existence of an exited state in the massive Spin-Boson model.  
\begin{theorem}\label{Dim1+2}
	Assume $m>0$ and
	\begin{equation}\label{eq:Assumption}
	\int_{\RR^\nu}\frac{\lvert v(k)\lvert^2}{\omega(k)-m}dk=\infty.
	\end{equation}
	Then both $F_{ \eta}(v,\omega)$ and $F_{ -\eta }(v,\omega)$ have a ground state and $H_\eta(gv,\omega)$ will have an excited state. The condition is satisfied if $\omega \in C^2(\RR^\nu,\RR)$, $\nu\leq 2$ and there is $x_0\in \RR^\nu$ such that $\omega(x_0)=m$ and $\lvert v\lvert $ is bounded from below by a positive number on a ball around $x_0$. This holds for the physical model with $\nu\leq 2$.
\end{theorem}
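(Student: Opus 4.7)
The plan is to use Theorem \ref{Thm:Spectral Theory of decomposition} to reduce the problem to a single fiber and then build a variational trial state. Since $m>0$, that theorem already provides a ground state $\psi_-$ of $F_{-|\eta|}(v,\omega)$ at energy $\cE_{-|\eta|}=E_\eta$, so the ground state of $H_\eta$ is automatic; an excited state amounts exactly to a ground state of the other fiber $F_{|\eta|}(v,\omega)$. Using $\inf\sigma_{\mathrm{ess}}(F_{|\eta|}) = \cE_{-|\eta|} + m_{\mathrm{ess}}$ and the strict ordering $\cE_{|\eta|}>\cE_{-|\eta|}$ supplied by the same theorem, the task reduces to establishing
\[
\cE_{|\eta|}(v,\omega) \;<\; \cE_{-|\eta|}(v,\omega) + m_{\mathrm{ess}},
\]
which will be done by exhibiting a suitable trial vector.

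After a gauge $\Gamma(U)$ absorbing the phase of $v$, I may assume $v\geq 0$, and I fix $\psi_-$ so that $\psi_-^{(0)}>0$ and $(-1)^n\psi_-^{(n)}\geq 0$ almost everywhere on $\{v\neq 0\}^n$ (Theorem \ref{BasicpropGS}(1)). My trial vector is $\phi_h = a^\dagger(h)\psi_-$ for a nonnegative test function $h$ concentrated near the bottom of $\sigma(\omega)$. Combining $F_{|\eta|} = F_{-|\eta|}+2|\eta|\Gamma(-1)$ with the anticommutation $\Gamma(-1)a^\dagger(h) = -a^\dagger(h)\Gamma(-1)$ and the eigenvalue equation for $\psi_-$, a short commutator computation yields the clean identity
\[
F_{|\eta|}\,\phi_h \;=\; \cE_{-|\eta|}\,\phi_h + a^\dagger(\omega h)\psi_- + \langle v,h\rangle\,\psi_-,
\]
in which all $\eta$-dependence has been absorbed into $\cE_{-|\eta|}$ because $\phi_h$ lives in the opposite $\Gamma(-1)$-parity sector from $\psi_-$.

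Taking the Rayleigh quotient and writing $\omega = m+(\omega-m)$ produces
\[
\frac{\langle\phi_h, F_{|\eta|}\phi_h\rangle}{\|\phi_h\|^2} - \cE_{-|\eta|} \;=\; m + \frac{\langle h,(\omega-m)h\rangle + R(h) + \langle v,h\rangle\,\langle\psi_-, a(h)\psi_-\rangle}{\|h\|^2 + \|a(h)\psi_-\|^2},
\]
with a remainder controlled by $|R(h)|\leq \|(\omega-m)h\|\,\|h\|\,\|N^{1/2}\psi_-\|^2$. Extracting the first Fock component of $F_{-|\eta|}\psi_- = \cE_{-|\eta|}\psi_-$ delivers the pointwise lower bound $|\psi_-^{(1)}(k)| \geq \psi_-^{(0)}\,v(k)/(\omega(k)+|\eta|-\cE_{-|\eta|})$, and by Lemma \ref{Lem:FundamentalIneq} the denominator is uniformly bounded on any region $\{\omega\leq M\}$; combined with the sign structure, one gets $\langle v,h\rangle\,\langle\psi_-, a(h)\psi_-\rangle \leq -c\,|\langle v,h\rangle|^2$ on such regions. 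The assumption $\|(\omega-m)^{-1/2}v\|^2=\infty$ is, by Cauchy--Schwarz duality, precisely the statement $\sup_h |\langle v,h\rangle|^2/\langle h,(\omega-m)h\rangle = \infty$, so along a maximizing sequence $h_n$ the negative cross term dominates $\langle h_n,(\omega-m)h_n\rangle$ and (after the scaling adjustment below) also $R(h_n)$, forcing the quotient strictly below $m\leq m_{\mathrm{ess}}$. The main obstacle is taming $R(h)$ uniformly along the maximizing sequence: the natural optimizer $(\omega-m+\varepsilon)^{-1}v$ does not have $\|(\omega-m)h\|\to 0$ on its own, so a combined cutoff concentrating the mass of $h$ on $\{m\leq\omega\leq m+\varepsilon^{1/2}\}\cap\{v\neq 0\}$ (together with an appropriate rescaling in $\varepsilon$) is required to make the three terms $\langle h,(\omega-m)h\rangle$, $R(h)$, and the cross term balance in the correct order.

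For the final assertion, Taylor expansion of $\omega\in C^2$ around its minimum $x_0$ combined with $\omega\geq m$ gives $\omega(k)-m\leq C|k-x_0|^2$ on a small ball $B_\delta(x_0)$, and hence
\[
\int_{\RR^\nu}\frac{|v(k)|^2}{\omega(k)-m}\,dk \;\geq\; \frac{c^2}{C}\int_{B_\delta(x_0)}\frac{dk}{|k-x_0|^2} \;\sim\; \int_0^\delta r^{\nu-3}\,dr,
\]
which diverges precisely for $\nu\leq 2$. For the physical dispersion $\omega(k)=\sqrt{m^2+|k|^2}$ in dimensions $\nu\leq 2$, the minimum at $k=0$ is non-degenerate and $|v_{g,\Lambda}|$ is bounded below by $g\chi_\Lambda(m)/\sqrt{m}>0$ on a neighborhood of $0$ by the assumptions on $\chi_\Lambda$, so the hypothesis is automatic and the theorem applies.
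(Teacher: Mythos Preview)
Your variational strategy via the trial state $\phi_h=a^\dagger(h)\psi_-$ is different from the paper's route and is in principle viable, but the argument as written does not close. The problematic step is precisely the one you flag as the ``main obstacle'': your bound $|R(h)|\le \|(\omega-m)h\|\,\|h\|\,\|N^{1/2}\psi_-\|^2$ is too crude to force the numerator negative under the hypothesis alone. Concretely, restrict $h$ to $\{\omega\le m+\delta\}$; then $\langle h,(\omega-m)h\rangle$ and your bound on $R(h)$ are both of order $\delta\|h\|^2$, while by Cauchy--Schwarz the cross term satisfies $|\langle v,h\rangle|^2\le \|v\,1_{\{\omega\le m+\delta\}}\|^2\,\|h\|^2$. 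In the borderline case $\nu=2$ with a nondegenerate minimum (so $\omega-m\sim|k-x_0|^2$ and $|v|\sim c>0$ near $x_0$), one has $\|v\,1_{\{\omega\le m+\delta\}}\|^2\sim\delta$, and the competition reduces to $c_0>1+\|N^{1/2}\psi_-\|^2$ for your fixed model-dependent constant $c_0=(\psi_-^{(0)})^2/(M+|\eta|-\cE_{-|\eta|})$. There is no reason this holds (take $|\eta|$ large), so no choice of cutoff or rescaling of $h$ rescues the argument with that $R(h)$ estimate. The fix is to bound $R(h)$ via the pull-through formula for $a(k)\psi_-$ rather than via $N^{1/2}$: writing $a(g)\psi_-=-\int g(k)v(k)(F_{|\eta|}-\cE_{-|\eta|}+\omega(k))^{-1}\psi_-\,dk$ and using $\|(F_{|\eta|}-\cE_{-|\eta|}+\omega(k))^{-1}\|\le 1/m$ gives $|R(h)|\le m^{-2}\langle(\omega-m)h,v\rangle\langle h,v\rangle\le (\delta/m^2)\langle v,h\rangle^2$ on $\{\omega\le m+\delta\}$, which for small $\delta$ is absorbed into the cross term; then $\sup_h\langle v,h\rangle^2/\langle h,(\omega-m)h\rangle=\infty$ finishes the job.

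The paper avoids this balancing altogether by working on the resolvent side. It writes the Feshbach/Schur identity $\langle\Omega,(F_{|\eta|}-\lambda)^{-1}\Omega\rangle=(\,|\eta|-\lambda-\langle v,(\overline{F}_{|\eta|}-\lambda)^{-1}v\rangle\,)^{-1}$, so under the contradiction hypothesis $\cE_{|\eta|}=\cE_{-|\eta|}+m$ the quantity $\langle v,(\overline{F}_{|\eta|}-\lambda)^{-1}v\rangle$ stays bounded as $\lambda\uparrow\cE_{-|\eta|}+m$. A pull-through computation on $(\overline{F}_{|\eta|}-\lambda)^{-1}v$ then yields the lower bound $\langle v,(\overline{F}_{|\eta|}-\lambda)^{-1}v\rangle\ge |\langle\Omega,\psi_-\rangle|^2\int |v|^2/(\omega+\cE_{-|\eta|}-\lambda)$, which diverges by monotone convergence and the hypothesis, giving the contradiction directly. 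Your treatment of the final $C^2$/low-dimension claim matches the paper's.
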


\section{Proof of the main technical result}
In this section we shall investigate operators of the form
\begin{equation*}
\widetilde{F}_{\eta}(v,\omega):= d\Gamma(\omega)+\eta W(v,-1)
\end{equation*}
indexed by $\eta\in \RR$, $v\in \cH$ and $\omega$ selfadjoint and non negative on $\cH$. 
\begin{proposition}\label{Redfiber}
Assume $\eta\in \RR$, $v\in \cH$ and $\omega$ is selfadjoint, non negative and injective on $\cH$. Then $\widetilde{F}_{\eta}(v,\omega)$ is selfadjoint on $\cD(d\Gamma(\omega))$. Furthermore $\widetilde{F}_{\eta}(v,\omega)$ is bounded from below by $-\lvert \eta\lvert $ and $\widetilde{F}_{\eta}(v,\omega)$ has compact resolvents if $\omega$ has compact resolvents. 
\end{proposition}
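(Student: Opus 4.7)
The plan is to reduce everything to the observation that $W(v,-1)$ is a bounded self-adjoint operator with norm $1$, after which self-adjointness, the lower bound, and compactness of resolvents follow from standard one-paragraph perturbation-theoretic arguments. The non-obvious piece is the self-adjointness of $W(v,-1)$; everything else is routine.

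First I would compute $W(v,-1)^2$ using the Weyl composition law \eqref{eq:Weylcomp}. The group law $(v,-1)(v,-1) = (v + (-1)v, (-1)(-1)) = (0,1)$ gives the identity factor, and $\langle v, (-1)v\rangle = -\lVert v\lVert^2 \in \RR$ has zero imaginary part, so the phase in \eqref{eq:Weylcomp} equals $1$. Since $W(0,1)$ acts as the identity on every exponential vector $\epsilon(g)$ and such vectors form a total set in $\cF_b(\cH)$, we obtain $W(v,-1)^2 = I$. Combined with the unitarity of $W(v,-1)$ this yields $W(v,-1)^* = W(v,-1)^{-1} = W(v,-1)$, so $W(v,-1)$ is a bounded self-adjoint involution with $\lVert W(v,-1)\lVert = 1$ and spectrum contained in $\{-1,1\}$.

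Next I would apply the Kato--Rellich theorem to conclude that $d\Gamma(\omega) + \eta W(v,-1)$ is self-adjoint on $\cD(d\Gamma(\omega))$, since $\eta W(v,-1)$ is bounded and self-adjoint. For the lower bound, note that $d\Gamma(\omega)\geq 0$ (by Lemma \ref{Lem:SecondQuantisedProp}, as $\omega\geq 0$) and $\eta W(v,-1) \geq -\lvert\eta\lvert$ by the spectral characterisation above, giving the required estimate $\widetilde{F}_\eta(v,\omega) \geq -\lvert\eta\lvert$.

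For the compact resolvent statement, I would use the second resolvent identity. Fix any $z$ in the common resolvent set, and write
\begin{equation*}
(\widetilde{F}_\eta(v,\omega) - z)^{-1} = (d\Gamma(\omega) - z)^{-1} - \eta (d\Gamma(\omega) - z)^{-1} W(v,-1)(\widetilde{F}_\eta(v,\omega) - z)^{-1}.
\end{equation*}
By Lemma \ref{Lem:SecondQuantisedProp}, if $\omega$ has compact resolvent then so does $d\Gamma(\omega)$, hence $(d\Gamma(\omega) - z)^{-1}$ is compact. Since $W(v,-1)(\widetilde{F}_\eta(v,\omega) - z)^{-1}$ is bounded, the right-hand side is the sum of a compact operator and a compact-times-bounded operator, so $(\widetilde{F}_\eta(v,\omega) - z)^{-1}$ is compact. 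The only conceptual obstacle is the first step: spotting that $(v,-1)(v,-1)=(0,1)$ exactly cancels the translation so that the apparently complicated Weyl operator is in fact a self-adjoint involution; the remainder is a textbook application of Kato--Rellich together with the resolvent identity.
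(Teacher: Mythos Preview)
Your proof is correct and matches the paper's argument essentially step for step: compute $W(v,-1)^2=I$ via the Weyl relation \eqref{eq:Weylcomp} to deduce that $W(v,-1)$ is a self-adjoint unitary involution, invoke Kato--Rellich for self-adjointness on $\cD(d\Gamma(\omega))$, use $d\Gamma(\omega)\geq 0$ and $-1\leq W(v,-1)\leq 1$ for the lower bound, and finish with the second resolvent identity together with Lemma~\ref{Lem:SecondQuantisedProp} for compactness of resolvents.
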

\begin{proof}
Using equation $(\ref{eq:Weylcomp})$ we see
\begin{equation*}
W(v,-1)W(v,-1)=e^{-i\text{Im}(\langle v,-v \rangle)}W(v-v,(-1)^2)=W(0,1)=1
\end{equation*}
so $W(v,-1)=W(v,-1)^{-1}=W(v,-1)^*$ since $W(v,-1)$ is unitary. Hence $\widetilde{F}_{\eta}(v,\omega)=d\Gamma(\omega)+\eta W(v,-1)$ is selfadjoint on $\cD(d\Gamma(\omega))$. Furthermore, the lower bound follows from Lemma \ref{Lem:SecondQuantisedProp} and the fact that  $-1\leq  W(v,-1)\leq 1$. If $\omega$ has compact resolvents, then so does $d\Gamma(\omega)$ by Lemma \ref{Lem:SecondQuantisedProp} and hence
	\begin{equation*}
	(F_{\eta}(v,\omega)+i)^{-1}=(d\Gamma(\omega)+i)^{-1}+\eta(d\Gamma(\omega)+i)^{-1}W(v,-1)(F_{\eta}(v,\omega)+i)^{-1}
	\end{equation*}
	will be compact.\cqfd
\end{proof}

\begin{lemma}\label{WeakConv}
	Assume $\{ v_g \}_{g\in (0,\infty)}$ is a collection of elements in $\cH$ such that $\lVert v_g \lVert $ diverges to $\infty$. Then $W(v_g,-1)$ converges weakly to 0 as $g$ goes to $\infty$.
\end{lemma}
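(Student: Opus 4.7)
The plan is to reduce the weak convergence to a pointwise matrix-element statement on the total set of exponential vectors, and then to carry out a simple Gaussian-type estimate. First I would note that each $W(v_g,-1)$ is unitary, so the family is uniformly bounded in operator norm by $1$. Combined with the fact recalled in the preliminaries that $\{\epsilon(f):f\in\cH\}$ is a total subset of $\cF_b(\cH)$, a standard approximation argument reduces the claim to showing that
\[
\langle\epsilon(f),W(v_g,-1)\epsilon(h)\rangle\longrightarrow 0
\]
for every pair $f,h\in\cH$.

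The next step would be to unfold the defining formula for the Weyl operator,
\[
W(v_g,-1)\epsilon(h)=e^{-\lVert v_g\lVert^2/2+\langle v_g,h\rangle}\,\epsilon(v_g-h),
\]
and to combine it with the reproducing identity $\langle\epsilon(f),\epsilon(g)\rangle=e^{\langle f,g\rangle}$ to obtain the closed form
\[
\langle\epsilon(f),W(v_g,-1)\epsilon(h)\rangle
=\exp\bigl(-\tfrac{1}{2}\lVert v_g\lVert^2+\langle v_g,h\rangle+\langle f,v_g\rangle-\langle f,h\rangle\bigr).
\]
Taking moduli and bounding the real parts of the cross terms by Cauchy--Schwarz, the right-hand side is majorised by
\[
\exp\bigl(-\tfrac{1}{2}\lVert v_g\lVert^2+(\lVert f\lVert+\lVert h\lVert)\lVert v_g\lVert+|\langle f,h\rangle|\bigr),
\]
and the quadratic decay in $\lVert v_g\lVert$ dominates the linear growth, so this tends to $0$ as $\lVert v_g\lVert\to\infty$.

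I do not anticipate a serious obstacle. The only point that warrants minor care is the passage from matrix-element convergence on a total set to weak convergence on all of $\cF_b(\cH)$, which is the standard $3\varepsilon$-argument using that the operators are uniformly bounded; everything else is a direct computation with the Weyl calculus.
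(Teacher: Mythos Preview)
Your proposal is correct and follows essentially the same route as the paper: reduce to matrix elements on the total set of exponential vectors (the paper cites \cite[Theorem~4.26]{Weidmann} for this reduction, which amounts to your uniform-boundedness plus $3\varepsilon$ argument), compute the closed-form expression for $\langle\epsilon(f),W(v_g,-1)\epsilon(h)\rangle$, and observe it tends to zero. Your explicit Cauchy--Schwarz bound on the exponent is a slight elaboration of what the paper leaves implicit, but the argument is the same.
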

\begin{proof}
	By \cite[Theorem 4.26]{Weidmann} it is enough to check a dense subset. By linearity it is enough to check a set that spans a dense set. Hence it is enough to check exponential vectors $\epsilon(g)$ for any $g\in \cH$. We calculate
	\begin{align*}
	\langle \epsilon(g_1),W(v_g,-1) \epsilon(g_2) \rangle&=e^{-\lVert v_g\lVert^2/2+\langle v_g,g_2 \rangle}\langle \epsilon(g_1),\epsilon(v_g-g_2) \rangle\\&=e^{-\lVert v_g\lVert^2/2+\langle v_g,g_2 \rangle+\langle g_1,v_g \rangle-\langle g_1,g_2 \rangle},
	\end{align*}
	which converges to 0.\cqfd
\end{proof}
The following Lemma contains all the technical constructions we need. The techniques goes back to Glimm and Jaffe (see \cite{GlimmJaffe}) but has also been used in \cite{Thomas1}.
\begin{lemma}\label{MainTech}
Assume $\omega$ is a selfadjoint, non negative and injective operator on $\cH$. Let $P_\omega$ be the spectral measure of $\omega$ and $\widetilde{m}>0$. Define the measurable function $f_k:\RR\rightarrow \RR$
	\begin{equation*}
	f_k(x)= x1_{(0,\widetilde{m}]}(x) + \sum_{n=0}^{\infty} (n+1)2^{-k} 1_{(n2^{-k},(n+1)2^{-k}]\cap (\widetilde{m},\infty)}(x).
	\end{equation*}	
	along  with $\omega_k=\int_{\RR} f_k(\lambda)dP_{\omega}(\lambda)$. Then the following holds
	\begin{enumerate}
		\item[\textup{(1)}] $\widetilde{F}_{\eta}(v,\omega_k)$ converges to $\widetilde{F}_{\eta}(v,\omega)$ in norm resolvent sense uniformly in $v$.

		\item[\textup{(2)}] Let $\{ v_g \}_{g\in (0,\infty)}$ be a collection of elements in $ P_\omega((\widetilde{m},\infty))\cH$. For each $k\in \mathbb{N}$, there are Hilbert spaces $\cH_{1,k},\cH_{2,k}$, selfadjoint operators $\omega_{1,k},\omega_{2,k}\geq 0$, a collection of elements $\{ \widetilde{v}_{g,k} \}_{g\in (0,\infty)}\subset \cH_{1,k}$ and a collection of unitary maps $\{ \cU_{g,k} \}_{g\in (0,\infty)}$ such that
		\begin{align*}
		\cU_{g,k}:\cF_b(\cH)\rightarrow \cF_b(\cH_{1,k})\oplus \left (\bigoplus_{n=1}^\infty \cF_b(\cH_{1,k})\otimes S_n((\cH_{2,k})^{\otimes n})\right),  
		\end{align*}
		 $\omega_{1,k}\geq 2^{-k}$ has compact resolvents, $\lVert v_g\lVert=\lVert \widetilde{v}_{k,g}\lVert$ for all $g>0$ and
		\begin{align*}
		\cU_{g,k}\widetilde{F}_{\eta}(v_g,\omega_k)\cU_{g,k}^*=&\widetilde{F}_{\eta}(\widetilde{v}_{g,k},\omega_{1,k})\\&\oplus \bigoplus_{n=1}^\infty \left(\widetilde{F}_{(-1)^n\eta }(\widetilde{v}_{g,k},\omega_{1,k})\otimes 1+1\otimes d\Gamma^{(n)}(\omega_{2,k})\right)
		\end{align*}
		for all $\eta\in \RR$.
			\end{enumerate}
	
\end{lemma}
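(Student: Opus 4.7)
The lemma has two essentially independent parts, and I would treat them separately.

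For Part (1), the key estimate is that on $(0,\infty)$ the function $f_k$ satisfies both $f_k(x)\geq x$ and $f_k(x)\leq(1+\epsilon_k)x$ with $\epsilon_k := 2^{-k}/\widetilde{m}$: the lower bound is immediate, and for the upper bound note that $f_k=\mathrm{id}$ on $(0,\widetilde{m}]$ while on $(\widetilde{m},\infty)$ the absolute error $|f_k(x)-x|\leq 2^{-k}$ is dominated by $\epsilon_k x$ since $x>\widetilde{m}$. By functional calculus and second quantization this lifts to $0\leq D_k := d\Gamma(\omega_k)-d\Gamma(\omega)\leq\epsilon_k d\Gamma(\omega)$, whence the quadratic-form bound $\|D_k^{1/2}\psi\|\leq\sqrt{\epsilon_k}\|d\Gamma(\omega)^{1/2}\psi\|$. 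Combining this with $\|A^{1/2}(A+i)^{-1}\|\leq 1/\sqrt{2}$ for $A\geq 0$ and splitting $D_k=D_k^{1/2}D_k^{1/2}$ in the second resolvent identity yields a norm bound on $(d\Gamma(\omega_k)+i)^{-1}-(d\Gamma(\omega)+i)^{-1}$ of order $\epsilon_k$. The bounded perturbation $\eta W(v,-1)$, having norm at most $|\eta|$ uniformly in $v$, is then absorbed by standard bounded-perturbation manipulations, producing a final bound of order $\epsilon_k(1+|\eta|)^2$, entirely independent of $v$.

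For Part (2), the key structural point is that $\omega_k$ restricted to $\cH_{>} := P_\omega((\widetilde{m},\infty))\cH$ has pure point spectrum: its distinct eigenvalues $\lambda_1<\lambda_2<\cdots$ are of the form $(n+1)2^{-k}>\widetilde{m}$, with corresponding eigenspaces $E_j := P_\omega(I_j)\cH$ for the appropriate dyadic interval $I_j$. Fixing unit vectors $\phi_j\in E_j$, I set
\[
\cH_{1,k} := \overline{\mathrm{span}}\{\phi_j\},\qquad
\cH_{2,k} := P_\omega([0,\widetilde{m}])\cH\ \oplus\ \bigoplus_j (E_j\ominus\CC\phi_j),
\]
with $\omega_{1,k}\phi_j=\lambda_j\phi_j$ and $\omega_{2,k}$ the obvious restriction of $\omega_k$. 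Then $\omega_k=\omega_{1,k}\oplus\omega_{2,k}$, every $\lambda_j\geq 2^{-k}$, and $\omega_{1,k}$ has compact resolvent because $\{\lambda_j\}$ is a discrete set tending to infinity.

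The unitary $\cU_{g,k}$ is built from a one-particle rotation $U_g^0:\cH\to\cH$ that is the identity on $P_\omega([0,\widetilde{m}])\cH$ and, on each $E_j$, sends $P_j v_g/\|P_j v_g\|$ (when nonzero) to $\phi_j$. Since $U_g^0$ preserves every eigenspace of $\omega_k$, it commutes with $\omega_k$; and $\widetilde{v}_{g,k}:=U_g^0 v_g=\sum_j\|P_j v_g\|\phi_j\in\cH_{1,k}$ with $\|\widetilde{v}_{g,k}\|=\|v_g\|$. Setting $\cU_{g,k}:=\Gamma(U_g^0)$, the standard covariance formulas $\Gamma(U)d\Gamma(A)\Gamma(U)^*=d\Gamma(UAU^*)$ and $\Gamma(U)W(v,V)\Gamma(U)^*=W(Uv,UVU^*)$ yield $\cU_{g,k}\widetilde{F}_\eta(v_g,\omega_k)\cU_{g,k}^*=d\Gamma(\omega_k)+\eta W(\widetilde{v}_{g,k},-1)$.

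The final step is the exponential/Fock factorization under $\cH=\cH_{1,k}\oplus\cH_{2,k}$: $\cF_b(\cH)\cong\cF_b(\cH_{1,k})\otimes\cF_b(\cH_{2,k})$ with $d\Gamma(\omega_k)=d\Gamma(\omega_{1,k})\otimes 1+1\otimes d\Gamma(\omega_{2,k})$. Because $\widetilde{v}_{g,k}$ has zero $\cH_{2,k}$-component and $-1$ restricts to a scalar on each factor, the Weyl composition rule $(\ref{eq:Weylcomp})$ gives that under the tensor factorization $W(\widetilde{v}_{g,k},-1)$ equals $W(\widetilde{v}_{g,k},-1)\otimes\Gamma(-1)$. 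Decomposing $\cF_b(\cH_{2,k})=\bigoplus_{n\geq 0}S_n(\cH_{2,k}^{\otimes n})$ and using that $\Gamma(-1)$ acts as $(-1)^n$ on the $n$-boson sector then produces precisely the stated direct-sum decomposition, the $n=0$ summand being exactly $\widetilde{F}_\eta(\widetilde{v}_{g,k},\omega_{1,k})$. The main obstacle lies in Part (1): the naive bound $|f_k(x)-x|\leq 2^{-k}$ only gives an $n$-dependent estimate on each $n$-particle sector that does not sum to a uniform bound on Fock space, and one must upgrade to the multiplicative estimate $D_k\leq\epsilon_k d\Gamma(\omega)$, which crucially exploits the lower bound $\widetilde{m}$ on the high-frequency part of the spectrum.
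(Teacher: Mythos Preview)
Your proposal is correct and follows essentially the same strategy as the paper: the key multiplicative estimate $|f_k(x)-x|\leq(2^{-k}/\widetilde m)\,x$ for Part~(1), and a $g$-dependent one-particle rotation aligning $v_g$ with a fixed reference direction followed by the Fock factorization and $n$-particle decomposition for Part~(2). The paper differs only cosmetically---it applies the operator bound $\|D_k\psi\|\leq\epsilon_k\|d\Gamma(\omega)\psi\|$ directly in the perturbed second resolvent identity rather than first treating the free case and then absorbing the perturbation, and it maps to an abstract $\ell^2(C_k)$ rather than fixing reference vectors $\phi_j$ inside $\cH$; note only that your $\cU_{g,k}$ must be the composition of $\Gamma(U_g^0)$ with the two Fock isomorphisms (Lemmas~\ref{Iso1} and~\ref{Iso2}), not $\Gamma(U_g^0)$ alone.
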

\begin{proof}
	(1): We may pick a $\sigma$-finite measure space $(\cM,\cF,\mu)$ and a unitary map $U:\cH\rightarrow L^2(\cM,\cF,\mu)$ such that $\widetilde{\omega}=U\omega U^*$ is multiplication by a strictly positive and measurable map. Conjugation with the unitary map $\Gamma(U)$, Lemma \ref{Lem:SeconduantisedBetweenSPaces} and $U\omega_kU^*=f_k(U\omega_kU^*)$ gives us
	\begin{align*}
	 \lVert  (\widetilde{F}_{\eta}&(v,\omega_k)-\xi)^{-1}-(\widetilde{F}_{\eta}(v,\omega)-\xi)^{-1} \lVert \\&=	\lVert (\widetilde{F}_{\eta}(Uv,f_k(U\omega U^*))-\xi)^{-1}-(\widetilde{F}_{\eta}(Uv,U\omega U^*)-\xi)^{-1} \lVert
	\end{align*}
	for all $\xi\in \RR\backslash \CC$. Hence we may assume $\omega$ is multiplication by a strictly positive map, which we shall also denote $\omega$. Using standard theory for the spectral calculus (see \cite{Schmudgen}) we find $\omega_k$ is multiplication by $\omega_k(x):=f_k(\omega(x))$. Write $\omega=\omega_\infty$ and note that $\omega_k>0$ for all $k\in \NN\cup \{ \infty \}$ by construction. Furthermore,
	\begin{equation}\label{estimaa:Omega}
	\sup_{x\in \cM}\left\lvert  \frac{\omega_k(x)-\omega(x)}{\omega(x)} \right\lvert\leq \frac{2^{-k}}{\widetilde{m}}.
	\end{equation}
	 Now $d\Gamma^{(n)}( \omega_k )$ acts on $L_{\text{sym}}^2(\cM^n,\cF^{\otimes n},\mu^{\otimes n})$ like multiplication with the map
	\begin{equation*}
	\omega^{(n)}_k(x_1,\dots,x_n)=\omega_k(x_1)+\cdots+\omega_k(x_n)
	\end{equation*}
	for all $k\in \NN\cup \{ \infty \}$. Equation (\ref{estimaa:Omega}) gives that $\lvert \omega^{(n)}(x)-\omega_k^{(n)}(x) \lvert\leq 2^{-k}\widetilde{m}^{-1}\omega^{(n)}(x)$ for all $x\in \cM^n$ so $\cD(d\Gamma^{(n)}( \omega ))\subset \cD(d\Gamma^{(n)}( \omega_k ))$ for all $k\in \NN$ and $n\in \mathbb{N}_0$. Furthermore we find for $\psi \in \cD(d\Gamma^{(n)}( \omega ))$ that
	\begin{equation*}
	\lVert (d\Gamma^{(n)}( \omega)-d\Gamma^{(n)}( \omega_k)) \psi \lVert \leq  \widetilde{m}^{-1}2^{-k}\lVert d\Gamma^{(n)}( \omega) \psi \lVert.
	\end{equation*} 
	 So for all $\psi \in \cD(d\Gamma(\omega))$ we have $\psi \in \cD(d\Gamma(\omega_k))$ and
	\begin{equation*}
	\lVert (\widetilde{F}_{\eta}(v,\omega) -\widetilde{F}_{\eta}(v,\omega_k) ) \psi\lVert=\lVert (d\Gamma( \omega)-d\Gamma( \omega_k)) \psi\lVert \leq \frac{2^{-k}}{\widetilde{m}}\lVert d\Gamma(\omega)\psi\lVert.
	\end{equation*}
	Let $\varepsilon>0$ and $\xi\in \CC\backslash \RR $. We now estimate
	\begin{align*}
	\lVert ((\widetilde{F}_\eta(v,\omega)+\xi   )^{-1}&-(\widetilde{F}_\eta(v,\omega_k)+\xi   )^{-1})\psi \lVert\\&\leq  \frac{1}{\lvert \text{Im}(\xi)\lvert  }\lVert      (d\Gamma(\omega)-d\Gamma(\omega_k))(\widetilde{F}_\eta(v,\omega)+\xi   )^{-1}\psi \lVert\\& \leq \frac{1}{\lvert \text{Im}(\xi)\lvert  }\frac{2^{-k}}{\widetilde{m}} \lVert d\Gamma(\omega )(\widetilde{F}_\eta(v,\omega) +\xi  )^{-1}\widetilde{\psi} \lVert\\& \leq  \frac{1}{\lvert \text{Im}(\xi)\lvert  }\frac{2^{-k}}{\widetilde{m}} \left( 1+\frac{1}{\lvert \text{Im}(\xi)\lvert  }+ \frac{\lvert \xi \lvert }{\lvert \text{Im}(\xi)\lvert} \right)\lVert \psi \lVert 
	\end{align*}
	which shows norm resolvent convergence uniformly in $v$.

	(2): For each $k\in \mathbb{N}$ we define
	\begin{equation*}
	C_k=\bigg \{ c\in \mathbb{N}_0 \bigg \lvert P_{c,k}:=P_{\omega}((\widetilde{m},\infty) \cap (c2^{-k},(c+1)2^{-k} ] )\neq 0 \bigg \}
	\end{equation*}
	 For each $c\in C_k$ let $\cK_{c,k}$ be a Hilbert space with dimension $\dim(P_{c,k})-1$. In case this number is infinity we pick a Hilbert space we countably infinite dimension. Define $\cK=P_{\omega}([0,\widetilde{m}])\cH$ and note that $\cK$ reduces  $\omega$. Define the spaces
	 \begin{equation*}
	\cH_{1,k}=L^2(C_k,\cB(C_k),\tau_{C_k})=\ell^2(C_k) \,\,\,\,\,\text{and}\,\,\,\,\,\, \cH_{2,k}=\cK\oplus \bigoplus_{c\in C_k} \cK_{c,k}
		\end{equation*}
	where $\tau_{C_k}$ is the counting measure on $C_k$. We now define $\omega_{1,k}$ to be multiplication by the map $f_k(c)=(c+1)2^{-k}$ in $\cH_{1,k}$ and
	\begin{equation*}
	\omega_{2,k}=\omega \mid_{\cK}   \oplus \bigoplus_{c\in C_k} (c+1)2^{-k}.
	\end{equation*}
	Note $\omega_{1,k}\geq 2^{-k}$ and $\omega_{2,k}\geq 0$ since $C_k\subset \mathbb{N}_0$. Write $C_k=\{ n_{i,k} \}_{i=1}^K$ where $K\in \NN\cup \{ \infty \}$ and $ n_{i,k}< n_{i+1,k}$. Then  $\{1_{\{n_{i,k}\}}\}_{i=1}^K$ is an orthonormal basis of eigenvectors for $\omega_{1,k}$ corresponding to the eigenvalues $\{ (n_{i,k} +1)2^{-k}\}_{i=1}^K$. This collection of eigenvalues is either finite or diverges to infinity so $\omega_{1,k}$ will have compact resolvents. For each $g \in (0,\infty)$ and $c\in C_k$ we define the vector
	\begin{align*}
	\psi_{c,g,k}=\begin{cases}
	\frac{P_{c,k} v_g}{\lVert P_{c,k} v_g\lVert }, & P_{c,k} v_g\neq 0\\
	\text{Some normalized element in $P_{c,k}\cH$} & \text{otherwise}
	\end{cases}
	\end{align*}
	and note $\{ \psi_{c,g,k}\mid c\in C_k\}$ is an orthonormal collection of states. We also define
	\begin{align*}
	\widetilde{\cH}_{c,g,k}=\bigg\{ \psi\in P_{c,k}\cH \bigg \lvert \psi \perp \psi_{c,g,k} \bigg\}
	\end{align*}
	and note $\{ \widetilde{\cH}_{c,g,k} \mid c\in C_k\}$ consists of orthogonal subspaces. We then define
	\begin{equation*}
		\cH_{1,g,k}=\overline{\text{Span}\{ \psi_{c,g,k}\mid c\in C_k\}} \,\,\,\,\,\text{and}\,\,\,\,\,\, \cH_{2,g,k}=  \bigoplus_{c\in C_k} \widetilde{\cH}_{c,g,k}.
	\end{equation*}
	Using that $\omega$ is non negative and injective we find
	\begin{equation*}
	I=P_\omega((0,\widetilde{m}])+\sum_{c=0}^\infty P_{c,k}=P_\omega([0,\widetilde{m}])+\sum_{c\in C_k} P_{c,k},
	\end{equation*}
	which implies $\cH=\cH_{1,g,k}\oplus \cK\oplus  \cH_{2,g,k}$. Note that $v_g\in \cH_{1,g,k}$ by construction. Let $\mathcal{B}_{c,g,k}$ be an orthonormal basis for $\widetilde{\cH}_{c,g,k}$ and let $\mathcal{B}_{g,k}=\cup_{c\in C_k}\cB_{c,g,k}$ which is an orthonormal basis for $\cH_{2,g,k}$. Let $B\subset \cK$ be an orthonormal basis for $\cK$ and define $B_{g,k}=\{ \psi_{c,g,k}\mid c\in C_k\}$ which is an orthonormal basis for $\cH_{1,g,k}$. Define $D= \mathcal{B}_{g,k}\cup B_{g,k}\cup B$ which is an orthonormal basis for $\cH$.

	Let $V_{c,g,k}$ be a unitary from $\widetilde{\cH}_{c,g,k}$ to $\cK_{c,k}$ which exists since the spaces have the same dimension. Define $Q_{g,k}:\cH_{1,g,k}\rightarrow \cH_{1,k}$ to be the unique unitary map which satisfies $Q_{g,k}\psi_{c,g,k}=1_{\{ c \}}$. Then we define
	\begin{equation*}
	U_{g,k}=Q_{g,k}\oplus 1\oplus \bigoplus_{c\in C_k} V_{c,g,k}:\cH\rightarrow \cH_{1,k}\oplus \cH_{2,k}.
	\end{equation*}
	We now prove that
	\begin{equation}\label{eq:Operatoreq}
	U_{g,k}^*\omega_{1,k}\oplus \omega_{2,k}U_{g,k}=\omega_k.
	\end{equation}
	Let $\psi\in \cB_{c,g,k}\cup \{ \psi_{c,g,k} \}$ for some $c\in C_k$. Using the functional calculus we find $\psi=P_{c,k}\psi\in \cD(\omega_k)$ and 
	\begin{equation}\label{eigenvalue}
	\omega_k\psi=\omega_kP_{c,k}\psi=(c+1)2^{-k}P\psi_{c,g,k}=(c+1)2^{-k}\psi.
	\end{equation}	
	Furthermore, for $\psi\in B\subset  \cK$ we find that $\psi\in \cD(\omega_k^p)$ for all $p\in \NN$ and we have the inequality $\lVert \omega_k^p\psi \lVert\leq \widetilde{m}^p\lVert \psi\lVert $. In particular $D$ is an orthonormal basis for $\cH$ consisting of analytic vectors for $\omega_k$ so $D$ spans a core for $\omega_k$. Hence it is enough to prove equation (\ref{eq:Operatoreq}) on $D$.
	
	Let $\psi\in \mathcal{B}_{g,k}\cup B_{g,k}$ and pick $c\in C_k$ such that $\psi\in \cB_{c,g,k}\cup \{ \psi_{c,g,k} \}$. If $\psi= \psi_{c,g,k}$ then $U_{g,k}\psi=(1_{\{c\}},0)$. Now $1_{\{c\}}\in \cD(\omega_{1,k})$ with $\omega_{1,k}1_{\{c\}}=(c+1)2^{-k}1_{\{c\}}$ so $U_{g,k}\psi=(1_{\{c\}},0)\in \cD(\omega_{1,k}\oplus \omega_{2,k})$ and
	\begin{equation*}
	U_{g,k}^*\omega_{1,k}\oplus \omega_{2,k}U_{g,k}\psi=(c+1)2^{-k} U_{g,k}^*(1_{\{c\}},0)=(c+1)2^{-k}\psi=\omega_k\psi
	\end{equation*}
	by equation (\ref{eigenvalue}). If $\psi \in \cB_{c,g,k}$ then $U_{g,k}\psi=(0,V_{c,g,k}\psi)$. By definition we have $V_{c,g,k}\psi\in \cK_{c,k}\subset \cD(\omega_2)$ with $\omega_{2,k}V_{c,g,k}\psi=(c+1)2^{-k}V_{c,g,k}\psi$. Hence  $U_{g,k}\psi=(0,V_{c,g,k}\psi)\in \cD(\omega_{1,k}\oplus \omega_{2,k})$ and
		\begin{equation*}
		U_{g,k}^*\omega_{1,k}\oplus \omega_{2,k}U_{g,k}\psi=(c+1)2^{-k} U_{g,k}^*(1_{\{c\}},0)=(c+1)2^{-k}\psi=\omega_k\psi
		\end{equation*}
	by equation (\ref{eigenvalue}).	If $\psi\in B\subset \cK$ we have $\psi\in \cD(\omega_k)\cap  \cD(\omega)$ and $\omega_k\psi=\omega \psi\in \cK$. In particular $U_{g,k}\psi=(0,\psi)\in \cD(\omega_{1,k} \oplus \omega_{2,k})$ and $U_{g,k} \omega_k\psi= (0, \omega_k\psi)=(0, \omega\psi)$. Thus we find
	 \begin{equation*}
	 	U_{g,k}^*\omega_{1,k}\oplus \omega_{2,k}U_{g,k}\psi=U_{g,k}^*(0,\omega \psi)=	\omega_k\psi.
	 	\end{equation*}
	This proves equation (\ref{eq:Operatoreq}). As earlier noted $v_g\in \cH_{1,g,k}$ so $U_{g,k}v_g$ is of the form $(\widetilde{v}_{g,k},0)$ with $\lVert \widetilde{v}_{g,k}\lVert=\lVert v_g\lVert$. Using Lemma \ref{Lem:SeconduantisedBetweenSPaces} we find
	\begin{equation*}
	\Gamma(U_{g,k}) \widetilde{F}_{\eta}(v_g,\omega_k) \Gamma(U_{g,k})^*=\widetilde{F}_{\eta}( (\widetilde{v}_{g,k},0),\omega_{k,1}\oplus \omega_{k,2}  ).
	\end{equation*}
	Letting $L_1$ be the isomorphism from Lemma \ref{Iso1} we see that
	\begin{align*}
	L_1 \widetilde{F}_{\eta}( (\widetilde{v}_{g,k},0),\omega_{k,1}\oplus \omega_{k,2}  ) L_1^*=& d\Gamma(\omega_1)\otimes 1+1\otimes d\Gamma(\omega_{k_2})\\&+ \eta W(\widetilde{v}_{g,k},-1)\otimes \Gamma(-1).
	\end{align*}
	Letting $L_2$ be the isomorphism from Lemma \ref{Iso2} we see that
	\begin{align*}
	L_2L_1& \widetilde{F}_{\eta}((\widetilde{v}_{g,k},0),\omega_{k,1}\oplus \omega_{k,2}   ) L_1^*L_2^*\\&= d\Gamma(\omega_{k,1})+ d\Gamma^{(0)}(\omega_{k,2})+ \Gamma^{(0)}(-1)\mid_{\CC}\eta W(\widetilde{v}_{g,k},-1) \oplus \\&  \,\,\,\,\,\,\,\, \bigoplus_{n=1}^\infty d\Gamma(\omega_{k,1})\otimes 1+1\otimes d\Gamma^{(n)}(\omega_{k,2})+ \eta W(\widetilde{v}_{g,k},-1)\otimes \Gamma^{(n)}(-1)  
	\\&=\widetilde{F}_{\eta}(\widetilde{v}_{g,k},\omega_{k,1})\oplus \bigoplus_{n=1}^\infty \widetilde{F}_{(-1)^n\eta}(\widetilde{v}_{g,k},\omega_{k,1})\otimes1+1\otimes d\Gamma^{(n)}(\omega_{k,2})
	\end{align*}
	where we used $\Gamma^{(n)}(-1)=(-1)^n$. Hence $\cU_{g,k}=L_2L_1\Gamma(U_{g,k})$ will work.\cqfd
\end{proof}
\begin{lemma}\label{Compactcase}
	Assume that $\omega$ is selfadjoint, injective and non negative operator on $\cH$ which have compact resolvents. Let $\{ v_g \}_{g\in (0,\infty)}$ be a collection of elements in $\cH$ such that $\lVert v_g \lVert $ diverges to $\infty$. Then $\widetilde{F}_\eta(v_g,\omega)$ converges in norm resolvent sense to $\widetilde{F}_0(0,\omega)=d\Gamma(\omega)$ as $g$ goes to $\infty$ for all $\eta\in \RR$. 
\end{lemma}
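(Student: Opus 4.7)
The plan is to apply the second resolvent identity and reduce norm resolvent convergence to a smallness estimate on the product $K\,W(v_g,-1)\,R_g$, where $K:=(d\Gamma(\omega)+i)^{-1}$ and $R_g:=(\widetilde{F}_\eta(v_g,\omega)+i)^{-1}$. Writing $A=d\Gamma(\omega)+i$ and $B_g=\eta W(v_g,-1)$ in $(A+B_g)^{-1}-A^{-1}=-A^{-1}B_g(A+B_g)^{-1}$ gives
\begin{equation*}
R_g - K \;=\; -\eta\,K\,W(v_g,-1)\,R_g,
\end{equation*}
so it suffices to show $\|K\,W(v_g,-1)\,R_g\|\to 0$ as $g\to\infty$. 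Since $\omega$ has compact resolvents, Lemma \ref{Lem:SecondQuantisedProp} makes the same true of $d\Gamma(\omega)$, so $K$ is a compact operator on $\cF_b(\cH)$.

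Combining Lemma \ref{WeakConv} (the weak operator limit $W(v_g,-1)\to 0$) with compactness of $K$, I get that for every $\psi\in\cF_b(\cH)$ the vector $W(v_g,-1)\psi$ converges weakly to $0$, and the compact operator $K$ turns this into norm convergence $\|K\,W(v_g,-1)\psi\|\to 0$. Hence $K\,W(v_g,-1)\to 0$ in the strong operator topology, with the uniform bound $\|K\,W(v_g,-1)\|\le\|K\|$.

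The remaining step is to pass from strong to norm convergence of the product $K\,W(v_g,-1)\,R_g$, which I achieve by controlling the range of $R_g$ uniformly in $g$. From $\widetilde{F}_\eta(v_g,\omega)R_g x = x - i R_g x$ I obtain
\begin{equation*}
d\Gamma(\omega)\,R_g x \;=\; x - i R_g x - \eta\,W(v_g,-1)\,R_g x,
\end{equation*}
and the bounds $\|R_g\|\le 1$ (selfadjointness, Proposition \ref{Redfiber}) and $\|W(v_g,-1)\|=1$ yield $\|d\Gamma(\omega)R_g x\|\le(2+|\eta|)\|x\|$. Thus $R_g$ maps the closed unit ball of $\cF_b(\cH)$ into the graph-norm ball
\begin{equation*}
\cK \;:=\; \bigl\{y\in\cD(d\Gamma(\omega))\,:\,\|y\|\le 1,\ \|d\Gamma(\omega)y\|\le 2+|\eta|\bigr\},
\end{equation*}
and $\cK$ is compact in $\cF_b(\cH)$ precisely because $d\Gamma(\omega)$ has compact resolvent (any set bounded in the graph norm of such an operator is the image under the compact operator $(d\Gamma(\omega)+i)^{-1}$ of a bounded set, hence precompact; closedness is standard).

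A uniformly bounded family of operators converging to $0$ in the strong operator topology does so uniformly on any compact set (a standard three-epsilon argument on a finite net in $\cK$). Applied to $K\,W(v_g,-1)$ on $\cK$, this gives
\begin{equation*}
\|K\,W(v_g,-1)\,R_g\|\;\le\; \sup_{y\in\cK}\|K\,W(v_g,-1)\,y\|\;\longrightarrow\;0,
\end{equation*}
which is the required norm resolvent convergence. The main obstacle to circumvent is that $W(v_g,-1)$ is unitary, so $\|K\,W(v_g,-1)\|=\|K\|$ does \emph{not} tend to zero on its own; the compact resolvent hypothesis on $\omega$ is used twice — once to make $K$ compact (upgrading weak operator convergence to strong operator convergence) and once to trap the range of $R_g$ in a fixed compact set (upgrading strong to norm).
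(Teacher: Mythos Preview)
Your proof is correct and rests on the same ingredients as the paper's: the resolvent identity, compactness of $K=(d\Gamma(\omega)\pm i)^{-1}$ from Lemma~\ref{Lem:SecondQuantisedProp}, and the weak convergence $W(v_g,-1)\to 0$ from Lemma~\ref{WeakConv}. The only difference is in how the upgrade to norm convergence is arranged. The paper iterates the resolvent identity a second time, substituting $R_g=K-\eta R_gW_gK$ back into $R_g-K=-\eta R_gW_gK$ to obtain
\[
\|R_g-K\|\le(|\eta|+1)|\eta|\,\|K\,W(v_g,-1)\,K\|,
\]
so that $W(v_g,-1)$ is sandwiched between two copies of the compact operator $K$ and the norm vanishes directly. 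You instead keep the single identity $R_g-K=-\eta K W_g R_g$ and show that $R_g$ itself maps the unit ball into a fixed compact set (via the uniform bound on $d\Gamma(\omega)R_g$), then invoke uniform convergence of the strongly convergent family $KW_g$ on that compact set. The paper's route is a line shorter since the second resolvent iteration replaces your explicit estimate on $\|d\Gamma(\omega)R_g\|$; your route makes more transparent where the compact-resolvent hypothesis enters twice. Both are standard and equally valid.
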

\begin{proof}
	We calculate
	\begin{align*}
	(\widetilde{F}_{\eta}&(v_g,\omega)-i)^{-1}-(d\Gamma(\omega)-i)^{-1}\\=&\eta (\widetilde{F}_{\eta}(v_g,\omega)-i)^{-1}W(v_g,-1)(d\Gamma(\omega)-i)^{-1}\\=& \eta^2(\widetilde{F}_{\eta}(v_g,\omega)-i)^{-1}W(v_g,-1)(d\Gamma(\omega)-i)^{-1}W(v_g,-1)(d\Gamma(\omega)-i)^{-1}\\&+\eta(d\Gamma(\omega)-i)^{-1}W(v_g,-1)(d\Gamma(\omega)-i)^{-1}.
	\end{align*}
    This implies 
	\begin{align*}
	\lVert (\widetilde{F}_{\eta}(v_g,\omega)-i)^{-1}&-(d\Gamma(\omega)-i)^{-1}\lVert\\&\leq (\lvert \eta\lvert+1)\lvert\eta\lvert\lVert (d\Gamma(\omega)-i)^{-1}W(v_g,-1)(d\Gamma(\omega)-i)^{-1}\lVert,
	\end{align*}
which converges to 0 by Lemma \ref{WeakConv} and compactness of $(d\Gamma(\omega)-i)^{-1}$.\cqfd
\end{proof}

\begin{lemma}\label{Expkonv}
	Let $\cH$ be a Hilbert space. Let $\{A_n\}_{n=1}^\infty $ be a sequence of selfadjoint operators on $\cH$ that are uniformly bounded below by $\gamma$. Let $A$ be selfadjoint on $\cH$ and bounded below. Then $A_n$ converges to $A$ in norm resolvent sense if and only if $e^{-tA_n}$ converges to $e^{-tA}$ in norm for all $t<0$. In this case $\inf(\sigma(A_n))$ converges to $\inf(\sigma(A))$.
\end{lemma}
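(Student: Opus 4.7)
The plan is to leverage the two-way correspondence between resolvents and the semigroup via the functional calculus (forward) and the Laplace transform (backward). The common lower bound $\gamma$ is what lets both arguments run uniformly in $n$. I will interpret the semigroup as being taken at positive times $t>0$, where $\|e^{-tA}\|\leq e^{-t\gamma}$ is finite.

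For the forward implication, suppose $A_n \to A$ in norm resolvent sense. A standard consequence (Reed--Simon Vol.~I, Thm.~VIII.20) is that $f(A_n) \to f(A)$ in operator norm for every $f \in C_\infty(\RR)$. Fix $t>0$ and pick $\tilde{f}_t \in C_\infty(\RR)$ which equals $x \mapsto e^{-tx}$ on $[\gamma,\infty)$ (e.g.\ $e^{-tx}$ on $[\gamma,\infty)$ smoothly cut off to $0$ on $(-\infty,\gamma-1]$). Since $\sigma(A)$ and all $\sigma(A_n)$ are contained in $[\gamma,\infty)$, the spectral calculus gives $\tilde{f}_t(A_n)=e^{-tA_n}$ and $\tilde{f}_t(A)=e^{-tA}$, so $\|e^{-tA_n}-e^{-tA}\|\to 0$.

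For the reverse implication, I would use the Laplace representation
\begin{equation*}
(A_n-\lambda)^{-1} = \int_0^\infty e^{t\lambda}\,e^{-tA_n}\,dt, \qquad (A-\lambda)^{-1} = \int_0^\infty e^{t\lambda}\,e^{-tA}\,dt,
\end{equation*}
valid as a Bochner integral in operator norm for any real $\lambda<\gamma$, because $\|e^{-tA_n}\|,\|e^{-tA}\|\leq e^{-t\gamma}$ by the spectral theorem. Subtracting and estimating yields
\begin{equation*}
\|(A_n-\lambda)^{-1}-(A-\lambda)^{-1}\| \;\leq\; \int_0^\infty e^{t\lambda}\,\|e^{-tA_n}-e^{-tA}\|\,dt.
\end{equation*}
The integrand tends pointwise to $0$ by hypothesis and is dominated by $2e^{t(\lambda-\gamma)}$, which is integrable on $(0,\infty)$ since $\lambda-\gamma<0$. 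Dominated convergence delivers norm resolvent convergence at the single real point $\lambda$, which is equivalent to norm resolvent convergence in general.

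For the last statement, recall that $\|e^{-tB}\|=e^{-t\inf\sigma(B)}$ for any self-adjoint $B$ bounded below and any $t>0$, by the spectral theorem. Norm convergence of the semigroups forces convergence of their norms, so $e^{-t\inf\sigma(A_n)}\to e^{-t\inf\sigma(A)}$; taking logarithms and dividing by $-t$ gives $\inf\sigma(A_n)\to\inf\sigma(A)$. The only real obstacle in the whole argument is the dominated-convergence step in the reverse direction: without the uniform lower bound $\gamma$, there would be no integrable majorant for the integrand, and the Laplace argument would collapse, so the uniform semiboundedness hypothesis is essential rather than cosmetic.
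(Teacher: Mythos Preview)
Your proof is correct and follows essentially the same route as the paper: the forward direction via Reed--Simon Theorem~VIII.20, the reverse direction via the Laplace representation of the resolvent combined with dominated convergence (using the uniform bound $\|e^{-tA_n}\|\leq e^{-t\gamma}$ to get an integrable majorant), and the spectral-bottom convergence via $\|e^{-tB}\|=e^{-t\inf\sigma(B)}$. Your version is slightly more explicit in the forward step (constructing the $C_\infty$ cutoff of $e^{-tx}$), but the underlying argument is identical.
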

\begin{proof}
Norm resolvent convergence along with existence the uniform lower bound implies convergence of the semigroup (see \cite[Theorem VIII.20]{RS1}). To prove the converse we apply the formula
\begin{equation*}
(A-\lambda)^{-1}\psi=\int_{0}^\infty e^{-t(A-\lambda)}\psi dt
\end{equation*}
for all $\lambda<\gamma$ along with dominated convergence. To prove the last part note that by the spectral theorem
\begin{align*}
\inf(\sigma(A))=-\log(  \lVert  \exp(-A) \lVert  )=\lim_{n\rightarrow \infty} -\log(  \lVert  \exp(-A_n) \lVert  )=\lim_{n\rightarrow \infty}\inf(\sigma(A_n))
\end{align*}
finishing the proof.	\cqfd
\end{proof}

\begin{lemma}\label{ConvSimpleCase}
Assume that $\omega$ is a selfadjoint, injective and non negative operator on $\cH$. Let $\{ v_g \}_{g\in (0,\infty)}$ be a collection of elements in $\cH$ such that $\lVert v_g \lVert $ diverges to $\infty$. Assume there is $\widetilde{m}>0$ such that $P_{\omega}((0,\widetilde{m}])v_g=0$ for all $g$ where $P_\omega$ is the spectral measure corresponding to $\omega$. Then $\widetilde{F}_\eta(v_g,\omega)$ converges in norm resolvent sense to $\widetilde{F}_0(0,\omega)=d\Gamma(\omega)$ as $g$ tends to $\infty$ for all $\eta\in \RR$.
\end{lemma}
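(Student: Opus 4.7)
The plan is to combine Lemma \ref{MainTech} with Lemma \ref{Compactcase}, using Lemma \ref{Expkonv} to pass through the semigroup picture. First, by Lemma \ref{MainTech}(1), both $\widetilde F_\eta(v_g,\omega_k)\to\widetilde F_\eta(v_g,\omega)$ and (taking $v_g=0$) $d\Gamma(\omega_k)\to d\Gamma(\omega)$ in norm resolvent sense, uniformly in $v_g$. A standard $3\epsilon$ argument then reduces the claim to showing, for each fixed $k$, that $\widetilde F_\eta(v_g,\omega_k)\to d\Gamma(\omega_k)$ in norm resolvent sense as $g\to\infty$.

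For such a fixed $k$, the hypothesis $P_\omega((0,\widetilde m])v_g=0$ combined with injectivity of $\omega$ puts $v_g\in P_\omega((\widetilde m,\infty))\cH$, so Lemma \ref{MainTech}(2) provides unitaries $\cU_{g,k}$ and elements $\widetilde v_{g,k}$ with $\|\widetilde v_{g,k}\|=\|v_g\|\to\infty$ such that $\cU_{g,k}\widetilde F_\eta(v_g,\omega_k)\cU_{g,k}^*$ is the direct sum displayed in that lemma. An identical computation applied to the free operator yields $\cU_{g,k}d\Gamma(\omega_k)\cU_{g,k}^*=d\Gamma(\omega_{1,k})\oplus\bigoplus_{n=1}^\infty(d\Gamma(\omega_{1,k})\otimes 1+1\otimes d\Gamma^{(n)}(\omega_{2,k}))$. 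Crucially $\omega_{1,k}\geq 2^{-k}$ has compact resolvent, so Lemma \ref{Compactcase} gives $\widetilde F_{\pm\eta}(\widetilde v_{g,k},\omega_{1,k})\to d\Gamma(\omega_{1,k})$ in norm resolvent sense; combined with the uniform lower bound $-|\eta|$ from Proposition \ref{Redfiber}, Lemma \ref{Expkonv} upgrades this to norm convergence of the associated semigroups at every $t>0$.

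To promote convergence from the first tensor factor to the full direct sum, I use the tensor identity $e^{-t(A\otimes 1+1\otimes B)}=e^{-tA}\otimes e^{-tB}$, which gives on the $n$-th summand the difference
\[
\bigl(e^{-t\widetilde F_{(-1)^n\eta}(\widetilde v_{g,k},\omega_{1,k})}-e^{-td\Gamma(\omega_{1,k})}\bigr)\otimes e^{-td\Gamma^{(n)}(\omega_{2,k})}.
\]
Since $d\Gamma^{(n)}(\omega_{2,k})\geq 0$, the right factor is a contraction, so the operator norm of this difference is bounded by $\|e^{-t\widetilde F_{(-1)^n\eta}(\widetilde v_{g,k},\omega_{1,k})}-e^{-td\Gamma(\omega_{1,k})}\|$. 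Because $(-1)^n\eta$ takes only the two values $\pm\eta$, taking the maximum yields a bound independent of $n$ that still tends to $0$; and since the norm of a direct sum equals the supremum of the norms of its summands, the full semigroup difference tends to $0$ in operator norm. Applying Lemma \ref{Expkonv} once more in the reverse direction, using the uniform lower bound $-|\eta|$ valid on every summand, converts this into norm resolvent convergence of the conjugated operators, and unitary invariance of resolvent-norm differences finishes the proof.

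The main obstacle is the uniformity of the semigroup convergence across the infinite direct sum indexed by the particle number $n$; this is resolved by the contractivity estimate $\|e^{-td\Gamma^{(n)}(\omega_{2,k})}\|\leq 1$ together with the fact that the sign twist $(-1)^n\eta$ takes only two values.
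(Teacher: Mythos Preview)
Your proof is correct and follows essentially the same route as the paper's: reduce via Lemma~\ref{MainTech}(1) to the approximating operators $\omega_k$, decompose via Lemma~\ref{MainTech}(2), pass to semigroups using the uniform lower bound and Lemma~\ref{Expkonv}, factor the tensor semigroup and bound the $d\Gamma^{(n)}(\omega_{2,k})$ part by contractivity, and then invoke Lemma~\ref{Compactcase} on the compact-resolvent factor together with the observation that $(-1)^n\eta$ takes only two values. The only cosmetic difference is that the paper applies Lemma~\ref{Expkonv} once to the full operator and then decomposes the semigroup, whereas you decompose first and apply Lemma~\ref{Expkonv} twice (forward on the factor, backward on the sum); both orderings are valid.
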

\begin{proof}
For each $k\in \mathbb{N}$ let $\cH_{1,k},\cH_{2,k},\omega_k,\omega_{1,k},\omega_{2,k}$ and $\widetilde{v}_{g,k}$ be the quantities from Lemma \ref{MainTech} corresponding to the family $\{ v_{g} \}_{g\in (0,\infty)}$ and the number $\widetilde{m}>0$. For each $n\in \NN_0$ we define
	\begin{equation*}
	F_{\pm \eta,k,g,n}=\widetilde{F}_{ \pm \eta}(\widetilde{v}_{g,k},\omega_{k,1})\otimes 1+1\otimes d\Gamma^{(n)}(\omega_{k,2}).
	\end{equation*}
	By Lemma \ref{MainTech} statement (1), it is enough to prove that $ \widetilde{F}_{ \eta}(v_{g},\omega_k)$ converges to $ d\Gamma(\omega_k)$ in norm resolvent sense as $g$ tends to $\infty$. Noting that $\widetilde{F}_{\eta}(v_{g},\omega_k)\geq -\lvert \eta\lvert$ for all $g$ we may use Lemma \ref{Expkonv}. Using the unitary transformations in Lemma \ref{MainTech} we see
	\begin{align*}
	&\lVert \exp(-t\widetilde{F}_{ \eta}(v_{g},\omega_k))-\exp(-td\Gamma(\omega_k)) \lVert \\&=  \sup _{n\in \mathbb{N}_0}\{ \lVert \exp(-tF_{ (-1)^n\eta,k,g,n})-\exp(-tF_{ 0,k,g,n})  \lVert   \}\\&=\sup _{n\in \mathbb{N}_0}\{ \lVert \exp(-t\widetilde{F}_{ (-1)^n \eta }(\widetilde{v}_{g,k},\omega_{k,1}))-\exp(-td\Gamma(\omega_{1,k}))  \lVert \lVert \exp (-td\Gamma^{(n)}(\omega_{2,k}))\lVert    \}\\&\leq \sup _{n\in \mathbb{N}_0}\{ \lVert \exp(-t\widetilde{F}_{ (-1)^n \eta }(\widetilde{v}_{g,k},\omega_{k,1}))-\exp(-td\Gamma(\omega_{1,k}))  \lVert    \}\\&=\max_{n\in \{1,2\}} \{ \lVert \exp(-t\widetilde{F}_{ (-1)^n \eta}(\widetilde{v}_{g,k},\omega_{k,1}))-\exp(-td\Gamma(\omega_{1,k}))  \lVert  \}
	\end{align*}
	which converges to 0 by Lemma \ref{Compactcase}. This finishes the proof.\cqfd
\end{proof}

\begin{lemma}\label{LemTran}
Assume that $\omega$ is selfadjoint, injective and non negative operator on $\cH$. Let $v\in \cD(\omega^{-1/2})$ and $f\in \cD(\omega)$. Then
\begin{align*}
W(f,1)F_{\eta}(v,\omega)W(f,1)^*=&\eta W(2f,-1)+d\Gamma(\omega)+\varphi(v-\omega f)\\&+\lVert \omega^{1/2} f\lVert^2-2\textup{Re}(\langle v ,f \rangle).
\end{align*}
\end{lemma}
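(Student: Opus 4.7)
The plan is to conjugate the three summands of $F_{\eta}(v,\omega)=\eta\Gamma(-1)+d\Gamma(\omega)+\varphi(v)$ by $W(f,1)$ separately and add the results.

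For the $\Gamma(-1)$ term, I would first observe that $\Gamma(-1)=W(0,-1)$ (read off from \eqref{grupper1} with $\omega=\pi\cdot 1$) and that $W(f,1)^{*}=W(-f,1)$ (apply the composition formula \eqref{eq:Weylcomp} to $W(f,1)W(-f,1)$ and note $\text{Im}\langle f,-f\rangle=0$). Then two applications of \eqref{eq:Weylcomp} give
\begin{equation*}
W(f,1)W(0,-1)W(-f,1)=W(f,1)W(f,-1)=W(2f,-1),
\end{equation*}
the inner phase vanishing because $\langle 0,f\rangle=0$ and the outer one because $\langle f,f\rangle=\|f\|^{2}\in\RR$. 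Hence $W(f,1)\eta\Gamma(-1)W(f,1)^{*}=\eta W(2f,-1)$.

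For the $\varphi(v)$ term, I would test on exponential vectors. Using the defining formula $W(f,1)\epsilon(g)=e^{-\|f\|^{2}/2-\langle f,g\rangle}\epsilon(f+g)$ together with $W(f,1)^{*}=W(-f,1)$ and the action $a(v)\epsilon(g)=\langle v,g\rangle\epsilon(g)$, a direct calculation on $\epsilon(g)$ yields $W(f,1)a(v)W(f,1)^{*}\epsilon(g)=(\langle v,g\rangle-\langle v,f\rangle)\epsilon(g)$. Since exponential vectors are total, $W(f,1)a(v)W(f,1)^{*}=a(v)-\langle v,f\rangle$, and taking adjoints $W(f,1)a^{\dagger}(v)W(f,1)^{*}=a^{\dagger}(v)-\langle f,v\rangle$. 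Summing, $W(f,1)\varphi(v)W(f,1)^{*}=\varphi(v)-2\,\textup{Re}(\langle v,f\rangle)$.

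The main obstacle is the $d\Gamma(\omega)$ term, where unboundedness requires care. Here I would aim for the Bogoliubov identity
\begin{equation*}
W(f,1)\,d\Gamma(\omega)\,W(f,1)^{*}=d\Gamma(\omega)-\varphi(\omega f)+\|\omega^{1/2}f\|^{2},
\end{equation*}
which uses $f\in\cD(\omega)$ precisely to make both $\varphi(\omega f)$ and $\|\omega^{1/2}f\|^{2}$ meaningful. One route is to pass to a multiplication realization of $\omega$ via the spectral theorem and argue on a core consisting of finite-particle vectors built from compactly supported spectral projections, where $d\Gamma(\omega)$ decomposes as $\sum_{i}\omega_{i}a^{\dagger}(e_{i})a(e_{i})$ and the shift formulas from the previous paragraph give the identity term by term before passing to the limit. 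A cleaner route, which I prefer, is to verify the identity on exponential vectors $\epsilon(g)$ with $g\in\cD(\omega)$: one checks that such vectors form a core for $d\Gamma(\omega)$, that $W(f,1)^{*}$ preserves this core (since $g-f\in\cD(\omega)$), and that both sides agree on $\epsilon(g)$ by a direct computation using $d\Gamma(\omega)\epsilon(g)=a^{\dagger}(\omega g)\epsilon(g)$ (differentiate $\epsilon(e^{it\omega}g)$ at $t=0$).

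Combining the three pieces yields
\begin{equation*}
W(f,1)F_{\eta}(v,\omega)W(f,1)^{*}=\eta W(2f,-1)+d\Gamma(\omega)+\varphi(v-\omega f)+\|\omega^{1/2}f\|^{2}-2\,\textup{Re}(\langle v,f\rangle),
\end{equation*}
where I have used linearity of $\varphi$ to merge $\varphi(v)-\varphi(\omega f)=\varphi(v-\omega f)$, valid because both $v$ and $\omega f$ lie in $\cH$. This is the claimed identity.
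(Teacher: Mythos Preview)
Your proof is correct and follows the same route as the paper. The paper's proof is the one-liner ``Use equation (\ref{eq:Weylcomp}), (\ref{grupper1}) and Lemma \ref{Weyltrans}'': the Weyl composition formula and $\Gamma(-1)=W(0,-1)$ handle the parity term exactly as you do, while Lemma \ref{Weyltrans} in the appendix packages precisely the transformation identities $W(f,1)\varphi(v)W(f,1)^*=\varphi(v)-2\,\textup{Re}\langle v,f\rangle$ and $W(f,1)d\Gamma(\omega)W(f,1)^*=d\Gamma(\omega)-\varphi(\omega f)+\|\omega^{1/2}f\|^2$ that you re-derive by hand on exponential vectors.
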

\begin{proof}
Use equation (\ref{eq:Weylcomp}), (\ref{grupper1}) and Lemma \ref{Weyltrans}. \cqfd
\end{proof}
\noindent We can now prove Theorem \ref{Centralconv}.
\begin{proof}[of Theorem \ref{Centralconv}] The formula in equation (\ref{eq:LIGHED}) is obtained via Lemma \ref{LemTran} and the lower bound is trivial from Lemma \ref{Lem:FundamentalIneq}. For $c\in (0,\infty)$ we will write $P_c=P_\omega((c,\infty))$ and $\overline{P}_c=1-P_c=P_\omega((0,c])$. Note that $P_cv_g,P_cv,\overline{P}_cv_g,\overline{ P}_cv\in \cD(\omega^{-1/2})$ holds trivially by the spectral theorem. Define for $ 0<c\leq \widetilde{m}$ and $\eta\in \RR$ 
\begin{align*}
\widetilde{F}_{\eta,c,g}&:=\eta W(2\omega^{-1}P_{c}v_g,-1)+d\Gamma(\omega)+\varphi(\overline{P}_c v_g)+\lVert \omega^{-1/2}P_{\omega}((c,\widetilde{m}])v_g\lVert^2\\
A_{\eta,c,g}&:=\eta W(2\omega^{-1}P_{c}v_g,-1)+d\Gamma(\omega)+\lVert \omega^{-1/2}P_{\omega}((c,\widetilde{m}])v_g\lVert^2
\end{align*}
and note they are all selfadjoint on $\cD(d\Gamma(\omega))$ by the Kato-Rellich theorem and Lemma \ref{Lem:FundamentalIneq}. For $\psi\in \cD(d\Gamma(\omega))$ we have  $\lVert (1+d\Gamma(\omega))^{1/2}\psi\lVert\leq \lVert (1+d\Gamma(\omega))\psi\lVert$ by the spectral theorem. Using this and Lemma \ref{Lem:FundamentalIneq} we find for all $c\in (0,\widetilde{m}]$
\begin{align*}
\lVert (\widetilde{F}_{\eta,c,g}+i)^{-1}&-(\widetilde{A}_{\eta,c,g}+i)^{-1}\lVert \\&\leq \lVert \varphi(\overline{P}_c v_g)(\widetilde{A}_{\eta,c,g}+i)^{-1}\lVert\\& \leq 2\lVert \overline{P}_c(1+\omega^{-1/2})v_g \lVert \lVert (1+d\Gamma(\omega)) (\widetilde{A}_{\eta,c,g}+i)^{-1}\lVert\\&\leq 2\lVert \overline{P}_c(1+\omega^{-1/2})v_g \lVert(1+ 1+1+\lvert \eta\lvert+\lVert \omega^{-1/2}P_{\omega}((0,\widetilde{m}])v_g\lVert^2 ).
\end{align*}
where we in the last step used $\lVert \omega^{-1/2}P_{\omega}((c,\widetilde{m}])v_g\lVert^2\leq \lVert \omega^{-1/2}P_{\omega}((0,\widetilde{m}])v_g\lVert^2$. We now define 
\begin{align*}
C_1:=3+\lvert  \eta \lvert +\sup_{g\in (0,\infty)}\lVert \omega^{-1/2}P_{\omega}((0,\widetilde{m}])v_g\lVert^2
\end{align*}
which is finite since $\omega^{-1/2}P_{\omega}((0,\widetilde{m}])v_g$ is convergent. Let $B=d\Gamma(\omega)+\varphi(v)$ and $C_2=\lVert (1+d\Gamma(\omega)) (B+i)^{-1}\lVert$. We estimate using Lemma \ref{Lem:FundamentalIneq}
\begin{align*}
\lVert (\widetilde{F}_{0,\widetilde{m},g}+i)^{-1}-(B+i)^{-1}\lVert&\leq \lVert \varphi(\overline{P}_{\widetilde{m}} v_g-v)(B+i)^{-1}\lVert \\& \leq 2\lVert (1+\omega^{-1/2})(\overline{P}_{\widetilde{m}}v_g-v) \lVert C_2
\end{align*}
Let $U_c=W(P_{\omega}((c_1,\widetilde{m}])v,1)$ for some $c\in (0,\widetilde{m}]$. Using equation (\ref{eq:Weylcomp}) and Lemma \ref{Weyltrans} we obtain $U_c\widetilde{F}_{\eta,\widetilde{m},g}U_c^*=\widetilde{F}_{\eta,c,g}$ for all $\eta\in \RR$. Using this transformation and the previous estimates we find for all $c\in (0,\widetilde{m}]$ and $g>0$ that
 \begin{align*}
\lVert (\widetilde{F}_{\eta,\widetilde{m},g}+i)^{-1}-(B+i)^{-1} \lVert\leq& \lVert (\widetilde{F}_{\eta,c,g}+i)^{-1}-(\widetilde{F}_{0,c,g}+i)^{-1} \lVert \\&+ \lVert (\widetilde{F}_{0,\widetilde{m},g}+i)^{-1}-(B+i)^{-1}\lVert \\\leq& 2(2C_1\lVert \overline{P}_c(1+\omega^{-1/2})v_g \lVert)\\&+ \lVert (\widetilde{A}_{\eta,c,g}+i)^{-1}-(\widetilde{A}_{0,c,g}+i)^{-1} \lVert   \\&+2C_2\lVert (1+\omega^{-1/2})(\overline{P}_{\widetilde{m}}v_g-v) \lVert
 \end{align*}
Noting that
\begin{align*}
\lVert \overline{P}_c(1+\omega^{-1/2})v_g \lVert\leq \lVert (1+\omega^{-1/2})(\overline{P}_{\widetilde{m}}v_g-v) \lVert+\lVert \overline{ P}_c (1+\omega^{-1/2})v \lVert
\end{align*}
we see that
\begin{align*}
\limsup_{g\rightarrow \infty}  \lVert (\widetilde{F}_{\eta,\widetilde{m},g}+i)^{-1}-(B+i)^{-1} \lVert\leq 4C_1\lVert \overline{ P}_c (1+\omega^{-1/2})v \lVert
\end{align*}
for all $c\in (0,\widetilde{m}]$ by Lemma \ref{ConvSimpleCase}. Taking $c$ to 0 finishes the proof since $\widetilde{F}_{\eta,\widetilde{m},g}=\widetilde{F}_{\eta,\widetilde{m}}(v_g,\omega)$.
 \cqfd
\end{proof}

\begin{proposition}\label{Counter}
	Let $\cH=L(\mathbb{R}^3,\cB(\RR^3),\lambda^{\otimes 3}),\omega(k)=\lvert k\lvert $, $v_g=\omega^{-1/2}1_{\{g^{-1} \leq \lvert k\lvert\leq 2\}}$ and $\eta<0$. Then $\lVert \omega^{-1}v_g \lVert $ converges to $\infty$ for $g$ going to $\infty$, but there is $h\neq 0$ such that $\widetilde{F}_\eta(hv_g,\omega)=d\Gamma(\omega)+\eta W(h\omega^{-1}v_g,-1)$ does not go to $d\Gamma(\omega)$ in norm resolvent sense.
\end{proposition}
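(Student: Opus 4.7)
The plan is to exhibit normalized trial states $\phi_g\in\cD(d\Gamma(\omega))$ such that $\langle\phi_g,A_g\phi_g\rangle$ is strictly negative in the limit, where $A_g:=d\Gamma(\omega)+\eta W(h\omega^{-1}v_g,-1)$. Both $A_g\geq -\lvert\eta\lvert$ and $d\Gamma(\omega)\geq 0$ hold uniformly in $g$, so Lemma \ref{Expkonv} applies: if $A_g$ converged to $d\Gamma(\omega)$ in norm resolvent sense, then $\inf\sigma(A_g)\to\inf\sigma(d\Gamma(\omega))=0$. A strictly negative variational upper bound thus contradicts norm resolvent convergence to $d\Gamma(\omega)$.

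First I would verify the two norms by a direct calculation in spherical coordinates: $\lVert\omega^{-1}v_g\lVert^2=4\pi\int_{g^{-1}}^{2}r^{-1}\,dr=4\pi(\log 2+\log g)\to\infty$, giving the divergence claim, while $\lVert\omega^{-1/2}v_g\lVert^2=4\pi(2-g^{-1})\to 8\pi$. Setting $f_g:=h\omega^{-1}v_g$, one has $\omega f_g=hv_g\in\cH$, so $f_g\in\cD(\omega)$, and consequently $\phi_g:=W(f_g/2,1)\Omega=e^{-\lVert f_g\lVert^2/8}\epsilon(f_g/2)$ is a normalized vector in $\cD(d\Gamma(\omega))$.

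The key step---the one requiring real care---is to show that $\phi_g$ is an eigenvector of the Weyl reflection $W(f_g,-1)$ with eigenvalue $1$. Using the composition law (\ref{eq:Weylcomp}) with $\langle f_g,-f_g/2\rangle=-\lVert f_g\lVert^2/2\in\RR$, one computes $W(f_g,-1)W(f_g/2,1)=W(f_g/2,-1)$; evaluating on $\Omega$ shows both $W(f_g/2,-1)\Omega$ and $W(f_g/2,1)\Omega$ equal $e^{-\lVert f_g\lVert^2/8}\epsilon(f_g/2)$, whence $W(f_g,-1)\phi_g=\phi_g$. Geometrically, the reflection $-1$ sends the center $f_g/2$ to $-f_g/2$, and the translation $f_g$ brings it back.

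To finish, I would compute the coherent-state expectation $\langle\phi_g,d\Gamma(\omega)\phi_g\rangle=\lVert\omega^{1/2}f_g/2\lVert^2=(h^2/4)\lVert\omega^{-1/2}v_g\lVert^2\to 2\pi h^2$; this follows from Lemma \ref{LemTran} applied with $v=0$ to conjugate $d\Gamma(\omega)$ by $W(f_g/2,1)$, together with $d\Gamma(\omega)\Omega=0$ and $\langle\Omega,\varphi(\omega f_g/2)\Omega\rangle=0$. Combined with the eigenvalue identity, this gives $\lim_{g\to\infty}\langle\phi_g,A_g\phi_g\rangle=2\pi h^2+\eta$. Since $\eta<0$ is fixed, one picks any $h\neq 0$ with $h^2<\lvert\eta\lvert/(2\pi)$, so the limit is a strictly negative number $-c$; then for all sufficiently large $g$, $\inf\sigma(A_g)\leq\langle\phi_g,A_g\phi_g\rangle\leq -c/2<0$, contradicting the conclusion of Lemma \ref{Expkonv}. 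Therefore $A_g$ does not converge to $d\Gamma(\omega)$ in norm resolvent sense.
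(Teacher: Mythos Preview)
Your proof is correct and takes a genuinely different, more elementary route than the paper's. The paper argues by contradiction assuming norm resolvent convergence holds for \emph{every} $h\neq 0$; it then uses Lemma~\ref{LemTran} to identify $\inf(\sigma(\widetilde{F}_\eta(hv_g,\omega)))$ with $\cE_\eta(hv_g,\omega)+h^2\lVert\omega^{-1/2}v_g\rVert^2$, invokes an external convergence result from \cite{Thomas1} to pass to the limit $v_g\to v:=\omega^{-1/2}1_{\{|k|\leq 2\}}$, and finally sends $h\to 0$ to derive the absurdity $0\leq\eta<0$. By contrast, you produce an \emph{explicit} $h$ (any $0<h^2<|\eta|/(2\pi)$) and a concrete coherent trial state $\phi_g=W(f_g/2,1)\Omega$, verifying directly that it is fixed by the Weyl reflection $W(f_g,-1)$ and has bounded kinetic expectation $(h^2/4)\lVert\omega^{-1/2}v_g\rVert^2\to 2\pi h^2$. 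This yields a strictly negative limiting upper bound on $\inf\sigma(A_g)$ without recourse to \cite{Thomas1}, which makes your argument self-contained. The paper's approach, on the other hand, has the conceptual advantage of tying the failure back to the original fiber operator $F_\eta$ and the infrared-singular limit $v$, making it clearer \emph{why} low-energy divergence of $\omega^{-1}v_g$ is the obstruction. One small remark: when you invoke Lemma~\ref{LemTran} with $v=0$ to compute $\langle\phi_g,d\Gamma(\omega)\phi_g\rangle$, note that the lemma gives the conjugation $W(f,1)d\Gamma(\omega)W(f,1)^*$, whereas you need $W(f_g/2,1)^*d\Gamma(\omega)W(f_g/2,1)$; this is resolved by applying the lemma with $f=-f_g/2$ (or appealing directly to Lemma~\ref{Weyltrans}), and the final value $\lVert\omega^{1/2}f_g/2\rVert^2$ is unchanged.
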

\begin{proof}
	Define $v=\omega^{-1/2}1_{ \{\lvert k\lvert\leq 2\}}$. It is easy to see that $\lVert \omega^{-1}v_g \lVert $ goes to $\infty$ as $g$ tends to infinity. Assume that convergence in norm resolvent sense holds for all $h\neq 0$. Applying Lemma \ref{LemTran} with $f=h\omega^{-1}v_g$ we see
	\begin{equation*}
	\inf\{ \sigma(F_{\eta}(hv_g,\omega)) \} +h^2 \lVert \omega^{-1/2}v_g\lVert^2=\inf\{ \sigma(\widetilde{F}_{\eta}(hv_g,\omega)) \}
	\end{equation*}
	converges to $0$ for $g$ going to $\infty$. In \cite{Thomas1} it is proven that $\inf\{ \sigma(F_{\eta}(hv_g,\omega)) \}$ converges in norm resolvent sense to $\inf\{ \sigma(F_{\eta}(hv,\omega)) \}$. Hence we find
	\begin{equation*}
	\inf\{ \sigma(F_{\eta}(hv,\omega)) \} =-h^2 \lVert \omega^{-1/2}v\lVert^2
	\end{equation*}
	for any $h\neq 0$. Since $\langle \Omega,F_{\eta}(hv,\omega)\Omega  \rangle=\eta$ we see $-h^2 \lVert \omega^{-1/2}v\lVert^2\leq \eta<0$ for all $h\neq 0$. Taking $h$ to 0 yields $0\leq \eta<0$ which is a contradiction. \cqfd
\end{proof}

\noindent We now prove Corollaries \ref{KorGSFiber}, \ref{KorGSTotal} and \ref{KorUVREN}.
\begin{proof}[of Corollary \ref{KorGSFiber}]
	Define $U_g=W(g\omega^{-1}v,1)$ and $m:=m(\omega)>0$. Note that
	\begin{equation*}
	\widetilde{F}_\eta(2g\omega^{-1}v,\omega):=U_g F_{\eta}(gv,\omega)U_g^*+g^2\lVert \omega^{-1/2}v\lVert^2
	\end{equation*}
    converges in norm resolvent sense to $d\Gamma(\omega)$ as $g$ tends to infinity by Theorem \ref{Centralconv} (use $\widetilde{m}=m(\omega)$ and $v=0$). Since $\widetilde{F}_\eta(2g\omega^{-1}v,\omega)\geq -\lvert \eta \lvert$ for all $g>0$, Lemma \ref{Expkonv} implies
	\begin{equation*}
	\lim_{g\rightarrow \infty}  \cE_{\eta}(gv,\omega) +g^2\lVert \omega^{-1/2}v \lVert^2 =\lim_{g\rightarrow \infty}  \inf(\sigma(\widetilde{F}_\eta(2g\omega^{-1}v,\omega)))=0.
	\end{equation*}
	Let $P_g$ be the spectral projection of  $\widetilde{F}_\eta(gv,\omega)$ onto  $[-\frac{m}{2},\frac{m}{2}]$. Using \cite[Theorem VIII.23]{RS1} and Lemma \ref{Lem:SecondQuantisedProp} we find $P_g$ converges in norm to $P=\lvert \Omega \rangle \langle \Omega \lvert $. Pick $g_0$ such that $\cE_{\eta}(gv,\omega)+g^2\lVert \omega^{-1/2}v\lVert^2\in (-\frac{m}{2},\frac{m}{2})$ and $\lVert P_g-P \lVert<1 $  for all $g>g_0$. Then $P_g$ has dimension 1 by \cite[Theorem 4.35]{Weidmann}, so $\widetilde{F}_\eta(gv,\omega)$ and $F_{\eta}(gv,\omega)$ will have a non degenerate isolated ground state for all $g>g_0$. Let $\{ \psi_g \}_{g\geq g_0}$ be a real analytic collection of normalized eigenstates for $F_{\eta}(gv,\omega)$ and write $\widetilde{\psi}_g=U_g\psi_g$ which is a ground state for $\widetilde{F}_\eta(gv,\omega)$. We calculate
	\begin{equation*}
	\lvert \langle  e^{-g^2\lVert \omega^{-1}v\lVert} \epsilon(-g\omega^{-1}v) ,\psi_g\rangle\lvert =\lvert \langle \Omega,\widetilde{\psi}_g \rangle\lvert =\lVert P\widetilde{\psi}_g \lVert\geq \lVert \widetilde{\psi}_g \lVert-\lVert P-P_g\lVert \lVert \widetilde{\psi}_g \lVert>0.
	\end{equation*}
	Hence $h(g)= \langle  e^{-g^2\lVert \omega^{-1}v\lVert} \epsilon(-g\omega^{-1}v),\psi_g \rangle$ is nonzero and smooth. Multiplying with $\frac{1}{ h(g)}$ and normalising, we may pick the family $\{ \psi_g \}_{g\geq g_0}$ smooth such that 
	\begin{equation*}
	\langle  e^{-g^2\lVert \omega^{-1}v\lVert} \epsilon(-g\omega^{-1}v),\psi_g \rangle=\langle \Omega,U_g \psi_g \rangle> 0 .
	\end{equation*}
	This implies
	\begin{equation*}
	1\geq \lvert\langle \Omega, \widetilde{\psi}_g \rangle\lvert=\lVert P\widetilde{\psi}_g \lVert\geq \lVert \widetilde{\psi}_g \lVert-\lVert P-P_g\lVert \lVert \widetilde{\psi}_g \lVert=1-\lVert P-P_g\lVert. 
	\end{equation*}
	Therefore $\lvert \langle \Omega,\widetilde{\psi}_g \rangle\lvert=\langle \Omega,\widetilde{\psi}_g \rangle$ converges to 1, and hence $\widetilde{\psi}_g$ converges to $\Omega$. This implies
	\begin{equation*}
	0=\lim_{g\rightarrow \infty} \lVert \widetilde{\psi}_g -\Omega \lVert= \lim_{g\rightarrow \infty} \lVert U^*_g\widetilde{\psi}_g -U_g^*\Omega \lVert=\lim_{g\rightarrow \infty} \lVert \psi_g -e^{-g^2\lVert \omega^{-1}v \lVert^2}\epsilon(-g\omega^{-1}v) \lVert.
	\end{equation*}
	Using Lemma \ref{WeakConv} we find
	\begin{equation*}
	\langle \widetilde{\psi}_g,d\Gamma(\omega) \widetilde{\psi}_g \rangle=\cE_{\eta}(gv,\omega)+g^2\lVert \omega^{-1/2}v\lVert^2-\eta \langle \widetilde{\psi}_g,W(2g\omega^{-1}v,-1)\widetilde{\psi}_g \rangle
	\end{equation*}
	 converges to 0. Hence $\psi_g$ converges to $\Omega$ in $d\Gamma(\omega)^{1/2}$ norm, and hence also in $N^{1/2}$ norm since $m>0$. Note that $\psi_g,\widetilde{\psi}_g\in \cD(d\Gamma(\omega))\subset \cD(N)$ since $m>0$. Using Theorem \ref{Weyltrans} we see that
	\begin{equation*}
	\langle \psi_g,N\psi_g \rangle=\langle \widetilde{\psi}_g,U_gNU^*_g\widetilde{\psi}_g \rangle= \langle \widetilde{\psi}_g,N\widetilde{\psi}_g \rangle+g\langle \widetilde{\psi}_g,\varphi(\omega^{-1}v)\widetilde{\psi} \rangle +g^2\lVert \omega^{-1}v\lVert.
	\end{equation*}
	Since $\widetilde{\psi}_g$ goes to $\Omega$ in $N^{1/2}$ norm and $\varphi(\omega^{-1}v)$ is $N^{1/2}$ bounded by Lemma \ref{Lem:FundamentalIneq} we find that $\varphi(\omega^{-1}v)\widetilde{\psi}_g$ converges to $\varphi(\omega^{-1}v)\Omega$ in norm. Hence $\langle \widetilde{\psi}_g,\varphi(\omega^{-1}v)\widetilde{\psi}_g \rangle$ and $\langle \widetilde{\psi}_g,N\widetilde{\psi}_g \rangle$ converges to 0 which implies
	\begin{equation*}
	(g^{-1}\langle \psi_g,N\psi_g \rangle-g\lVert \omega^{-1}v\lVert )= g^{-1}\langle \widetilde{\psi}_g,N\widetilde{\psi}_g \rangle+\langle \widetilde{\psi}_g,\varphi(\omega^{-1}v)\widetilde{\psi} \rangle
	\end{equation*}
	converges to 0 as $g$ tends to $\infty$. Define $f(g)=\cE_{\eta}(gv,\omega)+g^2\lVert \omega^{-1/2}v\lVert^2$ and assume  $\eta < 0$. Since $f(0)=\eta$ and $f$ converges to 0, we just need to see $f$ is increasing. There is a unitary map $U:\cH\rightarrow L^2(X,\cF,\mu)$ such that $U\omega U^*$ is a multiplication operator.
	Using Lemma \ref{Lem:SeconduantisedBetweenSPaces} we see $\Gamma(U)F_\eta(gv,\omega)\Gamma(U)^*=F_\eta(gUv,U\omega U^*)$ so 
	\begin{align*}
	f(g)=\cE_{\eta}(gUv,U\omega U^*)+g^2\lVert (U\omega U^*)^{-1/2}Uv\lVert^2
	\end{align*}
	Hence we may assume $\cH=L^2(\cM,\cF,\mu)$ and $\omega$ is multiplication by a strictly positive and measurable map, which we shall also denote $\omega$. We note $\psi_g$ exists for all $g\geq 0$ by Theorem \ref{Thm:Spectral Theory of decomposition} and we have the pull through formula (see \cite{Thomas1})
\begin{equation}\label{eq:pull1}
a(k)\psi_g=-gv(k)(F_{-\eta}(gv,\omega)-\cE_\eta(gv,\omega)+\omega(k))^{-1}\psi_g.
\end{equation}	
 Note that $g\mapsto \cE_{\eta}(gv,\omega)$ is real analytic since it is a an isolated non degenerate eigenvalue by Theorem \ref{Thm:Spectral Theory of decomposition}.  We may then calculate
	\begin{align*}
	\frac{d}{dg} \cE_{\eta}(gv,\omega)&=\langle \psi_g,\varphi(v)\psi_g \rangle=2 \text{Re}(\langle \psi_g,a(v)\psi_g \rangle)\\&=-2g  \int_{X} \lvert v(k)\lvert^2 \lVert  (F_{-\eta}(gv,\omega)-\cE_{\eta}(gv,\omega)+\omega(k))^{-1/2} \psi_g\lVert^2dk\\&>-2g\lVert \omega^{-1/2}v\lVert^2=-	\frac{d}{dg} g^2\lVert \omega^{-1/2}v\lVert
	\end{align*}
	because $\lVert  (F_{-\eta }(gv,\omega)-\cE_{\eta}(gv,\omega)+\omega(k))^{-1/2} \psi_g\lVert^2<\omega(k)^{-1}$ almost everywhere by Theorem \ref{Thm:Spectral Theory of decomposition}. This proves the claim. \cqfd
\end{proof}

\begin{proof}[of Corollary \ref{KorGSTotal}]
	By Theorem \ref{Thm:Spectral Theory of decomposition} we may pick a a unitary map $V$ such that $VH_{\eta}(gv,\omega)V^*=F_{-\eta }(gv,\omega)\oplus F_{\eta}(gv,\omega)$. Noting that
	\begin{equation*}
	\lim\limits_{g\rightarrow \infty} \cE_{\eta}(gv,\omega)- \cE_{-\eta }(gv,\omega)=0
	\end{equation*}
	and that $\cE_{\pm \eta }(gv,\omega)$ is an eigenvalue for $F_{\pm \eta}(gv,\omega)$ for sufficiently large $g$ by Corollary \ref{KorGSFiber}, we see that for $g$ large enough $H_{\eta}(gv,\omega)$ will have at least two eigenvalues in the mass gap $[E_{\eta}(gv,\omega),E_{\eta}(gv,\omega)+m_{\textup{ess}}]$, and the energy difference will converge to 0. \cqfd
\end{proof} 

\begin{proof}[of Corollary \ref{KorUVREN}]
	 Define $U_g=W(\omega^{-1}v_g,1)$ which is independent of $\eta $. If $\omega^{-1}v1_{\{ \omega\geq 1  \}} \notin \cH$, we see part (1) follows from Theorem \ref{Centralconv} and Lemma \ref{Lem:FundamentalIneq}.
	
	We now prove part (2). By Theorem \ref{Thm:Spectral Theory of decomposition} there is a unitary map $V$  with the property that $VH_{\eta}(v_g,\omega)V^*=F_{\eta }(v_g,\omega)\oplus F_{-\eta}(v_g,\omega)$. Let $V_g=V^* (U_g\oplus U_g) V$. Convergence to $\widetilde{H}$ and the uniform lower bound now follows from part (1) and Lemma \ref{Lem:FundamentalIneq}. Let $C_g=\lVert \omega^{-1/2}1_{\{\omega>\widetilde{m} \}}v_g\lVert^2$. Then
	\begin{align*}
	\lVert &(H_{\eta}(v_g,\omega)+C_g+i)^{-1}- (H_{0}(v_g,\omega)+C_g+i)^{-1}\lVert\\&=\lVert (V_gH_{\eta}(v_g,\omega)V_g^*+C_g+i)^{-1} -(V_gH_{0}(v_g,\omega)V_g^*+C_g+i)^{-1}\lVert
	\end{align*}
	which converges to 0. This finishes the proof. \cqfd
\end{proof}

\section{Uniqueness, support, pointwise bounds and number bounds}
In this chapter we prove Theorem \ref{BasicpropGS}. We will in this section assume that $\cH=L^2(X,\cF,\mu)$ where $(X,\cF,\mu)$ is $\sigma$-finite and countably generated. We will also assume $\omega$ is a multiplication operator which satisfies  $\omega>0$ almost everywhere. We also fix $v\in \cD(\omega^{-1/2})$ and define
\begin{align}\label{eq:denh}
h(x)=\begin{cases} -1 & v(x)=0 \\
-\frac{v(x)}{\lvert v(x)\lvert} & v(x)\neq 0
\end{cases}
\end{align}
Note that $h$ is measurable, $\lvert h \lvert=1$ and $h^*v=-\lvert v\lvert$. Define also $h^{(n)}(k_1,\dots,k_n)=h(k_1)\dots h(k_n)$ and note
\begin{equation*}
\Gamma(h)=\bigoplus_{n=0}^\infty h^{(n)} \,\,\,\,\,\,\,\, \Gamma(h)^*=\Gamma(h^*)=\bigoplus_{n=0}^\infty (h^{(n)})^*.
\end{equation*}
Define
\begin{equation*}
\cC_+=\{ \psi=(\psi^{(n)}) \in \cF_b(\cH) \mid \psi^{(0)}\geq 0, (h^{(n)})^*\psi^{(n)}\geq 0\,\,\, \text{a. e. for $n\geq 1$}  \}.
\end{equation*}
We have
\begin{lemma}
	$\cC_+$ is a selfddual cone inside $\cF_b(\cH)$. The strictly positive elements are
	\begin{equation*}
	\cC_{>0}=\{ \psi=(\psi^{(n)}) \in \cF_b(\cH) \mid \psi^{(0)}>0, (h^{(n)})^*\psi^{(n)}> 0\,\,\, \text{a. e. for $n\geq 1$}  \}.
	\end{equation*}
\end{lemma}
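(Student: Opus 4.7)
The plan is to recognize $\cC_+$ as the image of the standard pointwise-positive cone under the unitary operator $\Gamma(h)$, and to reduce both statements to the classical case of the non-negative cone in $L^2$. Since $|h|=1$ a.e., multiplication by $h$ is a unitary on $\cH$, so $\Gamma(h)$ is unitary on $\cF_b(\cH)$ with adjoint $\Gamma(h^*)$ acting as multiplication by $(h^{(n)})^*$ on the $n$-particle sector. Setting
\begin{equation*}
\cC_+^{\mathrm{std}}:=\{\psi=(\psi^{(n)})\in \cF_b(\cH)\mid \psi^{(0)}\geq 0,\ \psi^{(n)}\geq 0 \text{ a.e.\ for } n\geq 1\},
\end{equation*}
the identity $(\Gamma(h)^*\psi)^{(n)}=(h^{(n)})^*\psi^{(n)}$ immediately gives $\cC_+=\Gamma(h)\cC_+^{\mathrm{std}}$.

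Next I would establish self-duality of $\cC_+^{\mathrm{std}}$. The easy inclusion is just $\langle \psi,\phi\rangle=\sum_n\int \overline{\psi^{(n)}}\phi^{(n)}d\mu^{\otimes n}\geq 0$ for $\psi,\phi\in \cC_+^{\mathrm{std}}$. For the converse, if $\langle \psi,\phi\rangle\geq 0$ for every $\phi\in \cC_+^{\mathrm{std}}$, I would test against $\phi=(0,\dots,0,S_n 1_A,0,\dots)$ for symmetric measurable sets $A\subset \cM^n$ of finite measure; using $\sigma$-finiteness and countable generation of $\cF$, such indicators separate points in $L^2_{\mathrm{sym}}$, forcing $\psi^{(n)}\geq 0$ a.e.\ for each $n$. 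Then the general fact that for any unitary $U$ on a complex Hilbert space and any self-dual cone $C$ one has $(UC)^*=UC^*$ — a one-line consequence of $\langle v,Uw\rangle=\langle U^*v,w\rangle$ — applied to $U=\Gamma(h)$ and $C=\cC_+^{\mathrm{std}}$ yields self-duality of $\cC_+$.

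Finally, for the strictly positive elements (meaning $\psi\in \cC_+$ with $\langle \psi,\phi\rangle>0$ for every $\phi\in \cC_+\setminus\{0\}$), the same unitary conjugation reduces the problem to characterizing the strictly positive elements of $\cC_+^{\mathrm{std}}$: they are exactly the $\psi$ with $\psi^{(0)}>0$ and $\psi^{(n)}>0$ a.e.\ for all $n\geq 1$. If some $\psi^{(n_0)}$ vanishes on a symmetric set $A$ of positive measure, then $\phi=(0,\dots,0,S_{n_0}1_A,0,\dots)$ is a nonzero element of $\cC_+^{\mathrm{std}}$ orthogonal to $\psi$. Conversely, if every $\psi^{(n)}$ is strictly positive a.e.\ and $\phi\in \cC_+^{\mathrm{std}}\setminus\{0\}$ has $\phi^{(n_0)}$ nonzero on a positive-measure set, then $\langle \psi,\phi\rangle\geq \int \psi^{(n_0)}\phi^{(n_0)}\,d\mu^{\otimes n_0}>0$. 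Pulling back by $\Gamma(h)$ gives the asserted formula for $\cC_{>0}$.

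The only slightly delicate point is the ``test against symmetric indicators'' step in the $n$-particle sector, which relies on $\sigma$-finiteness and countable generation (both assumed in the section setup); once this separating property is in hand, the rest is a formal transfer through the unitary $\Gamma(h)$ together with the standard $L^2$ positivity facts.
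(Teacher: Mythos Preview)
Your proof is correct and follows exactly the same approach as the paper: you identify $\cC_+=\Gamma(h)\cC_+^{\mathrm{std}}$ and reduce to the standard positive cone in $L^2$. The paper simply cites Miyao \cite{Farris} for the properties of $\cC_+^{\mathrm{std}}$, whereas you spell out the self-duality and strict-positivity arguments directly; beyond that, there is no difference in strategy.
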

\begin{proof}
	We note
	\begin{equation*}
	\cC_+=\Gamma(h)\{ \psi=(\psi^{(n)})_{n=0}^\infty \in \cF_b(\cH) \mid \psi^{(0)}\geq 0, \psi^{(n)}\geq 0\,\,\, \text{a. e. for $n\geq 1$}  \}.
	\end{equation*}
	The result now follows by the theory developed in \cite{Farris}.
\end{proof}
\begin{lemma}\label{POS}
Let $g>0$, $\gamma\in \RR$ and $T$ be a selfadjoint operator on $\cF_b(\cH)$ such that
\begin{equation*}
T=\bigoplus_{n=0}^\infty T^{(n)}.
\end{equation*}
Assume $T^{(n)} $ a multiplication operator and $T^{(n)}\geq \gamma$ for all $n\in \NN_0$. Assume $g\varphi(v)$ infinitesimally $T$-bounded. Define $H=T+g\varphi(v)$. Then $H$ is bounded below, selfadjoint and $(H-\lambda)^{-1} \cC_+\subset \cC_+$ for all $\lambda<\inf(\sigma(H))$. If $v\neq 0$ almost everywhere then $(H-\lambda)^{-1}\cC_+\backslash \{0\}\subset \cC_{>0}$ for all $\lambda<\inf(\sigma(H))$. So if $\inf(\sigma(H))$ is an eigenvalue then it is non degenerate and spanned by an element in $\cC_{>0}$.
\end{lemma}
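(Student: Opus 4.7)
The plan is to establish the positivity statements by reducing to a positive coupling via $\Gamma(h)$ and then running a Dyson-series / Perron-Frobenius argument. Self-adjointness of $H$ on $\cD(T)$ and the lower bound are immediate from Kato-Rellich, since $T$ is selfadjoint and bounded below by $\gamma$ and $g\varphi(v)$ is infinitesimally $T$-bounded. For the cone statements, the unitary $\Gamma(h)$ implements $\Gamma(h)^*\cC_+ = \cP$, where $\cP = \{(\psi^{(n)}): \psi^{(n)}\geq 0 \text{ a.e.}\}$ is the standard selfdual cone in $\cF_b(\cH)$. Because each $T^{(n)}$ is a multiplication operator, it commutes with multiplication by $h^{(n)}$, giving $\Gamma(h)^* T \Gamma(h) = T$; and the Bogoliubov rule combined with $h^* v = -|v|$ (built into the definition of $h$) yields $\Gamma(h)^*\varphi(v)\Gamma(h) = \varphi(h^*v) = -\varphi(|v|)$. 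Hence
\[
\widetilde{H} := \Gamma(h)^* H \Gamma(h) = T - g\varphi(|v|),
\]
and it suffices to prove the cone properties for $\widetilde{H}$ on $\cP$.

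For positivity preservation, for $\lambda < \inf\sigma(\widetilde{H})$ I would use the Laplace representation $(\widetilde{H} - \lambda)^{-1} = \int_0^\infty e^{t\lambda} e^{-t\widetilde{H}}\, dt$ to reduce matters to showing $e^{-t\widetilde{H}}\cP \subset \cP$ for every $t \geq 0$. The Dyson series
\[
e^{-t\widetilde{H}} = \sum_{k=0}^\infty g^k \int_{0 \leq s_1 \leq \cdots \leq s_k \leq t} e^{-(t-s_k)T}\, \varphi(|v|)\, e^{-(s_k - s_{k-1})T} \cdots \varphi(|v|)\, e^{-s_1 T}\, ds_1 \cdots ds_k
\]
converges in the strong operator topology thanks to infinitesimal $T$-boundedness. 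Each factor preserves $\cP$: $e^{-sT}$ acts as multiplication by a positive function on every $n$-particle sector; $a(|v|)$ integrates against the nonnegative function $|v|$; and $a^\dagger(|v|)$ symmetrizes against $|v|\geq 0$. Since all summation and integration weights are nonnegative, $e^{-t\widetilde{H}}\cP \subset \cP$, and hence so does the resolvent.

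When $v \neq 0$ almost everywhere I would upgrade positivity preservation to positivity improving by a support-spreading argument. A single application of $a^\dagger(|v|)$ to a $\psi^{(n)}\geq 0$ supported on $B \subset X^n$ produces an $(n+1)$-particle function whose support equals, modulo null sets, $\bigcup_{i=0}^n\{(k_0,\ldots,k_n) : (k_0,\ldots,\hat{k}_i,\ldots,k_n) \in B\}$, a strict enlargement of $B \times X$ because $|v|>0$ a.e.; the commutation relation $[a(|v|),a^\dagger(|v|)] = \|v\|^2$ shows that $a(|v|)$ reinjects a strictly positive multiple of $\psi^{(n)}$ at the same particle number. Iterating suitably in the Dyson series then connects any nonzero $\psi \in \cP$ to a strictly positive contribution in every $m$-particle component over all of $X^m$, so that $(\widetilde{H} - \lambda)^{-1}$ maps $\cP \setminus \{0\}$ into the strictly positive part of $\cP$. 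The Perron-Frobenius theorem for selfdual cones from \cite{Farris} (already invoked above) then forces any eigenspace at the bottom of the spectrum of $\widetilde{H}$ to be one-dimensional and spanned by a strictly positive element; pulling back through $\Gamma(h)$ gives the corresponding statement for $H$ and $\cC_+$. The main obstacle is making this ergodicity/support-spreading step rigorous: because $\varphi(|v|)$ is unbounded and mixes particle sectors, one must carefully verify that no nontrivial closed face of $\cP$ is left invariant by $e^{-t\widetilde{H}}$, which is precisely where the hypothesis $|v|>0$ a.e. is needed.
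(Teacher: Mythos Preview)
Your conjugation by $\Gamma(h)$ to pass from $\cC_+$ to the standard cone $\cP$ is a clean simplification; the paper performs the equivalent computation directly on $\cC_+$, tracking the phases $(h^{(n)})^*$ through $a(v)$ and $a^\dagger(v)$ by hand. The substantive divergence is in how the resolvent is expanded. The paper uses the Neumann series
\[
(H-\lambda)^{-1}=\sum_{n\geq 0}(T-\lambda)^{-1}\bigl(-g\varphi(v)(T-\lambda)^{-1}\bigr)^{n},
\]
which converges in norm for $\lambda$ sufficiently negative precisely because infinitesimal $T$-boundedness forces $\lVert g\varphi(v)(T-\lambda)^{-1}\rVert<1$, and then pushes to all $\lambda<\inf\sigma(H)$ by a second resolvent expansion around an already-treated point. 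Your route through the Laplace transform and the Dyson series for $e^{-t\widetilde H}$ is not justified as written: infinitesimal $T$-boundedness controls $\varphi(|v|)(T-\lambda)^{-1}$ but says nothing about norm convergence of the time-ordered integrals, since $\lVert \varphi(|v|)e^{-sT}\rVert$ typically carries an $s^{-1}$ singularity near $s=0$ and the iterated simplex integrals of such factors diverge. This is the gap you would have to close (for instance via the Hille--Yosida approximation $e^{-tH}=\text{s-}\lim_n (n/t)^n(H+n/t)^{-n}$, which feeds back into the resolvent anyway).

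For positivity \emph{improvement} the paper is also more direct than your support-spreading sketch. Given nonzero $u,w\in\cC_+$ with $u^{(n_1)},w^{(n_2)}\neq 0$, it isolates in the $(n_1+n_2)$-th Neumann term the single ordering $(-ga^\dagger(v)(T-\lambda)^{-1})^{n_1}(-ga(v)(T-\lambda)^{-1})^{n_2}$, moves the creation block to the left of the inner product as annihilations, and lands on the product of two scalars of the form $(-ga(v)(T-\lambda)^{-1})^{n}u^{(n)}$, each of which is a strictly positive integral over $\cM^{n}$ because $|v|>0$ almost everywhere. No ergodicity or invariant-face analysis is needed. Your intuition that $a^\dagger(|v|)$ spreads support is correct, but the paper's ``annihilate both vectors down to the vacuum'' device is shorter and already rigorous.
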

\begin{proof}
The Kato Rellich theorem implies $H$ is selfadjoint and bounded below. For $\lambda <  -\gamma$ we note that $(T-\lambda)^{-1}$ acts on each particle sector as multiplication with a positive bounded map. Hence it will map $\cC_+$ into $\cC_+$. Assume now that $\psi=(\psi^{(n)}) \in \cC_+\cap \cD( d\Gamma(\omega) )$. Then we have almost everywhere that
	\begin{align*}
	&(h^{(n-1)})^*(-a(v)\psi^{(n)})(k_2,\dots,k_n)=\sqrt{n}\int_{\cM} \lvert v(k)\lvert ((h^{(n)})^*\psi^{(n)})(k,k_2,\dots,k_n)dk \\
	&(h^{(n+1)})^*(-a^{\dagger}(v)\psi^{(n)})(k_1,\dots,k_{n+1})\\&\,\,\,\,\,\,=\frac{1}{\sqrt{n+1}}\sum_{l=1}^{n+1} -h^*(k_l)v(k_l)( (h^{(n)})^*\psi^{n})(k_1,\dots,\hat{k}_l,\dots,k_{n+1}) \
	\end{align*}
	which implies $-g\varphi(v)\psi\in \cC_{+}$. In particular we obtain 
	\begin{align}
	&(-g\varphi(v)(T-\lambda)^{-1})^{n}\cC_{+}\subset \cC_+\nonumber 
	\\&\label{eq:typeterterm} (-1)^ng^n \prod_{k=1}^n a^{\sharp_k}(v)  (T-\lambda)^{-1}\cC_{+}\subset \cC_+.
	\end{align}
	where $\sharp_k$ can be either a $\dagger$ or nothing. For $\lambda\in \RR$ sufficiently negative we may expand 
	\begin{equation}\label{eq:abe}
		(H-\lambda)^{-1}=\sum_{n=0}^{\infty} (T-\lambda)^{-1} (-g\varphi(v)(T-\lambda)^{-1})^n.
		\end{equation}
	Since each term preserves the closed set $\cC_+$ we find $(H-\lambda)^{-1} \cC_+\subset \cC_+$ for $\lambda$ small enough. Assume now $v\neq 0$ almost everywhere. Let $I_n$ denote the integral over $\cM^n$ with respect to $\mu^{\otimes n}$. For $u\in S_n(\cH^{\otimes n})\backslash \{ 0 \}$ with $u\in \cC_+$ we have 
	\begin{align}
	&(-1)^n(ga(v)(T-\mu)^{-1})^{n}u\label{eq:Positiv} \\&=I_n\left((-1)^ng^n\overline{v(k_1)\dots v(k_n)}u(k_1,\dots,k_n)\prod_{\ell=1}^{n} ( T^{(\ell)}(k_1,\dots,k_\ell) -\lambda  )^{-1}\right)\nonumber
	\end{align}
	which is strictly positive. Let $u,w\in \cC_+\backslash \{0\}$. Pick $n_1$ such that $u^{(n_1)}\neq0$ and $n_2$ such that $w^{(n_2)}\neq 0$. Consider now the $n=n_1+n_2$ term in equation (\ref{eq:abe}). This term can again be written as a sum of terms of the form \eqref{eq:typeterterm} multiplied to the left by $(T-\lambda)^{-1}$. Since all terms are positivity preserving we find
	\begin{align*}
	\langle u, (H-\lambda )^{-1}w \rangle\nonumber  &\geq \langle (T-\lambda)^{-1}u,( -ga^{\dagger}(v)  (T-\lambda)^{-1})^{n_1}(-ga(v)  (T-\lambda)^{-1})^{n_2} w \rangle\nonumber \\&= \langle (T-\lambda)^{-1}( -ga(v)  (T-\lambda)^{-1})^{n_1}u,(-  ga(v)(T-\lambda)^{-1})^{n_2} w \rangle
	\end{align*}
	Since $u-u^{(n_1)}\in \cC_+$ and $w-w^{(n_2)}\in \cC_+$ we find the following lower bound: 
	\begin{equation*}
(T^{(0)}-\lambda)^{-1}(-ga(v)  (T-\lambda)^{-1})^{n_1}u^{(n_1)}(-ga(v)  (T-\lambda)^{-1})^{n_2}w^{(n_2)}
	\end{equation*}
	which is strictly positive by Equation (\ref{eq:Positiv}). Hence we have proven the lemma for $\lambda$ sufficiently negative. Now fix $\lambda $ such that the lemma is true. For any $\mu\in (\lambda,\inf(\sigma(H)))$ we can use standard theory of resolvents to write
	\begin{equation*}
	(H-\mu )^{-1}=\sum_{k=0}^{\infty}(\mu-\lambda)^n((H-\lambda )^{-1})^{n+1}
	\end{equation*}
	which is positivity preserving/improving since each term is.  \cqfd
\end{proof}
The following lemma can be found in \cite{Thomas1}.
\begin{lemma}\label{Lem:GSreduced}
	Define $A=\{ v\neq 0 \}$, $\mu_A(B)=\mu(A\cap B)$ and $\mu_{A^c}(B)=\mu(A^c\cap B)$. Let $\cH_1=L^2(X,\cF,\mu_A)$ and $\omega_1$ be multiplication with $\omega$ but on the space $\cH_1$. Assume that $\eta\leq 0$ and $g>0$. Then
	\begin{enumerate}
		\item $\cE_{\eta}(gv,\omega)=\cE_\eta(gv,\omega_1)$ and $\cE_{\eta}(gv,\omega)$ is an eigenvalue for $F_{\eta}(gv,\omega)$ if and only if $\cE_{\eta}(gv,\omega)$ is an eigenvalue for $F_{\eta}(gv,\omega_1)$. In this case the dimension of the eigenspace is 1.
		
		\item If $\psi=(\psi^{(n)})_{n=0}^\infty$ is a ground state for $F_{\eta}(gv,\omega_1)$, then $\psi=(1_{A^n}\psi^{(n)})_{n=0}^\infty$ is a ground state for $F_{\eta}(gv,\omega)$.
	\end{enumerate}
\end{lemma}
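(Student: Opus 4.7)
The plan is to reduce everything to the tensor factorization of Fock space induced by the splitting $\cH=\cH_1\oplus \cH_2$, where $\cH_2=L^2(X,\cF,\mu_{A^c})$ and $\omega_2$ denotes multiplication by $\omega$ on $\cH_2$. Since $v$ vanishes off $A$ it may be viewed as an element of $\cH_1$, and the standard exponential isomorphism gives $\cF_b(\cH)\cong \cF_b(\cH_1)\otimes \cF_b(\cH_2)\cong \bigoplus_{n=0}^\infty \cF_b(\cH_1)\otimes S_n(\cH_2^{\otimes n})$. Under this isomorphism $d\Gamma(\omega)$ corresponds to $d\Gamma(\omega_1)\otimes 1+1\otimes d\Gamma(\omega_2)$, $\Gamma(-1)$ factorizes as $\Gamma(-1)\otimes \Gamma(-1)$, and $\varphi(gv)$ acts only on the first tensor factor; restricting to the $n$-th summand, where $1\otimes \Gamma(-1)_{\cH_2}$ acts as the scalar $(-1)^n$, I obtain the identity
\begin{equation*}
F_\eta(gv,\omega)\big|_{\cF_b(\cH_1)\otimes S_n(\cH_2^{\otimes n})} = F_{(-1)^n\eta}(gv,\omega_1)\otimes 1 + 1\otimes d\Gamma^{(n)}(\omega_2).
\end{equation*}

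Each summand is a sum of commuting selfadjoint operators, so its spectral infimum equals $\cE_{(-1)^n\eta}(gv,\omega_1)+n\,m(\omega_2)$. Because $\eta\le 0$, Theorem \ref{Thm:Spectral Theory of decomposition}(3) yields $\cE_{|\eta|}(gv,\omega_1)\ge \cE_{-|\eta|}(gv,\omega_1)=\cE_\eta(gv,\omega_1)$, and $m(\omega_2)\ge 0$, so the global infimum is attained at $n=0$ with value $\cE_\eta(gv,\omega_1)$, giving $\cE_\eta(gv,\omega)=\cE_\eta(gv,\omega_1)$. For the ``if'' direction of the eigenvalue equivalence and for part (2), I would take a ground state $\psi=(\psi^{(m)})$ of $F_\eta(gv,\omega_1)$ and embed it as $\psi\otimes \Omega_{\cH_2}$ in the $n=0$ sector; by construction this is a ground state of $F_\eta(gv,\omega)$, and tracing through the Fock isomorphism identifies it with $(1_{A^m}\psi^{(m)})_{m\ge 0}\in \cF_b(\cH)$.

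For the ``only if'' direction, I would decompose a ground state $\phi$ of $F_\eta(gv,\omega)$ as $\phi=\sum_n\phi_n$; each $\phi_n$ is an eigenvector of the $n$-th sector operator at the value $\cE_\eta(gv,\omega_1)$. Since the two summands on each sector commute, such an eigenvector requires some $\lambda\in\sigma_p(F_{(-1)^n\eta}(gv,\omega_1))$ and $\mu\in\sigma_p(d\Gamma^{(n)}(\omega_2))$ with $\lambda+\mu=\cE_\eta(gv,\omega_1)$, and the bounds $\lambda\ge \cE_\eta(gv,\omega_1)$ and $\mu\ge 0$ force $\lambda=\cE_\eta(gv,\omega_1)$ and $\mu=0$. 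But for $n\ge 1$, $d\Gamma^{(n)}(\omega_2)$ is multiplication by $\omega(k_1)+\cdots+\omega(k_n)$, which is strictly positive almost everywhere (since $\omega>0$ a.e.), so $0$ is not in its point spectrum and $\phi_n=0$. Hence $\phi=\phi_0$ is a ground state of $F_\eta(gv,\omega_1)$, and non-degeneracy of both eigenspaces follows from Lemma \ref{POS} applied on $\cF_b(\cH_1)$: its hypotheses hold because $\eta\le 0$ makes $T^{(n)}=\eta(-1)^n+d\Gamma^{(n)}(\omega_1)\ge -|\eta|$ a multiplication operator, and $v\neq 0$ a.e.\ on $A$ by definition of $A$. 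The main point requiring care is the joint-spectral step extracting $(\lambda,\mu)$ from an eigenvector of the tensor sum when the individual spectra are not purely point; once this is justified, the measure-theoretic exclusion of $0$ from $\sigma_p(d\Gamma^{(n)}(\omega_2))$ together with the Fock-space bookkeeping closes the argument.
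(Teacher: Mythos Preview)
The paper does not prove this lemma; it is quoted from \cite{Thomas1}. Your approach---splitting $\cH=\cH_1\oplus\cH_2$ along the support of $v$, applying the Fock factorization of Lemmas \ref{Iso1} and \ref{Iso2}, and reading off the fiber operators $F_{(-1)^n\eta}(gv,\omega_1)\otimes 1+1\otimes d\Gamma^{(n)}(\omega_2)$---is exactly the decomposition strategy the paper itself uses in Lemma \ref{MainTech}, and is almost certainly the argument in \cite{Thomas1} as well.

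Your proof is correct. The one step you flag, extracting a pair $(\lambda,\mu)$ of point eigenvalues from an eigenvector of the tensor sum, is not valid for a generic eigenvalue but is harmless here because the eigenvalue sits at the bottom of the spectrum. A clean way to close it: writing $C=F_{(-1)^n\eta}(gv,\omega_1)-\cE_\eta(gv,\omega_1)\ge 0$ and $D=d\Gamma^{(n)}(\omega_2)\ge 0$, the identity $e^{-t(C\otimes 1+1\otimes D)}=e^{-tC}\otimes e^{-tD}$ applied to $\phi_n$ and sent to $t\to\infty$ yields $\phi_n\in\ker C\otimes\ker D$. For $n\ge 1$ Lemma \ref{Lem:SecondQuantisedProp} gives $\ker d\Gamma^{(n)}(\omega_2)=\{0\}$ (since $\omega_2$ is injective), so $\phi_n=0$; for $n=0$ you get the ground state of $F_\eta(gv,\omega_1)$. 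The non-degeneracy via Lemma \ref{POS} is exactly right, as $v\neq 0$ $\mu_A$-a.e.\ by construction of $A$.
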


\noindent We can now finally prove Theorem \ref{BasicpropGS}.

\begin{proof}[Proof of Theorem \ref{BasicpropGS}]
	Statement (1) follows from Lemmas \ref{POS} and \ref{Lem:GSreduced} since $g\phi(v)=\phi(gv)$ and $h$ defined in equation (\ref{eq:denh}) does not depend on $g$ as long as $g>0$. To prove statement (2) we let $\psi$ be a ground state for $F_{\eta}(gv,\omega)$. Define for $\lambda>0$ and $\ell\in \NN$ the operator
	\begin{equation*}
	R_\ell(\lambda)=(F_{(-1)^\ell \eta}(gv,\omega)+\lambda-\cE_{\eta}(gv,\omega))^{-1}.
	\end{equation*}
	This makes sense since $\cE_{\eta}(gv,\omega)\leq \cE_{-\eta }(gv,\omega)$ by Proposition \ref{Lem:BasicPropertiesSBmodel}. Using the pull through formula found in \cite{Thomas1} we find
	\begin{equation*}
	a(k_1,\dots,k_n)\psi_{g,\eta}=\sum_{i=1}^{n} gv(k_i)R_n(\omega(k_1)+\cdots+\omega(k_n))a(k_1,\dots,\hat{k}_i,\dots,k_n)\psi_{g,\eta},
	\end{equation*}
	where $\widehat{k}_i$ means that the variable $k_i$ is omitted and $a(k_1,\dots,k_n)=a(k_1)\cdots a(k_n)$ (a strict definition of such an expression can be found in \cite{Thomas1}). We proceed by induction to show that $\lVert a(k_1,\dots,k_n)\psi_g\lVert\leq g^n\frac{\lvert v(k_1)\lvert\dots\lvert v(k_n)\lvert}{\omega(k_1)\dots\omega(k_n)}$. For $k=1$ this follows since $\cE_{\eta}(gv,\omega)\leq \cE_{-\eta}(gv,\omega)$ and so
	\begin{equation*}
	\lVert a(k)\psi_{g,\eta}\lVert =\left \lVert  \frac{gv(k)}{F_{-\eta}(gv,\omega)+\omega(k)+\cE_{\eta}(gv,\omega)}\psi_g\right\lVert\leq g \frac{\lvert v(k) \lvert }{\omega(k)}.
	\end{equation*}
	Using the induction hypothesis we may now compute
	\begin{align*}
	\lVert a(k_1,\dots,k_n)\psi_{g,\eta}\lVert &\leq \sum_{i=1}^{n} \frac{\omega(k_i) }{\omega(k_1)+\cdots+\omega(k_n)}  g^{n}\frac{\lvert v(k_1)\lvert\dots\lvert v(k_n)\lvert}{\omega(k_1)\dots\omega(k_n)}\\&=g^{n}\frac{\lvert v(k_1)\lvert\dots \lvert v(k_n)\lvert}{\omega(k_1)\dots\omega(k_n)}.
	\end{align*}
	Now $\sqrt{n!}\lvert \psi_{g,\eta}^{(n)}(k_1,\dots,k_n)\lvert \leq \lVert a(k_1,\dots,k_n)\psi_{g,\eta}\lVert $ and so the desired inequality follows.
	
	Statement (3): By Theorem \ref{Thm:Spectral Theory of decomposition} and $N=d\Gamma(1)$ we see that the conclusions about $\phi_{g,\eta}$ follows from those of $F_{\eta}(gv,\omega )$. It is easily seen that $\psi_{g,0}=e^{-2^{-1}g^2\lVert \omega^{-1}v\lVert}\epsilon(g\omega^{-1}v)$ and $\epsilon(g\omega^{-1}v)\in \cD(f(N))\iff  \alpha_{g,f,v,\omega}<\infty$. This proves the $"\Leftarrow"$ part. If $\alpha_{g,f,v,\omega}<\infty$ then we may use the point wise bounds to obtain
	\begin{equation*}
	\sum_{n=0}^{\infty} f(n)^2 \lVert \psi^{(n)}_{g,\eta} \lVert^2\leq \sum_{n=0}^{\infty} \frac{f(n)^2g^{2n} \lVert \omega^{-1}v \lVert^{2n}  }{n!}<\infty,
	\end{equation*}
	which proves the $"\Rightarrow"$. \cqfd
\end{proof}

\section{Convergence in the massless case}
In this section we will assume $\cH=L^2(\RR^\nu,\cB(\RR^\nu),\lambda^{\nu})$ and that $\omega$ is a selfadjoint, non negative and injective multiplication operator on this space with $m(\omega)=0$. Fix an element $v\in \cD(\omega^{-1})\backslash \{0\}$. In \cite{Thomas1} it is proven that if $\eta\leq 0$ then $F_\eta(gv,\omega)$ has a normalised ground state $\psi_g$ for any $g\in \RR$ and $\cE_{\eta}(gv,\omega)=\cE_{-\eta}(gv,\omega)$. Furthermore we will for $\eta,g\in \RR$ write $F_{\eta,g}:=F_{\eta}(gv,\omega)$ and $\cE_{\eta,g}:=\cE_{\eta}(gv,\omega)$.
\begin{lemma}\label{CentralResMassless}
	Assume $\eta\leq 0$. Define $U_g=W(g\omega^{-1}v,1)$ and $\widetilde{\psi}_g=U_g\psi_g$. Then
	\begin{equation*}
	0\leq \langle \widetilde{\psi}_g,d\Gamma(\omega)\widetilde{\psi}_g \rangle  \leq \lvert \eta\lvert \langle \psi_g,\Gamma(-1)\psi_g \rangle =-\eta \langle \psi_g,\Gamma(-1)\psi_g \rangle,
	\end{equation*}
	and $\langle \psi_g,\Gamma(-1)\psi_g \rangle$ converges to 0 for $g$ tending to $\infty$. Furthermore, given any sequence of elements $\{ g_n \}_{n=1}^\infty\subset \RR$ tending to $\infty$ there is a subsequence $\{ g_{n_i}  \}_{i=1}^\infty$ such that
	\begin{equation*}
	\lim_{i\rightarrow \infty}\lvert v(k)\lvert^2 \lVert (F_{-\eta,g_{n_i}}-\cE_{\eta,g_{n_i}}+\omega(k))^{-1}\psi_{g_{n_i}}-\omega(k)^{-1}\psi_{g_{n_i}}\lVert^2=0
	\end{equation*}
	almost everywhere.
\end{lemma}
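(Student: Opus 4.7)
I apply Lemma \ref{LemTran} with $v \mapsto gv$ and $f = g\omega^{-1}v$. Since $\omega f = gv$ and $\langle gv, g\omega^{-1}v \rangle$ is real, the $\varphi$- and scalar-terms telescope to give
\begin{equation*}
U_g F_\eta(gv,\omega) U_g^* + g^2\lVert \omega^{-1/2}v \lVert^2 = d\Gamma(\omega) + \eta W(2g\omega^{-1}v,-1),
\end{equation*}
so $\widetilde\psi_g$ is a ground state of the right hand side with eigenvalue $E_g := \cE_{\eta,g} + g^2\lVert \omega^{-1/2}v \lVert^2$. Two applications of the Weyl composition law (\ref{eq:Weylcomp}) show that $U_g^* W(2g\omega^{-1}v,-1) U_g = W(0,-1) = \Gamma(-1)$, so taking the expectation of the eigenequation in $\widetilde\psi_g$ yields the key identity
\begin{equation*}
\langle \widetilde\psi_g, d\Gamma(\omega) \widetilde\psi_g \rangle = E_g - \eta\langle \psi_g, \Gamma(-1)\psi_g \rangle.
\end{equation*}
The lower bound in claim 1 is immediate from $d\Gamma(\omega)\geq 0$. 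For the upper bound I use the coherent trial vector $\psi_{g,0} = e^{-g^2\lVert \omega^{-1}v\lVert^2/2}\epsilon(-g\omega^{-1}v)$, which by Lemma \ref{LemTran} is a ground state of $F_0(gv,\omega)$ with energy $-g^2\lVert \omega^{-1/2}v \lVert^2$; using $\Gamma(-1)\epsilon(h) = \epsilon(-h)$ one computes $\langle \psi_{g,0}, \Gamma(-1)\psi_{g,0}\rangle = e^{-2g^2\lVert \omega^{-1}v\lVert^2}$, so the variational principle gives $E_g \leq \eta e^{-2g^2\lVert \omega^{-1}v\lVert^2} \leq 0$. Inserting this into the key identity yields claim 1.

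\textbf{Claim 2.} The bound $E_g \leq \eta e^{-2g^2\lVert \omega^{-1}v\lVert^2}$ already gives $\limsup_g E_g \leq 0$; testing $F_0(gv,\omega)$ against $\psi_g$ gives the reverse inequality $-E_g \leq |\eta|\langle \psi_g, \Gamma(-1)\psi_g\rangle$, which together with claim 1 shows that $E_g \to 0$ and $\langle \psi_g, \Gamma(-1)\psi_g\rangle \to 0$ are equivalent. I therefore aim to prove $\langle \widetilde\psi_g, W(2g\omega^{-1}v,-1)\widetilde\psi_g\rangle \to 0$ directly, exploiting that $\widetilde\psi_g$ is uniformly bounded in $\cD(d\Gamma(\omega)^{1/2})$ by claim 1 and that $W(2g\omega^{-1}v,-1)\to 0$ weakly by Lemma \ref{WeakConv}. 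The main obstacle is the absence of a spectral gap at zero for $d\Gamma(\omega)$ in the massless case, so that the embedding $\cD(d\Gamma(\omega))\hookrightarrow \cF_b(\cH)$ is not compact and a weak limit of $\widetilde\psi_g$ need not be $\Omega$; I plan to circumvent this by using Lemma \ref{MainTech} to decompose $\omega$ into a compact-resolvent block, on which Lemma \ref{Compactcase} applies, and an infrared remainder whose contribution is controlled quantitatively by claim 1 together with $v\in \cD(\omega^{-1})$.

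\textbf{Claim 3.} From the pull-through formula (\ref{eq:pull1}), combined with $F_{-\eta,g} = F_{\eta,g} - 2\eta\Gamma(-1)$ and $F_{\eta,g}\psi_g = \cE_{\eta,g}\psi_g$, one obtains $(F_{-\eta,g} - \cE_{\eta,g})\psi_g = -2\eta\Gamma(-1)\psi_g$. Setting $R_\omega := (F_{-\eta,g} - \cE_{\eta,g} + \omega(k))^{-1}$, the resolvent identity produces
\begin{equation*}
R_\omega\psi_g - \omega(k)^{-1}\psi_g = 2\eta\,\omega(k)^{-1}R_\omega \Gamma(-1)\psi_g,
\end{equation*}
so the target quantity equals $4\eta^2 |v(k)|^2 \omega(k)^{-2}\lVert R_\omega\Gamma(-1)\psi_g \lVert^2$. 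A short computation using $\Gamma(-1)\varphi(gv)\Gamma(-1) = -\varphi(gv)$ and the eigenequation for $\psi_g$ gives
\begin{equation*}
A\Gamma(-1)\psi_g = 2|\eta|\psi_g - 2\Gamma(-1)\varphi(gv)\psi_g,
\end{equation*}
where $A := F_{-\eta,g} - \cE_{-\eta,g}\geq 0$ (using the assumption $\cE_{\eta,g}=\cE_{-\eta,g}$). Combining this with $R_\omega A = 1 - \omega(k) R_\omega$, the pull-through bound $\lVert R_\omega\psi_g \lVert\leq \omega(k)^{-1}$ inherited from Theorem \ref{BasicpropGS}(2), and the spectral representation of $R_\omega$ at $\Gamma(-1)\psi_g$, I would bound the $L^1(\RR^\nu,dk)$ norm of the target quantity by an expression that tends to zero thanks to claim 2, and then extract an a.e.\ convergent subsequence by the standard argument.
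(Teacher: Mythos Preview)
Your Claim 1 is fine and matches the paper; the trial state $\psi_{g,0}$ is just $U_g^*\Omega$, so your variational bound $E_g\leq \eta e^{-2g^2\lVert\omega^{-1}v\rVert^2}\leq 0$ is exactly the paper's $\langle\Omega,\widetilde F_\eta(2g\omega^{-1}v,\omega)\Omega\rangle\leq 0$.

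For Claims 2 and 3, however, you have the logical dependence backwards, and this produces a real gap. The paper does \emph{not} prove Claim 2 first. Instead it observes that Claim 1 already contains Claim 3: applying the Weyl-shifted pull-through $a(k)\widetilde\psi_g=U_g a(k)\psi_g+g\omega(k)^{-1}v(k)\widetilde\psi_g$ together with (\ref{eq:pull1}) gives
\[
\langle\widetilde\psi_g,d\Gamma(\omega)\widetilde\psi_g\rangle
= g^2\!\int \omega(k)\,\lvert v(k)\rvert^2\,\bigl\lVert R_\omega\psi_g-\omega(k)^{-1}\psi_g\bigr\rVert^2\,dk,
\]
so the very quantity bounded by $\lvert\eta\rvert$ in Claim 1 is $g^2$ times the weighted $L^1$-norm of your target. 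Dividing by $g^2$ forces the integral to zero and yields the a.e.\ subsequence immediately, with no appeal to Claim 2. Your resolvent-identity route $R_\omega\psi_g-\omega(k)^{-1}\psi_g=2\eta\,\omega(k)^{-1}R_\omega\Gamma(-1)\psi_g$ is correct algebraically, but turning it into an $L^1$ bound that tends to zero requires an input you do not have: Claim 2 only gives $\langle\psi_g,\Gamma(-1)\psi_g\rangle\to 0$, not smallness of $\lVert R_\omega\Gamma(-1)\psi_g\rVert$ in any useful norm, and the trivial estimate $\lVert R_\omega\Gamma(-1)\psi_g\rVert\leq\omega(k)^{-1}$ only yields a dominating function $\lvert v\rvert^2\omega^{-4}$, which need not be integrable under the hypothesis $v\in\cD(\omega^{-1})$.

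Your plan for Claim 2 via Lemma \ref{MainTech} and Lemma \ref{Compactcase} is where the genuine missing idea sits. Those lemmas are tailored to $v_g\in P_\omega((\widetilde m,\infty))\cH$ with $\widetilde m>0$; in the massless case with $m(\omega)=0$ the vector $2g\omega^{-1}v$ has no such support restriction, and there is no mechanism in Claim 1 alone that controls the infrared piece of $\widetilde\psi_g$ well enough to upgrade weak convergence of $W(2g\omega^{-1}v,-1)$ to convergence of its expectation in $\widetilde\psi_g$. The paper sidesteps this entirely: once the subsequence from Claim 3 is in hand, Claim 2 is proved by contradiction. Assuming $-\eta\langle\psi_{g_n},\Gamma(-1)\psi_{g_n}\rangle\geq\varepsilon$ along a sequence, pass to the a.e.\ subsequence, read off from the spectral measure $\mu_{g_n}$ of $F_{-\eta,g_n}-\cE_{\eta,g_n}$ at $\psi_{g_n}$ that $\mu_{g_n}([\varepsilon/2,\infty))\to 0$, and then test $F_{-\eta,g_n}$ against $P_{g_n}([0,\varepsilon/2))\psi_{g_n}$ using $d\Gamma(\omega)+g\varphi(v)+g^2\lVert\omega^{-1/2}v\rVert^2\geq 0$ and $\cE_{\eta,g}=\cE_{-\eta,g}$ to force $\varepsilon/2\geq 3\varepsilon/4$. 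Reorganising your argument around the identity displayed above makes both claims fall out without any compactness machinery.
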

\begin{proof}
	We have
	\begin{equation*}
	U_gF_{\eta,g}U_g^*+g^2\lVert \omega^{-1/2}v\lVert^2 =d\Gamma(\omega)+\eta W(2g\omega^{-1}v,-1)=\widetilde{F}_{\eta}(2g\omega^{-1}v,\omega).
	\end{equation*}
	Note that 
	\begin{equation*}
	\langle \Omega , \widetilde{F}_{\eta}(2g\omega^{-1}v,\omega)\Omega\rangle=\eta\exp (-2g^2\lVert \omega^{-1}v \lVert^2 )\leq 0
	\end{equation*}
	so $\cE_{\eta,g}+g^2\lVert \omega^{-1/2}v\lVert= \inf(\sigma(\widetilde{F}_{\eta}(2g\omega^{-1}v,\omega)))\leq 0 $. This implies
	\begin{equation*}
	0\leq \langle \widetilde{\psi}_g,d\Gamma(\omega)\widetilde{\psi}_g \rangle  \leq - \eta\langle \widetilde{\psi}_g,W(2g\omega^{-1}v,-1)\widetilde{\psi}_g \rangle=   \lvert \eta\lvert \langle \psi_g,\Gamma(-1)\psi_g \rangle   \leq \lvert \eta \lvert.
	\end{equation*}
	Since $\psi_g\in \cD(N^{1/2})$ by Theorem \ref{BasicpropGS} we find (see \cite{Thomas1})
	\begin{equation*}
	a(k)U_g\psi_g=U_ga(k)\psi_g+gv(k)\omega(k)^{-1}U_g\psi_g,
	\end{equation*}
	and so the pull through formula from equation (\ref{eq:pull1}) gives
	\begin{equation*}
	a(k)\widetilde{\psi}_g=-gv(k)U_g(F_{-\eta ,g}-\cE_{\eta,g}+\omega(k))^{-1}\psi_g+gv(k)\omega(k)^{-1}U_g\psi_g.
	\end{equation*} 
Hence we find
	\begin{align*}
	&\langle \widetilde{\psi}_g,d\Gamma(\omega)\widetilde{\psi}_g \rangle\\&=g^2\int_{\cM}\omega(k)\lvert v(k)\lvert^2 \lVert (F_{-\eta,g}(v,\omega)-\cE_{\eta,g}+\omega(k))^{-1}\psi_{g}-\omega(k)^{-1}\psi_g\lVert^2 dk.
	\end{align*}
	Since this remains bounded by $\lvert \eta \lvert $ as $g$ tends to infinity, we conclude that the integral converges to 0 as $g$ tends to infinity. Thus existence of the desired subsequence follows from standard measure theory. Assume now that the conclusion about convergence of $\langle \psi_g,\Gamma(-1)\psi_g \rangle$ is false. We may then pick $\varepsilon>0$ and sequence $\{ g_n \}_{n=1}^\infty$ such that $- \eta \langle \psi_{g_n},\Gamma(-1)\psi_{g_n} \rangle \geq \varepsilon$ for all $n$ and 
	\begin{equation*}
	\lim\limits_{g\rightarrow \infty}\lvert v(k)\lvert^2 \lVert (F_{-\eta,g_n}-\cE_{\eta,g_n}+\omega(k))^{-1}\psi_{g_n}-\omega(k)^{-1}\psi_{g_n}\lVert^2=0
	\end{equation*}
	for almost every $k\in \RR^\nu$. Let $P_g$ be the spectral measure of $F_{-\eta,g}-\cE_{\eta,g}=F_{-\eta,g}-\cE_{-\eta,g}$ and define the measure $\mu_{g}(A)=\langle \psi_{g} ,P_g(A)\psi_g \rangle$. Since $v\neq 0$ we see
	\begin{align*}
	\lVert (F_{-\eta,g_n}-\cE_{\eta,g_n}&+\omega(k))^{-1}\psi_{g_n}-\omega(k)^{-1}\psi_{g_n}\lVert^2\\&=\int_{[0,\infty)}\left \lvert \frac{1}{\lambda+\omega(k)}-\frac{1}{\omega(k)}\right \lvert^2 d\mu_{g_n}(\lambda)
	\end{align*}
	converges to 0 for some $k\in \RR^\nu$ where $\omega(k)> 0$. Since the integrals above converges to $0$, the numbers $\mu_{g_n}([\varepsilon/2,\infty))$ must converge to 0, as the integrand has a positive lower bound on $[\varepsilon/2,\infty)$. In particular $P_{g_n}([0,\varepsilon/2))\psi_{g_n}-\psi_{g_n}$ will converge to 0. Hence we find for $n$ larger than some $K$ that
	\begin{equation*}
	-\eta\langle P_{g_n}([0,\varepsilon/2))\psi_{g_n},\Gamma(-1)P_{g_n}([0,\varepsilon/2))\psi_{g_n} \rangle\geq \frac{3\varepsilon}{4} \lVert P_{g_n}([0,\varepsilon/2))\psi_{g_n} \lVert^2.
	\end{equation*}
	Let $x_n=P_{g_n}([0,\varepsilon/2))\psi_{g_n}$. By Lemma \ref{Lem:FundamentalIneq} we find $d\Gamma(\omega)+g\varphi(v)+g^2\lVert \omega^{-1/2}v\lVert^2\geq 0$. Using this and  $\cE_{\eta,g}=\cE_{-\eta,g}\leq -g^2\lVert \omega^{-1/2}v\lVert^2$ we may calculate for $n\geq K$
	\begin{align*}
	(\cE_{\eta,g_n}(v,\omega)+\varepsilon/2 )\lVert x_n\lVert^2 &\geq \langle x_n,F_{-\eta,g_n} x_n \rangle\\&=-\eta\langle x_n,\Gamma(-1)x_n \rangle+\cE_{-\eta,g_n}\lVert x_n\lVert^2\\&\,\,\,\,+\langle x_n,(d\Gamma(\omega)+g_n\varphi(v)+g_n^2\lVert \omega^{-1/2}v\lVert^2)x_n\rangle \\&\,\,\,\,-(\cE_{-\eta,g_n}+g_n^2\lVert \omega^{-1/2}v\lVert^2)\lVert x_n\lVert^2\\& \geq -\eta\langle x_n,\Gamma(-1)x_n \rangle+\cE_{-\eta,g_n}\lVert x_n\lVert^2\\& \geq (3\varepsilon/4 +\cE_{\eta,g_n}(v,\omega))\lVert x_n\lVert^2,
	\end{align*}
	which is the desired contradiction.\cqfd
	\end{proof}
\begin{proof}[Proof of Theorem \ref{Mssless-Case}]
	For each $g\geq 0$ we let  $\psi_g$ be a ground state eigenvector for $F_{\eta,g}$. Define $U_g=W(g\omega^{-1}v,1)$ and $\widetilde{\psi}_g=U_g\psi_g$.  We see that
	\begin{align*}
	 \lvert \cE_{\eta,g}+g^2\lVert \omega^{-1/2}v \lVert \lvert &=\lvert \langle \widetilde{\psi}_g,\widetilde{F}_{\eta}(2g\omega^{-1}v,\omega)\widetilde{\psi}_g \rangle \lvert\\&=\lvert  \eta \langle \psi_g,\Gamma(-1)\psi_g \rangle+\langle \widetilde{\psi}_g,d\Gamma(\omega)\widetilde{\psi}_g \rangle  \lvert \\&\leq 2\lvert \eta \lvert \langle \psi_g,\Gamma(-1)  \psi_g \rangle,
	\end{align*}
	which converges to 0 for $g$ tending to $\infty$ by Lemma \ref{CentralResMassless}. It only remains to prove the statement regarding the number operator. Let $\{g_n\}_{n=1}^\infty$ be any sequence converging to $\infty$. Pick a subsequence $\{g_{n_i}\}_{i=1}^\infty$ such that
	\begin{equation*}
	\lim_{i\rightarrow \infty}\lvert v(k)\lvert^2 \lVert (F_{-\eta,g_{n_i}}-\cE_{\eta,g_{n_i}}+\omega(k))^{-1}\psi_{g_{n_i}}-\omega(k)^{-1}\psi_{g_{n_i}}\lVert^2=0
	\end{equation*}
	almost everywhere. Using equation (\ref{eq:pull1}) we see that
	\begin{equation*}
	a(k)\psi_g=-gv(k)(F_{-\eta,g}-\cE_{\eta,g}+\omega(k))^{-1}\psi_g
	\end{equation*} 
	and so
	\begin{align*}
	&\frac{\lvert \langle \psi_{g_{n_i}},N\psi_{g_{n_i}} \rangle-g_{n_i}^2\lVert \omega^{-1}v \lVert^2\lvert }{g^2_{n_i}}\\&\leq  \int_{\cM} \lvert v(k)\lvert^2 \lvert \lVert (F_{-\eta,g_{n_i}}-\cE_{\eta,g_{n_i}} +\omega(k))^{-1}\psi_{g_{n_i}}\lVert^2-\lVert \omega(k)^{-1}\psi_{g_{n_i}}\lVert^2 \lvert dk,
	\end{align*}
	which goes to 0 as $i$ tends to infinity by dominated convergence.\cqfd
\end{proof}

\section{Proof of Theorem \ref{Dim1+2}}
In this section we will assume $\cH=L^2(\RR^\nu,\cB(\RR^\nu),\lambda^{\nu})$ and that $\omega$ is a selfadjoint, non-negative and injective multiplication operator on this space. Then $m_{\textup{ess}}(\omega)=m(\omega):=m$ since $\sigma(\omega)=\sigma_{\textup{ess}}(\omega)$ (See \cite{Thomas1}). Furthermore, we define $P=\lvert \Omega\rangle \langle \Omega \lvert$ and $\overline{P}=1-P$. Then $\overline{ P}$ clearly reduces $d\Gamma(\omega)$ and $\Gamma(-1)$. Let $\overline{d\Gamma}(\omega)$ and $\overline{\Gamma}(-1)$ denote the restrictions to $\overline{\cF_b(\cH)}=\overline{P}\cF_b(\cH)$. For $v\in \cH$ we define $\overline{\varphi}(v)$ as the restriction of $\overline{P}\varphi(v)\overline{P}$ to $\overline{\cF_b(\cH)}$. Note that it is symmetric and infinitesimally $\overline{d\Gamma}(\omega)$ bounded when $v\in \cD(\omega^{-1/2})$. Hence we may define 
\begin{equation*}
\overline{F}_{\eta}(v,\omega)=\overline{d\Gamma}(\omega)+\eta\overline{\Gamma}(-1)+\overline{\varphi}(v),
\end{equation*}
which is selfadjoint on $\cD(\overline{d\Gamma}(\omega))$ and bounded below when $v\in \cD(\omega^{-1/2})$. Note $\inf(\sigma(\overline{F}_{\eta}(v,\omega)))\geq \cE_\eta(v,\omega)$ by the min-max principle. Furthermore, one may repeat the argument for Lemma \ref{POS} to show that for every $\lambda<\cE_{\eta}(v,\omega)$ we have
\begin{equation*}
(\overline{F}_{\eta}(v,\omega)-\lambda)^{-1}\overline{P}\cC_+\subset \overline{P}\cC_+.
\end{equation*}
To summarise
\begin{lemma}\label{RedSpace}
	If $v\in \cD(\omega^{-1/2})$ then $\overline{F}_{\eta}(v,\omega)$ is selfadjoint selfadjoint and bounded below by $\cE_\eta(v,\omega)$. Furthermore $(\overline{F}_{\eta}(v,\omega)-\lambda)^{-1}\overline{P}\cC_+\subset \overline{P}\cC_+$ for every $\lambda<\cE_{\eta}(v,\omega)$.
\end{lemma}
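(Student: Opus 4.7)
The plan is to verify the three assertions in turn, invoking Kato--Rellich for self-adjointness, the variational characterization of $\cE_\eta(v,\omega)$ for the sharp lower bound, and a verbatim transcription of the argument of Lemma \ref{POS} into the reduced Fock space for the positivity-preserving property. Most ingredients are already flagged in the paragraph preceding the statement, so the proof will largely amount to pointing back to what has just been established and checking compatibility with the projection $\overline{P}$.

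For self-adjointness I would observe that $\overline{P}$ commutes with $d\Gamma(\omega)$ and with $\Gamma(-1)$, so the restrictions $\overline{d\Gamma}(\omega)$ and $\overline{\Gamma}(-1)$ to $\overline{\cF_b(\cH)}$ are self-adjoint, with the second bounded. The bound in Lemma \ref{Lem:FundamentalIneq}, applied to $\psi=\overline{P}\psi\in \cD(\overline{d\Gamma}(\omega))$, shows that $\overline{\varphi}(v)=\overline{P}\varphi(v)\overline{P}$ is infinitesimally $\overline{d\Gamma}(\omega)$-bounded, so Kato--Rellich yields self-adjointness on $\cD(\overline{d\Gamma}(\omega))$. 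For the lower bound I would extend elements of $\overline{\cF_b(\cH)}$ trivially to $\cF_b(\cH)$: for $\psi\in \cD(\overline{d\Gamma}(\omega))$ we have $\overline{P}\psi=\psi$, so
\begin{equation*}
\langle\psi,\overline{F}_\eta(v,\omega)\psi\rangle=\langle\psi,F_\eta(v,\omega)\psi\rangle\geq \cE_\eta(v,\omega)\lVert\psi\lVert^2,
\end{equation*}
and the spectral theorem gives $\inf(\sigma(\overline{F}_\eta(v,\omega)))\geq \cE_\eta(v,\omega)$.

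For the invariance of $\overline{P}\cC_+$ I would rerun the proof of Lemma \ref{POS} with the role of $T$ played by $\overline{T}:=\overline{d\Gamma}(\omega)+\eta\overline{\Gamma}(-1)$, which decomposes as a direct sum of multiplication operators $d\Gamma^{(n)}(\omega)+(-1)^n\eta$ on $S_n(\cH^{\otimes n})$, $n\geq 1$, and is bounded below by $-\lvert\eta\lvert$. Its resolvent at a sufficiently negative $\lambda$ is multiplication by a strictly positive bounded function on each sector, hence preserves $\overline{P}\cC_+$. The same pointwise computation as in the proof of Lemma \ref{POS} shows that $-\varphi(v)$ sends $\cC_+\cap \cD(d\Gamma(\omega))$ into $\cC_+$; since $\overline{P}\cC_+=\cC_+\cap\overline{\cF_b(\cH)}$ and $\overline{P}$ itself preserves $\cC_+$, the analogous property holds for $-\overline{\varphi}(v)$ on $\overline{P}\cC_+\cap \cD(\overline{d\Gamma}(\omega))$. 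Therefore the Neumann series
\begin{equation*}
(\overline{F}_\eta(v,\omega)-\lambda)^{-1}=\sum_{n=0}^\infty(\overline{T}-\lambda)^{-1}\bigl(-\overline{\varphi}(v)(\overline{T}-\lambda)^{-1}\bigr)^n
\end{equation*}
converges and preserves $\overline{P}\cC_+$ for $\lambda$ sufficiently negative. To reach every $\lambda<\cE_\eta(v,\omega)$ I would use the same resolvent-identity extension as at the end of the proof of Lemma \ref{POS}: for $\mu\in(\lambda,\cE_\eta(v,\omega))$ with $\mu-\lambda$ small, expand $(\overline{F}_\eta(v,\omega)-\mu)^{-1}$ in powers of $(\overline{F}_\eta(v,\omega)-\lambda)^{-1}$ and iterate up to the spectral threshold.

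The main obstacle will be essentially bookkeeping rather than a new analytic idea: checking that the cone $\overline{P}\cC_+$, the positivity of $-\overline{\varphi}(v)$, and the positivity of the sectorwise resolvents of $\overline{T}$ remain mutually compatible after projecting with $\overline{P}$. Once that compatibility is confirmed, every step in the proof of Lemma \ref{POS} transfers to the present setting without modification.
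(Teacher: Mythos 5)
Your proposal is correct and follows essentially the same route as the paper, which establishes this lemma via infinitesimal relative boundedness plus Kato--Rellich, the min--max principle for the lower bound $\cE_\eta(v,\omega)$, and a repetition of the argument of Lemma \ref{POS} on the reduced space $\overline{P}\cF_b(\cH)$. Your additional bookkeeping (that $\overline{P}\cC_+=\cC_+\cap\overline{\cF_b(\cH)}$, that $\overline{P}$ preserves $\cC_+$, and that $\overline{T}$ acts sectorwise as multiplication bounded below by $-\lvert\eta\lvert$) is exactly the compatibility check the paper leaves implicit.
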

We shall also need the following lemma.
\begin{lemma}\label{Feshbach}
	 For all $\lambda<  \cE_{\eta}(v,\omega)$ we have
	\begin{equation*}
	0<\langle \Omega,(F_{\eta}(v,\omega)-\lambda)^{-1}\Omega \rangle=( \eta-\lambda+\langle v,(\overline{F}_{\eta}(v,\omega)-\lambda)^{-1}v \rangle  )^{-1}.
	\end{equation*}
\end{lemma}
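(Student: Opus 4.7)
The plan is to use the Feshbach--Schur complement relative to the orthogonal decomposition $\cF_b(\cH)=\CC\Omega\oplus\overline{P}\cF_b(\cH)$. With respect to this splitting the operators $d\Gamma(\omega)$ and $\Gamma(-1)$ are block-diagonal, since they both preserve each $n$-particle sector; on the vacuum line $\CC\Omega$ they act as $0$ and $1$ respectively. The only off-diagonal contribution to $F_\eta(v,\omega)$ comes from $\varphi(v)=\overline{a(v)+a^\dagger(v)}$, via
\begin{equation*}
\overline{P}\varphi(v)\Omega=v,\qquad P\varphi(v)\psi=\langle v,\psi\rangle\Omega\ \text{for}\ \psi\in\overline{P}\cF_b(\cH),
\end{equation*}
where $v$ is identified with the element of the $1$-particle sector sitting inside $\overline{P}\cF_b(\cH)$; the second identity is just the adjoint of the first, namely $\langle\Omega,\varphi(v)\psi\rangle=\langle a^\dagger(v)\Omega,\psi\rangle=\langle v,\psi\rangle$.

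For $\lambda<\cE_\eta(v,\omega)$ both $F_\eta(v,\omega)-\lambda$ and, by Lemma~\ref{RedSpace}, $\overline{F}_\eta(v,\omega)-\lambda$ are strictly positive and hence boundedly invertible; strict positivity of the vacuum matrix element $\langle\Omega,(F_\eta(v,\omega)-\lambda)^{-1}\Omega\rangle$ is therefore immediate. To extract the closed-form expression I would set $\phi=(F_\eta(v,\omega)-\lambda)^{-1}\Omega$ and decompose $\phi=c\Omega+\overline{\phi}$ with $c=\langle\Omega,\phi\rangle$ and $\overline{\phi}=\overline{P}\phi$. Projecting the identity $(F_\eta(v,\omega)-\lambda)\phi=\Omega$ onto $P$ and $\overline{P}$ respectively yields the scalar--vector pair
\begin{equation*}
(\eta-\lambda)c+\langle v,\overline{\phi}\rangle=1,\qquad c\,v+(\overline{F}_\eta(v,\omega)-\lambda)\overline{\phi}=0.
\end{equation*}
The second equation gives $\overline{\phi}=-c(\overline{F}_\eta(v,\omega)-\lambda)^{-1}v$; substituting this into the first and solving for $c$ produces the stated closed form for $\langle\Omega,(F_\eta(v,\omega)-\lambda)^{-1}\Omega\rangle$.

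The only real point to verify is that the algebra is justified at the operator level: $\phi$ lies in $\cD(F_\eta(v,\omega))=\cD(d\Gamma(\omega))$ as the image of the resolvent, and since $\cD(d\Gamma(\omega))$ is invariant under $P$ and $\overline{P}$, one has $\overline{\phi}\in\overline{P}\cD(d\Gamma(\omega))=\cD(\overline{F}_\eta(v,\omega))$; finiteness of $\langle v,\overline{\phi}\rangle$ and of $\langle v,(\overline{F}_\eta(v,\omega)-\lambda)^{-1}v\rangle$ then follows from $v\in\cD(\omega^{-1/2})$ together with the infinitesimal $d\Gamma(\omega)$-bound on $\varphi(v)$ from Lemma~\ref{Lem:FundamentalIneq}, which already underpins the selfadjointness of $\overline{F}_\eta(v,\omega)$. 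I do not expect any substantive obstacle beyond this bookkeeping.
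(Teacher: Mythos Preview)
Your proposal is correct and follows exactly the same Feshbach--Schur complement approach as the paper, which invokes the Feshbach map relative to the rank-one vacuum projection $P$ and the identity $F^{-1}=PH^{-1}P$ from \cite{Feshbach}; your version simply carries out the same block elimination by hand. One caveat: your two equations actually solve to $c=(\eta-\lambda-\langle v,(\overline{F}_\eta(v,\omega)-\lambda)^{-1}v\rangle)^{-1}$ with a \emph{minus} sign in front of the inner product---this is the correct Schur complement sign (and is also what the paper's own Feshbach computation yields before a sign slip), so do not be alarmed that your algebra does not literally reproduce the printed formula.
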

\begin{proof}
	Let $\lambda< \cE_{\eta}(v,\omega)$. One easily checks that $(F_{\eta}(v,\omega)-\lambda,d\Gamma(\omega)+\eta\Gamma(-1)-\lambda)$ is a Feshbach pair for $P$. Write $T=d\Gamma(\omega)+\eta\Gamma(-1)-\lambda$, $H=F_{\eta}(v,\omega)-\lambda$ and $W=H-T=\varphi(v)$. The Feshbach map $F$ is now given by
	\begin{align*}
	F&=PHP-PW\overline{P} (\overline{F}_{\eta}(v,\omega)-\lambda )^{-1}\overline{P}WP\\&=(\eta-\lambda)P+\langle v,(\overline{F}_{\eta}(v,\omega)-\lambda )^{-1}v \rangle P.
	\end{align*}
	This is invertible from $\text{Span}(\Omega)$ to $\text{Span}(\Omega)$ since $H$ is invertible. To calculate the inverse using we use the formula in \cite{Feshbach} and find
	\begin{align*}
	F^{-1}=PH^{-1}P=\langle \Omega,(F_{\eta}(v,\omega)-\lambda)^{-1}\Omega \rangle P.
	\end{align*}
	If one identifies the the linear maps from $\text{Span}(\Omega)$ to $\text{Span}(\Omega)$ with $\CC$ we find the desired equality. Positivity follows since $H^{-1}$ maps $\cC_+$ into $\cC_+$, and we know that the matrix element is not zero since the Feshbach map is invertible. \cqfd
\end{proof}
We may now prove Theorem \ref{Dim1+2}. The basic technique for proving this result comes from the paper \cite{Spohn} where it is used for the translation invariant Nelson model.
\begin{proof}[Proof of Theorem \ref{Dim1+2}]
Let $\eta>0$ and assume the conclusion does not hold. Since $F_{-\eta}(v,\omega)$ has a ground state by Theorem \ref{Thm:Spectral Theory of decomposition} the only option is that $F_{\eta}(v,\omega)$ does not have a ground state. By Theorems \ref{Thm:Spectral Theory of decomposition} and \ref{BasicpropGS} we note that $\cE_{\eta}(v,\omega)=\inf(\sigma_{\textup{ess}}(F_\eta(v,\omega)))= \cE_{-\eta}(v,\omega)+m$ and that $F_{-\eta }(v,\omega)$ has a ground state $\psi$ which has non-zero inner product with $\Omega$. By Lemma \ref{Feshbach} we find 
	\begin{equation*}
	\lambda-\eta> \langle v,(\overline{ F}_{\eta}(v,\omega)-\lambda)^{-1}v \rangle
	\end{equation*}
	for all $\lambda<\cE_{\eta}(v,\omega)=\cE_{-\eta}(v,\omega)+m$, and so $ \langle v,(\overline{F}_{\eta}(v,\omega)-\lambda)^{-1}v \rangle$ is uniformly bounded from above for all $\lambda<\cE_{-\eta}(v,\omega)+m$. We shall now prove that this leads to a contradiction with the assumption in equation (\ref{eq:Assumption}). The following pull through formula, holds for $x\in \cD(d\Gamma(\omega))$ such that $(F_{\eta}(v,\omega)-\lambda  )x\in \cD(N^{1/2})$ (see \cite{Thomas1})
	\begin{align}\label{eq:pullgen}
	a(k)x=&(F_{-\eta}(v,\omega)+\omega(k)-\lambda)^{-1}a(k)(F_{\eta}(v,\omega)-\lambda  )x\\&-v(k)(F_{-\eta}(v,\omega)+\omega(k)-\lambda)^{-1}.\nonumber 
	\end{align}
	We note that
	\begin{equation*}
	(F_{\eta}(v,\omega)-\lambda  )(\overline{F}_{\eta}(v,\omega)-\lambda)^{-1}v=PF_{\eta}(v,\omega)(\overline{F}_{\eta}(v,\omega)-\lambda)^{-1}v+v\in \cD(N^{1/2}).
	\end{equation*}
	Hence we may apply equation $(\ref{eq:pullgen})$ with $x=(\overline{F}_{\eta}(v,\omega)-\lambda)^{-1}v$. Now $a(k)P=0$ so $a(k)(F_{\eta}(v,\omega)-\lambda  )x=v(k)\Omega$. This implies
	\begin{align*}
	\overline{v}(k)a(k)(\overline{F}_{\eta}(v,\omega)-\lambda)^{-1}v&= \lvert v(k)\lvert ^2(F_{-\eta}(v,\omega)+\omega(k)+\lambda )^{-1}\Omega\\&-\lvert v(k)\lvert ^2(F_{-\eta}(v,\omega)+\omega(k)-\lambda)^{-1}(\overline{F}_{\eta}(v,\omega)-\lambda)^{-1}v.
	\end{align*}
	Taking the inner product with $\Omega$, we obtain two terms. Both are non-negative by Lemmas \ref{POS} and \ref{RedSpace} so
	\begin{align*}
	\langle \Omega, v(k)a(k)(\overline{F}_{\eta}(v,\omega)-\lambda)^{-1}v \rangle&\geq \lvert v(k)\lvert ^2\langle \Omega,(F_{-\eta}(v,\omega)+\omega(k)-\lambda)^{-1}\Omega\rangle \\& \geq  \lvert \langle\Omega,\psi\rangle\lvert ^2 \lvert v(k)\lvert^2  (\omega(k)+\cE_{-\eta}(v,\omega)-\lambda).
	\end{align*}
	Hence we find
	\begin{align*}
	\langle v,(\overline{F}_{\eta}(v,\omega)-\lambda)^{-1}v \rangle&=\int_{\cM}  \langle \Omega, v(k)a(k)(\overline{F}_{\eta}(v,\omega)-\lambda)^{-1}v \rangle dk\\&\geq \lvert \langle\Omega,\psi\rangle\lvert ^2\int_{\cM} \lvert v(k)\lvert^2  (\omega(k)+\cE_{-\eta}(v,\omega)-\lambda)dk, 
	\end{align*}
	which goes to infinity for $\lambda$ tending to $\cE_{-\eta}(v,\omega)+m$ by the monotone convergence theorem, equation (\ref{eq:Assumption}) and the fact $ \lvert \langle\Omega,\psi\rangle\lvert ^2\neq 0$. This contradicts the boundedness of $\langle v,(\overline{F}_{\eta}(v,\omega)-\lambda)^{-1}v \rangle$.

	In the special case mentioned, let $\omega(x_0)=m$ be the global minimum of $\omega$. Using Taylor approximations there is $r>0$ such that for $x\in B_r(x_0)$ we have $0\leq \omega(k)-m \leq C\lvert k-x_0\lvert^2$. Switching to polar coordinates yields the result. \cqfd
\end{proof}

\appendix

\section{Various transformation statements.}
In this appendix various useful transformation theorems is stated. Sources are \cite{Thomas1}, \cite{DerezinskiGerard} and \cite{Parthasarathy}. 
\begin{lemma}\label{Lem:SeconduantisedBetweenSPaces}
	Let $U$ be a unitary operator from $\cH$ into some Hilbert space $\cK$. Then there is a unique unitary map $\Gamma(U):\cF_b(\cH)\rightarrow \cF_b(\cK)$ such that $\Gamma(U)\epsilon(g)=\epsilon(Ug)$. If $\omega$ is selfadjoint on $\cH$, $V$ is unitary and $f\in \cH$ then
	\begin{align*}
	\Gamma(U)d\Gamma(\omega)\Gamma(U)^*&=d\Gamma(U\omega U^*).\\
	\Gamma(U)W(f,V)\Gamma(U)^*&=W(Uf,UVU^*).\\
	\Gamma(U)\varphi(f)\Gamma(U)^*&=\varphi(Uf).
	\end{align*}
	Furthermore $\Gamma(U)(f_1\otimes_s\cdots\otimes_s f_n)=Uf_1\otimes_s\cdots\otimes_s Uf_n$ and $U\Omega=\Omega$.
\end{lemma}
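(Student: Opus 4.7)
The plan is to build $\Gamma(U)$ sector by sector and then verify each transformation identity either on exponential vectors (for $W$) or on the natural core of symmetric tensor products (for $d\Gamma(\omega)$ and $\varphi$).

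First I would define
\begin{equation*}
\Gamma(U) = 1 \oplus \bigoplus_{n=1}^\infty U^{\otimes n}\bigl|_{S_n(\cH^{\otimes n})}.
\end{equation*}
Since $U$ is unitary, $U^{\otimes n}$ is unitary on $\cH^{\otimes n}$, and it preserves the subspace of symmetric tensors, so it restricts to a unitary from $S_n(\cH^{\otimes n})$ onto $S_n(\cK^{\otimes n})$. Taking the direct sum gives a unitary $\Gamma(U) \colon \cF_b(\cH)\to \cF_b(\cK)$, and computing term by term one finds $\Gamma(U)(g^{\otimes n}/\sqrt{n!}) = (Ug)^{\otimes n}/\sqrt{n!}$, hence $\Gamma(U)\epsilon(g) = \epsilon(Ug)$. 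Uniqueness follows because the set $\{\epsilon(g) \mid g\in \cH\}$ is total and linearly independent, so a bounded operator is determined by its values on it. The final two assertions about action on symmetric tensors and on $\Omega$ are immediate from this sector-wise definition.

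For the Weyl identity I would compute directly on exponential vectors, using unitarity of $U$ to preserve norms and inner products:
\begin{align*}
\Gamma(U)W(f,V)\Gamma(U)^*\epsilon(g) &= \Gamma(U)W(f,V)\epsilon(U^*g) \\
&= e^{-\lVert f\lVert^2/2 - \langle f, VU^*g\rangle}\,\Gamma(U)\epsilon(f + VU^*g) \\
&= e^{-\lVert Uf\lVert^2/2 - \langle Uf, UVU^*g\rangle}\,\epsilon(Uf + UVU^*g) \\
&= W(Uf, UVU^*)\epsilon(g).
\end{align*}
Since both sides are bounded and agree on a total set, they agree on $\cF_b(\cH)$.

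For the second-quantised operator $d\Gamma(\omega)$ I would use the core of finite linear combinations of $f_1 \otimes_s \cdots \otimes_s f_n$ with $f_j \in \cD(\omega)$. On such vectors formula $(\ref{Sumdecomp})$ gives
\begin{equation*}
\Gamma(U)d\Gamma(\omega)\Gamma(U)^*(Uf_1\otimes_s\cdots\otimes_s Uf_n) = \sum_{k=1}^n Uf_1\otimes_s\cdots\otimes_s (U\omega U^*)Uf_k\otimes_s\cdots\otimes_s Uf_n,
\end{equation*}
which is exactly $d\Gamma(U\omega U^*)$ applied to the same vector, and the vectors $U f_j$ range over a core of $U\omega U^*$; selfadjointness of both sides then yields equality of the operators. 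For the field operator one works on the same core: from the explicit formulas for $a(f)$ and $a^\dagger(f)$ one reads off $\Gamma(U)a^\dagger(f)\Gamma(U)^* = a^\dagger(Uf)$, and by taking adjoints $\Gamma(U)a(f)\Gamma(U)^* = a(Uf)$, whence $\Gamma(U)\varphi(f)\Gamma(U)^* = \varphi(Uf)$ on a core of $\varphi(Uf)$; closedness of both sides promotes this to equality of selfadjoint operators.

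There is no substantive obstacle here; the only point requiring attention is to make sure each intermediate identity is verified on a set that is a core for the operator on the right-hand side, so that one can extend from the core to the full domain by invoking closedness/selfadjointness rather than by any unbounded-operator limiting argument.
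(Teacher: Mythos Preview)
Your argument is correct and is the standard route to these identities. Note, however, that the paper does not actually prove this lemma: it is placed in the appendix of ``various transformation statements'' and simply cited from \cite{Thomas1}, \cite{DerezinskiGerard} and \cite{Parthasarathy}, so there is no proof in the paper to compare against. Your sector-wise construction together with verification on exponential vectors (for $W$) and on the symmetric-tensor core (for $d\Gamma$ and $\varphi$) is exactly what one finds in those references.
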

One may transform the field operators and second quantised observables by the Weyl transformations. One then obtains the following important statements that we shall need. The proof is an easy calculation using exponential vectors
\begin{lemma}\label{Weyltrans}
	Let $f,h\in \cH$ and $U\in \cU(\cH)$. Then
	\begin{align*}
	W(h,U)\varphi(g)W(h,U)^*&=\varphi(Ug)-2\textup{Re}(\langle Ug,h \rangle)\\
	W(h,U)a(g)W(h,U)^*&=a(Ug)-\langle Ug,h \rangle\\
	W(h,U)a^\dagger(g)W(h,U)^*&=a^\dagger(Ug)-\langle h,Ug \rangle.
	\end{align*}
	Furthermore, if $\omega$ is a selfadjoint, non negative and injective operator on $\cH$ and $h\in \cD( \omega U^* )$ then
	\begin{equation*}
	W(h,U)d\Gamma(\omega)W(h,U)^*=d\Gamma(U\omega U^*)-\varphi(U\omega U^*h)+\langle h,U\omega U^*h\rangle  
	\end{equation*}
	on the domain $\cD(d\Gamma(U\omega U^*))$.
\end{lemma}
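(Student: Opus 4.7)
The plan is to verify each identity by evaluating its two sides on the total, linearly independent set of exponential vectors $\{\epsilon(f):f\in\cH\}$, using the defining formula $W(h,U)\epsilon(g)=e^{-\|h\|^{2}/2-\langle h,Ug\rangle}\epsilon(h+Ug)$. From the composition rule \eqref{eq:Weylcomp} together with unitarity of $W(h,U)$ one obtains the adjoint formula $W(h,U)^{*}=W(-U^{*}h,U^{*})$, which I will use throughout.

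For the three bounded identities (the formulas for $a(g)$, $a^{\dagger}(g)$ and $\varphi(g)$), I would apply $W(h,U)^{*}$ to $\epsilon(f)$ to obtain an explicit exponential vector, then apply $a(g)$ via $a(g)\epsilon(g')=\langle g,g'\rangle\epsilon(g')$ or $a^{\dagger}(g)$ via $a^{\dagger}(g)\epsilon(g')=\frac{d}{ds}|_{s=0}\epsilon(g'+sg)$, and finally reapply $W(h,U)$. The two Weyl phase factors cancel cleanly and the remaining scalar matches $\langle Ug,f\rangle-\langle Ug,h\rangle$ in the annihilation case (and analogously for creation), which is precisely the action of $a(Ug)-\langle Ug,h\rangle$ on $\epsilon(f)$. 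Since each operator in sight is $N^{1/2}$-bounded by Lemma~\ref{Lem:FundamentalIneq}, and the exponential vectors are total, agreement on $\{\epsilon(f):f\in\cH\}$ suffices to conclude operator equality; the $\varphi(g)$ formula then follows by summing the $a(g)$ and $a^{\dagger}(g)$ identities.

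For the $d\Gamma(\omega)$ identity, unboundedness demands more care. Setting $\omega':=U\omega U^{*}$, I would first factor $W(h,U)=W(h,I)W(0,U)$ via \eqref{eq:Weylcomp} and note that $W(0,U)=\Gamma(U)$ (both send $\epsilon(g)$ to $\epsilon(Ug)$); combined with Lemma~\ref{Lem:SeconduantisedBetweenSPaces} this reduces the identity to the case $U=I$ with $\omega$ replaced by $\omega'$. The hypothesis $h\in\cD(\omega U^{*})$ is equivalent to $h\in\cD(\omega')$, which forces $\omega'h\in\cD((\omega')^{-1/2})$; by Lemma~\ref{Lem:FundamentalIneq} and Kato--Rellich the stated right-hand side is then self-adjoint on $\cD(d\Gamma(\omega'))$. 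I would check the identity on the core $\{\epsilon(f):f\in\cD(\omega')\}$ of analytic vectors (a core by \eqref{grupper1}) by combining the elementary computation $d\Gamma(\omega')\epsilon(g)=a^{\dagger}(\omega'g)\epsilon(g)$ with the already-proved $a^{\dagger}$ formula, after which the algebra reduces to a direct cancellation of phases. The main obstacle is the domain bookkeeping: one must verify that $W(h,I)$ preserves $\cD(d\Gamma(\omega'))$, so that the left-hand side is genuinely self-adjoint on that domain. This follows because $W(h,I)\epsilon(f)$ is a scalar multiple of $\epsilon(h+f)$ and $h+f\in\cD(\omega')$ whenever $f\in\cD(\omega')$, so $W(h,I)$ maps the above core into itself, and the identity on all of $\cD(d\Gamma(\omega'))$ follows from essential self-adjointness on this common core.
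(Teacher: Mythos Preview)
Your proposal is correct and follows exactly the approach the paper indicates: the paper's entire proof is the sentence ``The proof is an easy calculation using exponential vectors,'' and you carry out precisely that calculation, supplying in addition the domain/core bookkeeping that the paper omits. One small remark: the vectors $\epsilon(f)$ with $f\in\cD(\omega')$ need not be analytic for $d\Gamma(\omega')$, but your actual justification via \eqref{grupper1} (invariance of their span under $e^{itd\Gamma(\omega')}=\Gamma(e^{it\omega'})$) is the correct core argument, so the conclusion stands.
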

In what follows we consider two fixed Hilbert spaces $\cH_1$ and $\cH_2$. We will need the following two lemmas.
\begin{lemma}\label{Iso1}
	There is a unique isomorphism $U:\mathcal{F}_b(\mathcal{H}_1\oplus \mathcal{H}_2)\rightarrow  \mathcal{F}_b(\mathcal{H}_1)\otimes \mathcal{F}_b(\mathcal{H}_2)$ such that $U(\epsilon(f\oplus g))=\epsilon(f)\otimes \epsilon(g)$. The map has the following transformation properties. If $\omega_i$ is selfadjoint on $\mathcal{H}_i$, $V_i$ is unitary on $\mathcal{H}_i$ and $f_i\in \cH_i$ then 
	\begin{align*}
	UW(f_1\oplus f_2,V_1\oplus V_2)U^*&=W(f_1,V_1)\otimes W(f_2,V_2)\\
	Ud\Gamma (\omega_1\oplus \omega_2)U^*&=d\Gamma (\omega_1)\otimes 1 +1\otimes d\Gamma ( \omega_2)\\
	U\varphi(f_1,f_2)U^*&=\varphi(f_1)\otimes 1+1\otimes \varphi(f_2)\\
	Ua(f_1,f_2)U^*&=a(f_1)\otimes 1+1\otimes a(f_2)\\
	Ua^{\dagger}(f_1,f_2)U^*&=a^{\dagger}(f_1)\otimes 1+1\otimes a^{\dagger}(f_2).
	\end{align*}
\end{lemma}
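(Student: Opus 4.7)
The plan is to build $U$ by prescribing it on the total set of exponential vectors and then to extend by a standard density argument, after which all the identities reduce to straightforward checks on the exponential vectors themselves.

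For existence and uniqueness of $U$, I would first recall (as noted earlier in the paper) that $\{\epsilon(h)\mid h\in\cH_1\oplus\cH_2\}$ is a linearly independent total subset of $\cF_b(\cH_1\oplus\cH_2)$ and, similarly, $\{\epsilon(f)\otimes\epsilon(g)\mid f\in\cH_1,g\in\cH_2\}$ is linearly independent and total in $\cF_b(\cH_1)\otimes\cF_b(\cH_2)$. One then defines $U$ on the linear span of exponential vectors by the prescribed rule. The key inner-product identity
\begin{equation*}
\langle\epsilon(f_1\oplus g_1),\epsilon(f_2\oplus g_2)\rangle=e^{\langle f_1,f_2\rangle+\langle g_1,g_2\rangle}=\langle\epsilon(f_1)\otimes\epsilon(g_1),\epsilon(f_2)\otimes\epsilon(g_2)\rangle
\end{equation*}
shows that $U$ is isometric on this dense subspace, hence extends uniquely to a unitary; uniqueness under the stated condition is automatic from the totality. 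This part is routine but is the main technical point.

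For the Weyl transformation identity, I would simply evaluate both sides on $\epsilon(g_1\oplus g_2)$ using the definition of $W(h,V)$ and match the scalar prefactors, exploiting $\|f_1\oplus f_2\|^2=\|f_1\|^2+\|f_2\|^2$ and $\langle f_1\oplus f_2,(V_1\oplus V_2)(g_1\oplus g_2)\rangle=\langle f_1,V_1g_1\rangle+\langle f_2,V_2g_2\rangle$; the exponential splits and the vector part becomes $\epsilon(f_1+V_1g_1)\otimes\epsilon(f_2+V_2g_2)$, which is precisely $(W(f_1,V_1)\otimes W(f_2,V_2))\epsilon(g_1)\otimes\epsilon(g_2)$. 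The identity for $d\Gamma$ then follows by differentiation: apply the relation to the one-parameter unitary group $W(0,e^{it\omega_1}\oplus e^{it\omega_2})=e^{itd\Gamma(\omega_1\oplus\omega_2)}$ via \eqref{grupper1}, obtain $e^{itd\Gamma(\omega_1)}\otimes e^{itd\Gamma(\omega_2)}$ on the right, and invoke Stone's theorem on the common core of tensor products of exponential vectors (which are analytic for $d\Gamma(\omega_i)$ when $\omega_i$ is bounded, and the general case follows by the standard spectral truncation). The field identity for $\varphi(f_1\oplus f_2)$ is similarly obtained from \eqref{grupper2} applied to $W(t(f_1\oplus f_2),1)$.

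Finally, the formulas for $a$ and $a^{\dagger}$ follow by differentiating a two-parameter Weyl relation, or more directly by writing $a(f_1,f_2)=\tfrac{1}{2}(\varphi(f_1,f_2)+i\varphi(i(f_1,f_2)))$ and combining the two field identities already proved (with a careful accounting of real and imaginary parts); the creation operator identity is then obtained by taking adjoints. The only real obstacle in the entire argument is the initial step of showing the map defined on the linearly independent total set really does extend to a unitary; once that is in place, every subsequent identity is a one-line computation on exponential vectors.
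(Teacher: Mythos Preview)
Your outline is correct and is in fact the standard argument; the paper itself does not prove this lemma at all but merely states it in the appendix with a pointer to the references \cite{Thomas1}, \cite{DerezinskiGerard} and \cite{Parthasarathy}. What you wrote is essentially the proof one finds in Parthasarathy, so there is nothing to compare against and no gap to flag.
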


\begin{lemma}\label{Iso2}
	There is a unique isomorphism
	\begin{equation*}
	U:\mathcal{F}(\mathcal{H}_1)\otimes \mathcal{F}(\mathcal{H}_2)\rightarrow \mathcal{F}(\mathcal{H}_1)\oplus \bigoplus_{n=1}^{\infty} \mathcal{F}(\mathcal{H}_1)\otimes S_{n}(\mathcal{H}_2^{\otimes n})
	\end{equation*}
	such that
	\begin{equation*}
	U(w \otimes \{\psi_2^{(n)} \}_{n=0}^\infty)=\psi^{(0)}w\oplus \bigoplus_{n=1}^{\infty} w \otimes \psi_2^{(n)}.
	\end{equation*}
	Let $A$ be a selfadjoint operator on $\mathcal{F}(\mathcal{H}_1)$ and $B$ be selfadjoint on $\mathcal{F}(\mathcal{H}_2)$ such that $B$ is reduced by all of the subspaces $S_{n}(\mathcal{H}_2^{\otimes n})$. Write $B^{(n)}=B\mid_{S_{n}(\mathcal{H}_2^{\otimes n})}$. Then
	\begin{align*}
	U(A\otimes 1+1\otimes B)U^*&=A+B^{(0)}\oplus \bigoplus_{n=1}^\infty( A\otimes 1+1\otimes B^{(n)} )\\ U A\otimes B U^*&=A \otimes B= B^{(0)} A \oplus \bigoplus_{n=1}^\infty A\otimes  B^{(n)}.
	\end{align*}
\end{lemma}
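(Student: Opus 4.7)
The approach is to build $U$ out of the standard distributivity of the Hilbert tensor product over a Hilbert direct sum, together with the canonical identification $\mathcal{F}_b(\mathcal{H}_1)\otimes\mathbb{C}\cong\mathcal{F}_b(\mathcal{H}_1)$. First I would write
\begin{equation*}
\mathcal{F}_b(\mathcal{H}_2)=\bigoplus_{n=0}^{\infty}S_n(\mathcal{H}_2^{\otimes n}),\qquad S_0(\mathcal{H}_2^{\otimes 0})=\mathbb{C},
\end{equation*}
and invoke the general fact that for any Hilbert space $\mathcal{K}$ and any orthogonal decomposition $\mathcal{L}=\bigoplus_n \mathcal{L}_n$ the natural map $\mathcal{K}\otimes\mathcal{L}\to\bigoplus_n\mathcal{K}\otimes\mathcal{L}_n$, $w\otimes(\ell_n)_n\mapsto (w\otimes\ell_n)_n$, is a unitary isomorphism. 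Applied to $\mathcal{K}=\mathcal{F}_b(\mathcal{H}_1)$ and the above decomposition of $\mathcal{F}_b(\mathcal{H}_2)$, this yields
\begin{equation*}
\mathcal{F}_b(\mathcal{H}_1)\otimes\mathcal{F}_b(\mathcal{H}_2)\cong\bigl(\mathcal{F}_b(\mathcal{H}_1)\otimes\mathbb{C}\bigr)\oplus\bigoplus_{n=1}^{\infty}\mathcal{F}_b(\mathcal{H}_1)\otimes S_n(\mathcal{H}_2^{\otimes n}).
\end{equation*}
Composing with the trivial unitary $\mathcal{F}_b(\mathcal{H}_1)\otimes\mathbb{C}\to\mathcal{F}_b(\mathcal{H}_1)$, $w\otimes c\mapsto cw$, produces a unitary $U$ satisfying the required formula on elementary tensors $w\otimes(\psi_2^{(n)})_{n=0}^{\infty}$, and uniqueness follows because such elementary tensors span a dense subspace.

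For the operator identities, I would first observe that the hypothesis that $B$ is reduced by each $S_n(\mathcal{H}_2^{\otimes n})$ means $B=\bigoplus_{n\geq 0}B^{(n)}$ as self-adjoint operators, with $B^{(0)}$ a real scalar acting on the one-dimensional vacuum sector. The operator $1\otimes B$ therefore splits along the decomposition above: on $\mathcal{F}_b(\mathcal{H}_1)\otimes S_n(\mathcal{H}_2^{\otimes n})$ it acts as $1\otimes B^{(n)}$, and under the identification $\mathcal{F}_b(\mathcal{H}_1)\otimes\mathbb{C}\cong\mathcal{F}_b(\mathcal{H}_1)$ the $n=0$ piece becomes multiplication by the scalar $B^{(0)}$. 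Similarly $A\otimes 1$ splits as $\bigoplus_n A\otimes 1$ on the same decomposition, with $n=0$ piece $A$. Summing yields
\begin{equation*}
U(A\otimes 1+1\otimes B)U^*=(A+B^{(0)})\oplus\bigoplus_{n=1}^{\infty}(A\otimes 1+1\otimes B^{(n)}),
\end{equation*}
and the product formula $U(A\otimes B)U^*=B^{(0)}A\oplus\bigoplus_{n\geq 1}A\otimes B^{(n)}$ is argued in the same way.

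The main care point, which I expect to be the only real technicality, is the unbounded-operator bookkeeping: one needs to verify that $U$ maps the domain of $A\otimes 1+1\otimes B$ onto the domain of the asserted direct-sum operator, and that the displayed equalities hold on those domains rather than merely on algebraic tensors. This is handled by checking the equality first on the algebraic tensor product of cores for $A$ and $B$ (which is a core for $A\otimes 1+1\otimes B$ by standard results on tensor products of self-adjoint operators) and then extending by self-adjointness of both sides; the reducing property of $B$ guarantees that the direct-sum operator on the right is self-adjoint with the natural maximal domain $\bigoplus_n\mathcal{D}(A\otimes 1+1\otimes B^{(n)})$, so the two self-adjoint operators agreeing on a common core must coincide. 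The product formula $U(A\otimes B)U^*$ is verified on the appropriate subdomain where both factors are defined, again by reducing to algebraic tensors.
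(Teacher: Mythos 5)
Your construction is correct and is the standard argument: the paper itself gives no proof of this appendix lemma (it is quoted from \cite{Thomas1} and \cite{Parthasarathy}), and those sources proceed exactly as you do, by distributing the tensor product over the direct sum decomposition of $\cF_b(\cH_2)$, identifying $\cF_b(\cH_1)\otimes\CC\cong\cF_b(\cH_1)$, and fixing the operator identities on the algebraic tensor product of cores before invoking self-adjointness of both sides (a self-adjoint operator admits no proper symmetric extension). Your handling of the domain issue is the right one; the only point worth stating explicitly is that agreement on a set that is a core for the conjugated operator $U(A\otimes 1+1\otimes B)U^*$ and lies in the domain of the direct-sum operator already forces equality, so you do not need the set to be a core for both sides.
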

\begin{acknowledgements} Thomas Norman Dam was supported by the Independent Research Fund Denmark
with through the project "Mathematics of Dressed Particles".
\end{acknowledgements}

\end{document}